\keywords{Circular proofs, session types, subsingleton logic, fixed points, progress, termination, linear logic, substructural logic}
\theoremstyle{plain}
\newcommand{\envalias}[2]{\newenvironment{#1}{\begin{#2}}{\end{#2}}}
\theoremstyle{definition}
\newenvironment{example}
  {\pushQED{\qed}\examplex}
   {\popQED\endexamplex}
\newcommand{\msc}[1]{\mbox{\sc #1}}
\newcommand{\Ra}{\Rightarrow}
\begin{document}
\title[Local Validity of Session-Typed Processes]{Circular Proofs as Session-Typed Processes:\texorpdfstring{\\}{} A Local Validity Condition}

\author[F.~Derakhshan]{Farzaneh Derakhshan\lmcsorcid{0000-0002-2156-2606}}[a]	
\address{Philosophy Department, Carnegie Mellon University, Pittsburgh, PA, 15213 USA}	
\email{fderakhs@andrew.cmu.edu}  

\author[F.~Pfenning]{Frank Pfenning\lmcsorcid{0000-0002-8279-5817}}[b]	
\address{Computer Science Department, Carnegie Mellon University, Pittsburgh, PA, 15213 USA}	
\email{fp@cs.cmu.edu}  

\begin{abstract}
  \noindent  Proof theory provides a foundation for studying and reasoning about programming languages, most directly based on the well-known Curry-Howard isomorphism between intuitionistic logic and the typed lambda-calculus. More recently, a correspondence between intuitionistic linear logic and the session-typed pi-calculus has been discovered. In this paper, we establish an extension of the latter correspondence for a fragment of substructural logic with least and greatest fixed points. We describe the computational interpretation of the resulting infinitary proof system as session-typed processes, and provide an effectively decidable local criterion to recognize mutually recursive processes corresponding to valid circular proofs as introduced by Fortier and Santocanale. We show that our algorithm imposes a stricter requirement than Fortier and Santocanale's guard condition, but is local and compositional and therefore more suitable as the basis for a programming language.
\end{abstract}
\maketitle

\section*{Introduction}\label{intro}
Proof theory provides a solid ground for studying and reasoning about programming languages. This logical foundation is mostly based on the well-known Curry-Howard isomorphism \cite{Howard69} that establishes a correspondence between natural deduction and the typed $\lambda$-calculus by mapping propositions to types, proofs to well-typed programs, and proof reduction to computation. More recently, Caires et al. \cite{Caires10concur,Caires16mscs} introduced a correspondence between intuitionistic linear logic \cite{Girard87tapsoft} and the session-typed $\pi$-calculus \cite{Wadler12icfp} that relates linear propositions to session types, proofs in the sequent calculus to concurrent processes, and cut reduction to computation. In this paper, we expand the latter for a fragment of intuitionistic linear logic called \emph {subsingleton logic} in which the antecedent of each sequent consists of at most one formula. We consider the sequent calculus of subsingleton logic with least and greatest fixed points and their corresponding rules \cite{DeYoung16aplas,DeYoung20phd}.  We closely follow Fortier and Santocanale's \cite{Fortier13csl} development in singleton logic, where the antecedent consists of exactly one formula.

Fortier and Santoconale \cite{Fortier13csl,Santocanale02ita} extend the sequent calculus for singleton logic with rules for least and greatest fixed points. A naive extension, however, loses the cut elimination property so they call derivations \emph{pre-proofs}.  \emph{Circular pre-proofs} are distinguished as a subset of derivations which are {\it regular} in the sense that they can be represented as finite trees with loops. They then impose a validity condition (which we call the \emph{FS guard condition}) on pre-proofs to single out a class of pre-proofs that satisfy cut elimination.  Moreover, they provide a cut elimination algorithm and show that it locally terminates on derivations that satisfy the guard condition. In addition, Santocanale and Fortier \cite{Fortier13csl,Santocanale02ita,Santocanale02fossacs, santocanale02JPAA} introduced categorical and game semantics for interpreting cut elimination in singleton logic.

In a related line of research, Baelde et al. \cite{Baelde16csl,Baelde12tocl} add least and greatest fixed points to the sequent calculus for the multiplicative additive fragment of linear logic ($\textit{MALL}$) that results in the loss of the cut elimination property. They also introduced a validity condition to distinguish circular proofs from infinite pre-proofs. Using B\"{u}chi automata, Doumane \cite{Doumane17phd}  showed that the validity condition for identifying circular proofs in $\textit{MALL}$ with fixed points is \textsc{PSPACE} decidable. 
Nollet et al. \cite{Nollet18csl} introduced a polynomial time algorithm for locally identifying a stricter version of Baelde's condition in $\textit{MALL}$ with fixed points.

In this paper,  we study (mutually) recursive session-typed processes and their correspondence with circular pre-proofs in subsingleton logic with fixed points. We introduce an algorithm to check a stricter version of the FS guard condition. Our algorithm is local in the sense that we check validity of each process definition separately, and it is stricter in the sense that it accepts a proper subset of the proofs recognized by the FS guard condition.
We further introduce a synchronous computational semantics of cut reduction in subsingleton logic with fixed points in the context of session types, based on a key step in Fortier and Santocanale's cut elimination algorithm, which is compatible with prior operational interpretations of session-typed programming languages~\cite{Toninho13esop}. We show preservation and a strong version of the progress property that ensures that each valid process communicates along its left or right interface in a finite number of steps. A key aspect of our type system is that validity is a compositional property (as we generally expect from type systems) so that the composition of valid programs defined over the same signature are also valid and therefore also satisfy strong progress. In other words, we identify a set of processes such that their corresponding derivations are not only closed under cut elimination, but also closed under cut introduction (i.e. strong progress is preserved when processes are joined by cut).

In the session type system, a singleton judgment $A \vdash B$ is annotated as $x:A \vdash \mathtt{P} :: (y:B)$ which is understood as: process $\mathtt{P}$ uses a service of type $A$ offered along channel $x$ by a process on its left and provides a service of type $B$ along channel $y$ to a process on its right \cite{DeYoung16aplas}. The left and right interfaces of a process in the session type system inherit the symmetry of the left and right rules in the sequent calculus. Each process interacts with other processes along its pair of left and right interfaces, which correspond to the left and right sides of a sequent. For example, two processes $\mathtt{P}$ and $\mathtt{Q}$ with the typing $ x:A \vdash \mathtt{P} :: (y:B)$ and $y:B \vdash \mathtt{Q} :: (z:C)$ can be composed so they interact with each other using channel $y$. Process $\mathtt{P}$ provides a service of type $B$ and offers it along channel $y$ and process $\mathtt{Q}$ uses this service to provide its own service of type $C$. This interaction along channel $y$ can be of two forms: (i) process $\mathtt{P}$ sends a message to the right, and process $\mathtt{Q}$  receives it from the left, or (ii) process $\mathtt{Q}$ sends a message to the left and process $\mathtt{P}$ receives it from the right. In the first case, the session type $B$ is a \emph{positive} type, and in the second case it is a \emph{negative} type. Least fixed points have a positive polarity while greatest fixed points are negative \cite{Lindley16icfp}. As we will see in Sections~\ref{alg1} and~\ref{priority}, due to the interactive nature of computation some types that would be considered ``empty'' (that is, have no closed values) may still be of interest here.

DeYoung and Pfenning \cite{DeYoung16aplas,Pfenning16lectures} provide a representation of Turing machines in the session-typed rule system of subsingleton logic with general equirecursive types. This shows that cut reduction on circular pre-proofs in subsingleton logic with equirecursive types has the computational power of Turing machines. Using this encoding on isorecursive types, we show that recognizing all programs that satisfy a strong progress property is undecidable, since this property can be encoded as termination of a Turing machine on a given input. However, with our algorithm, we can decide validity of a subset of Turing machines represented as session-typed processes in subsingleton logic with isorecursive fixed points.

In summary, the principal contribution of our paper is to extend the Curry-Howard interpretation of proofs in subsingleton logic as communicating processes to include least and greatest fixed points. A circular proof is thus represented as a collection of mutually recursive process definitions. We develop a compositional criterion for validity of such programs, which is local in the sense that each process definition can be checked independently.  Local validity in turn implies a strong progress property on programs and cut elimination on the circular proofs they correspond to.

The structure of the remainder of the paper is as follows.  In Section~\ref{lfixed} we introduce subsingleton logic with fixed points, and then examine it in the context of session-typed communication (Section~\ref{pre}).  We provide a process notation with a synchronous operational semantics in Section~\ref{operat} and a range of examples in Section~\ref{session}.  We then develop a local validity condition through a sequence of refinements in Sections~\ref{alg1}--\ref{modcut}.  We capture this condition on infinitary proofs in Section~\ref{rules} and reduce it to a finitary algorithm in Section~\ref{alg}.  We prove that local validity implies Fortier and Santocanale's guard condition (Section~\ref{Guard}) and therefore cut elimination.  In Section~\ref{semantics} we explore the computational consequences of this, including the strong progress property, which states that every valid configuration of processes will either be empty or attempt to communicate along external channels after a finite number of steps.  We conclude by illustrating some limitations of our algorithm (Section~\ref{negative}) and pointing to some additional related and
future work (Section~\ref{conclusion}).

\setlength{\inferLineSkip}{5pt} 
\setlength{\inferLabelSkip}{1pt}

\section{Subsingleton Logic with Fixed Points} \label{lfixed}

Subsingleton logic is a fragment of intuitionistic linear logic \cite{Girard87tapsoft,Chang03tr} in which the antecedent
and succedent of each judgment consist of at most one proposition.  This reduces consideration to the additive connectives and multiplicative units, because the left or right rules of other connectives would violate this restriction.  The expressive power of pure subsingleton logic is rather limited, among other things due to the absence of the exponential $!A$.  However, we can recover significant expressive power by adding least and greatest fixed points, which can be done without violating the subsingleton restriction.  We think of subsingleton logic as a laboratory in which to study the properties and behaviors of least and greatest fixed points in their simplest nontrivial form, following the seminal work of Fortier and Santocanale~\cite{Fortier13csl}.

The syntax of propositions follows the grammar
\[
A,B ::= A \oplus B \mid 0 \mid A \& B \mid \top \mid 1 \mid \bot \mid t
\]%
where $t$ ranges over a set of propositional variables denoting least or greatest fixed points.  Rather than including these directly as $\mu t.\, A$ and $\nu t.\, A$, we define them in a \emph{signature} $\Sigma$ which records some important additional information, namely their relative \emph{priority} ($i \in \mathbb{N}$).
\[\Sigma ::= \cdot \mid \Sigma, t=^{i}_{\mu} A \mid \Sigma, t=^{i}_\nu A,\]%
with the conditions that
\begin{itemize}
    \item if $  t=^{i}_{a} A \in \Sigma$ and $t'=^{i}_b B \in \Sigma$, then $a=b$, and
    \item if $  t=^{i}_{a} A \in \Sigma$ and $t=^{j}_b B \in \Sigma$, then $i=j$ and $A=B$.
\end{itemize}

For a fixed point $t$ defined as $ t=^{i}_{a} A$ in $\Sigma $ the subscript $a$ is the {\it polarity} of $t$: if $a=\mu$, then $t$ is a fixed point with {\it positive} polarity and if $a=\nu$, then it is of {\it negative} polarity.  Finitely representable least fixed points (e.g., natural numbers and lists) can be represented in this system as defined propositional variables with positive polarity, while the potentially infinite greatest fixed points (e.g., streams and infinite depth trees) are represented as those with negative polarity.

The superscript $i$ is the {\it priority} of $t$. Fortier and Santocanale interpreted the priority of fixed points in their system as the order in which the least and greatest fixed point equations are solved in the semantics \cite{Fortier13csl}.  We use them syntactically as central information to determine local validity of circular proofs. We write $p(t)=i$ for the priority of $t$, and $\epsilon(i)= a$ for the polarity of propositional variable $t$ with priority $i$. The condition on $\Sigma$ ensures that $\epsilon$ is a well-defined function.

The basic judgment of the subsingleton sequent calculus has the form $\omega \vdash_\Sigma \gamma$, where $\omega$ and $\gamma$ are either empty or a single proposition $A$ and $\Sigma$ is a signature.  Since the signature never changes in the rules, we omit it from the turnstile symbol. The rules of subsingleton  logic with fixed points are summarized in Figure~\ref{fig:rule}, constituting a slight generalization of Fortier and Santocanale's.  When the fixed points in the last row are included, this set of rules must be interpreted as \emph{infinitary}, meaning that a judgment may have an infinite derivation in this system.

\begin{figure}
  \[
\begin{tabular}{c c c c} 
      \multicolumn{2}{c}{\infer[\msc{Id}_A]{A \vdash A}{}} & 
   \multicolumn{2}{c}{ \infer[\msc{Cut}_A]  {\omega \vdash \gamma}{\omega \vdash A & A \vdash \gamma}}  \\[6pt]
    \infer[\oplus R_1]{\omega \vdash A \oplus B}{\omega \vdash A} & \infer[\oplus R_2]{\omega \vdash A \oplus B}{\omega \vdash B} &  \multicolumn{2}{c}{\infer[\oplus L]{A \oplus B \vdash \gamma}{ A \vdash \gamma & B \vdash \gamma}} \\[6pt]
    \multicolumn{2}{c}{\infer[\&R]{\omega \vdash A \& B}{\omega \vdash A & \omega \vdash B}} & \infer[\&L_1]{A \& B \vdash \gamma}{A  \vdash \gamma} & \infer[\&L_2]{A \& B \vdash \gamma}{B  \vdash \gamma}  \\[6pt]
    \infer[1R]{\cdot \vdash 1}{} & \infer[1L]{1 \vdash \gamma}{\cdot \vdash \gamma} &
    \infer[{\bot}R]{\omega \vdash \bot}{\omega \vdash \cdot } & \infer[{\bot}L]{\bot \vdash \cdot}{} \\[8pt]
    \infer[\mu R]{\omega \vdash t}{\omega \vdash A \;\;\, t{=_{\mu}^i} A \in\Sigma}& \infer[\mu L]{t \vdash \gamma}{A \vdash \gamma \;\;\, t{=_{\mu}^i} A \in \Sigma}&
     \infer[\nu R]{\omega \vdash t}{\omega \vdash A \;\;\, t{=_{\nu}^i} A \in \Sigma} & \infer[\nu L]{t \vdash \gamma}{A \vdash \gamma \;\;\, t{=_{\nu}^i} A \in \Sigma}
\end{tabular}
\]
  \caption{Infinitary sequent calculus for subsingleton logic with fixed points.}
  \label{fig:rule}
\end{figure}

Even a cut-free derivation may be of infinite length since each defined propositional variable may be unfolded infinitely many times. Also, cut elimination no longer holds for the derivations after adding fixed point rules. What the rules define then are the so-called \emph{pre-proofs}. In particular, we are interested in \emph{circular pre-proofs}, which are the pre-proofs that can be illustrated as finite trees with loops \cite{Doumane17phd}.

Fortier and Santocanale \cite{Fortier13csl} introduced a {\it guard condition} for identifying a subset of circular \emph{proofs} among all infinite pre-proofs in singleton logic with fixed points. Their guard condition states that every cycle should be supported by the unfolding of a positive (least) fixed point on the antecedent or a negative (greatest) fixed point on the succedent. Since they allow mutual dependency of least and greatest fixed points, they need to consider the priority of each fixed point as well. The supporting fixed point for each cycle has to be of the highest priority among all fixed points that are unfolded infinitely in the cycle. 
They proved that the guarded subset of derivations enjoys the cut elimination property; in particular, a cut composing any two guarded derivations can be eliminated effectively.

As an example, the following circular pre-proof defined on the signature $\mathsf{nat}=^{1}_{\mu} \mathsf{1} \oplus \mathsf{nat}$ depicts an infinite pre-proof that consists of repetitive application of $\mu R$ followed by $\oplus R$:
\begin{center}
\begin{tikzpicture}
\node[](2){$\infer[\mu R]{\cdot \vdash \mathsf{nat}}{\infer[\oplus R]{\cdot \vdash \mathsf{1} \oplus \mathsf{nat}}{\cdot \vdash \mathsf{nat}}}$};
\draw[->] (-1,0.6).. controls (-3,.6) and (-3,-.7) .. (-1,-.7);
\end{tikzpicture}
\end{center}
It is not guarded and turns out not to be locally valid either. On the other hand, on the signature $\mathsf{conat}=_{\nu}^{1} \mathsf{1}\ \& \ \mathsf{conat}$, we can define a circular pre-proof using greatest fixed points that is guarded and locally valid:
\begin{center}
\begin{tikzpicture}
\node[](2){$\infer[\nu R]{\cdot \vdash \mathsf{conat}}{\infer[\& R]{\cdot \vdash \mathsf{1} \&\ \mathsf{conat}}{\infer[\mathsf{1}R]{\cdot \vdash \mathsf{1}}{} & \cdot \vdash \mathsf{conat}}}$};
\draw[->] (1.3,0.5).. controls (3.1,0.5) and (3.2,-0.9) .. (1,-0.9);
\end{tikzpicture}
\end{center}

\section{Fixed Points in the Context of Session Types}\label{pre} 

Session types~\cite{Honda93concur,Honda98esop} describe the communication behavior of interacting processes.  \emph{Binary session types}, where each channel has two endpoints, have been recognized as arising from linear logic (either in its intuitionistic~\cite{Caires10concur,Caires16mscs} or classical~\cite{Wadler12icfp} formulation) by a Curry-Howard interpretation of propositions as types, proofs as programs, and cut reduction as communication.  In the context of programming, recursive session types have also been considered~\cite{Toninho13esop,Lindley16icfp}, and they seem to fit smoothly, just as recursive types fit well into functional programming languages. However, they come at a price, since we abandon the Curry-Howard correspondence.

In this paper we show that this is not necessarily the case: we can remain on a sound logical footing as long as we (a) refine general recursive session types into least and greatest fixed points, (b) are prepared to accept circular proofs, and (c) impose conditions under which recursively defined processes correspond to valid circular proofs. General (nonlinear) type theory has followed a similar path, isolating inductive and coinductive types with a variety of conditions to ensure validity of proofs.  In the setting of subsingleton logic, however, we find many more symmetries than typically present in traditional type theories, which appear to be naturally biased towards least fixed points and inductive reasoning.

Under the Curry-Howard interpretation a subsingleton judgment $A \vdash_\Sigma B$ is annotated as 
\[ x:A \vdash_\Sigma \mathtt{P} :: (y:B)\]
where $x$ and $y$ are two different channels and $A$ and $B$ are their corresponding session types. One can understand this judgment as: process $\mathtt{P}$ provides a service of type $B$ along channel $y$ while using channel $x$ of type $A$, a service that is provided by another process along channel $x$ \cite{DeYoung16aplas}. We can form a chain of processes $\mathtt{P}_0, \mathtt{P}_1, \cdots, \mathtt{P}_{n}$ with the typing
\[\cdot \vdash \mathtt{P}_0 :: (x_0:A_0),\quad x_0:A_0 \vdash \mathtt{P}_1 :: (x_1:A_1),\quad \cdots\quad x_{n-1}:A_{n-1} \vdash \mathtt{P}_{n} :: (x_{n}:A_{n})\]
which we write as
\[
\mathtt{P}_0 \mid_{x_0} \mathtt{P}_1 \mid_{x_1} \cdots\; \mid_{x_{n-1}} \mathtt{P}_n
\]
in analogy with the notation for parallel composition for processes $P \mid Q$, although here it is not commutative.  In such a chain, process $\mathtt{P}_{i+1}$ uses a service of type $A_i$ provided by the process $\mathtt{P}_{i}$ along the channel $x_i$, and provides its own service of type $A_{i+1}$ along the channel $x_{i+1}$. Process $\mathtt{P}_0$ provides a service of type $A_0$ along channel $x_0$ without using any services. So, a process in the session type system, instead of being reduced to a value as in functional programming, interacts with both its left and right interfaces by sending and receiving messages. Processes $\mathtt{P}_i$ and $\mathtt{P}_{i+1}$, for example, communicate with each other along the channel $x_i$ of type $A_i$: if process $\mathtt{P}_{i}$ sends a message along channel $x_i$ to the right and process $\mathtt{P}_{i+1}$ receives it from the left (along the same channel), session type $A_i$ is called a \emph{positive} type. Conversely, if process $\mathtt{P}_{i+1}$ sends a message along channel $x_i$ to the left and process $\mathtt{P}_{i}$ receives it from the right (along the same channel), session type $A_i$ is called a \emph{negative} type. In Section~\ref{session} we show in detail that this symmetric behavior of left and right session types results in a symmetric behaviour of least and greatest fixed point types.

In general, in a chain of processes, the leftmost type may not be empty.  Also, strictly speaking, the names of the channels are redundant since every process has two distinguished ports: one to the left and one to the right, either one of which may be empty.  Because of this, we may sometimes omit the channel name, but in the theory we present in this paper it is convenient to always refer to communication channels by unique names.

For programming examples, it is helpful to allow not just two, but any finite number of alternatives for internal ($\oplus$) and external ($\&$) choice. Such finitary choices can equally well be interpreted as propositions, so this is not a departure from the proofs as programs interpretation.

\begin{definition}\label{signature} We define {\it session types} with the following grammar, where $L$ ranges over
  finite sets of labels denoted by $\ell$ and $k$.
\[A::= {\oplus}\{\ell : A_\ell\}_{\ell \in L} \mid {\&}\{\ell : A_\ell\}_{\ell \in L} \mid 1 \mid \bot \mid t\]
where $t$ ranges over a denumerable set of type variables whose definition is given in a signature $\Sigma$ as before.
The binary disjunction and conjunction are defined as $A \oplus B =\oplus\{\pi_1:A, \pi_2:B\}$ and $A \& B =\& \{\pi_1:A, \pi_2:B\}$, respectively. Similarly, we define $0=\oplus\{\}$ and $\top=\& \{\}$.
\end{definition}

As a first programming-related example, consider natural numbers in unary form ($\mathsf{nat}$) and  a type to demand access to a number if desired ($\mathsf{ctrl}$).

\begin{example}[Natural numbers on demand]\label{natstream}
 \begin{align*}
& \mathsf{nat}=^{1}_{\mu} \oplus\{ \mathit{z}:\mathsf{1}, \mathit{s}: \mathsf{nat}\}\\
& \mathsf{ctrl}=^{2}_{\nu} \& \{\mathit{now} : \mathsf{nat}, \ \ \mathit{notyet}: \mathsf{ctrl}\}
 \end{align*}
 In this example, $\Sigma$ consists of an inductive and a coinductive type; these are, respectively: (i) the type of natural numbers ($\mathsf{nat}$) built using two constructors for zero (${\mathit{z}}$) and successor ($\mathit{s}$), and (ii) a type to demand access to a number if desired ($\mathsf{ctrl}$) defined using two destructors for $\mathit{now}$ to obtain the number and $\mathit{notyet}$ to postpone access, possibly indefinitely. Here, the priorities of $\mathsf{nat}$ and $\mathsf{ctrl}$ are, respectively, $1$ and $2$, understood as ``\emph{$\mathsf{nat}$ has higher priority than $\mathsf{ctrl}$}''.
\end{example}

  \begin{example}[Binary numbers in standard form]
  As another example consider the signature with two types with positive polarity and the same priority: $\mathsf{std}$ and $\mathsf{pos}$. Here, $\mathsf{std}$ is the type of standard bit strings, i.e., bit strings terminated with $\$$ without any leading $0$ bits, and $\mathsf{pos}$ is the type of positive standard bit strings, i.e., all standard bit strings except $\$$.  Note that in our representation the least significant bit is sent first.
   \begin{align*}
            & \mathsf{std}=^{1}_{\mu} \oplus\{ \mathit{b0}:\mathsf{pos}, \mathit{b1}:\mathsf{std},  \$:\mathsf{1}\}\\
            & \mathsf{pos}=^{1}_{\mu} \oplus \{\mathit{b0}:\mathsf{pos}, \mathit{b1}:\mathsf{std}\}\qedhere
 \end{align*}

\end{example}
\begin{example}[Bits and cobits]
\label{bittype}
\begin{align*}
             &\mathsf{bits}=^{1}_{\mu} \oplus\{ \mathit{b0}:\mathsf{bits},\ \mathit{b1}:\mathsf{bits}\} \\
             &\mathsf{cobits}=^{2}_{\nu} \&\{ \mathit{b0}:\mathsf{cobits},\ \mathit{b1}:\mathsf{cobits}\}
\end{align*}
 In a functional language, the type $\mathsf{cobits}$ would be a greatest fixed point (an infinite stream of bits), while $\mathsf{bits}$ is recognized as an empty type. However, in the session type system, we treat them in a symmetric way.  $\mathsf{bits}$ is an infinite sequence of bits with positive polarity. And its dual type, $\mathsf{cobits}$, is an infinite stream of bits with negative polarity. In Examples~\ref{bits} and~\ref{cobits}, in Section~\ref{alg1}, we further illustrate the symmetry of these types by providing two recursive processes having them as their interfaces.  Even though we can, for example, write transducers of type $\mathsf{bits} \vdash \mathsf{bits}$ inside the language, we cannot write a \emph{valid} process of type $\cdot \vdash \mathsf{bits}$ that \emph{produces} an infinite stream of bits.
\end{example}

\section{A Synchronous Operational Semantics}
\label{operat}

The operational semantics for process expressions under the proofs-as-programs interpretation of linear logic has been treated exhaustively elsewhere~\cite{Caires10concur,Caires16mscs,Toninho13esop,Griffith16phd}.  We therefore only briefly sketch the operational semantics here.  Communication is \emph{synchronous}, which means both sender and receiver block until they synchronize. Asynchronous communication can be modeled using a process with just one output action followed by forwarding~\cite{Gay10jfp,DeYoung12csl}.  However, a significant difference to much prior work is that we treat types in an isorecursive way, that is, a message is sent to unfold the definition of a type $t$.  This message is written as $\mu_t$ for a least fixed point and $\nu_t$ for a greatest fixed point. The language of process expressions and their operational semantics presented in this section is suitable for general isorecursive types, if those are desired in an implementation. The resulting language then satisfies a weaker progress property sometimes called \emph{global progress} (see, for example, \cite{Caires10concur}).

\begin{definition}\label{process} Processes are defined as follows over the signature $\Sigma$:
\[\begin{aligned}
P,Q ::=\ \   
&   { y \leftarrow x} & identity\\& \mid  ( x \leftarrow P_x ; Q_x) & cut \\
& \mid Lx.k; P  \mid \mathbf{case}\, Rx\ (\ell \Ra Q_{\ell})_{\ell \in L}& \& \{\ell:A_\ell\}_{\ell \in L}\\
& \mid Rx.k; P  \mid \mathbf{case}\, Lx\ (\ell \Ra Q_{\ell})_{\ell \in L} & \phantom{something} \oplus \{\ell:A_\ell\}_{\ell \in L}  \\
& \mid \mathbf{close}\, Rx  \mid \mathbf{wait}\, Lx; Q  & 1\\
& \mid \mathbf{close}\, Lx \mid \mathbf{wait} \,Rx;Q & \bot \\
& \mid  Rx.\mu_{t}; P \mid \mathbf{case}\, Lx\ (\mu_{t} \Ra P) & t=_{\mu}A\\
& \mid Lx. \nu_{t}; P \mid \mathbf{case}\, Rx\ (\nu_{t} \Ra P) & t=_{\nu} A\\
& \mid { \bar{y} \leftarrow X \leftarrow \bar{x}}  & call
\end{aligned}\]
\\ where $X,Y, \ldots$ are {\it  process variables}, $x,y, \ldots$ are channel names, and $\bar{x}$ ($\bar{y}$) is either empty or $x$ ($y$). In $( x \leftarrow P_x ; Q_x)$, the variable $x$ is bound and represents a channel connecting the two processes. Throughout the paper, we may subscript processes with bound variables if they are allowed to occur in them. In the programming examples, we may write $ y\leftarrow X \leftarrow x ; Q_y$ instead of $(y  \leftarrow  (y\leftarrow X \leftarrow x) ; Q_y)$ when a new process executing $X$ is spawned.
The left and right session types a process interacts with are uniquely labelled  with channel names: \[ x:A \vdash P :: (y:B).\]
We read this as 
\begin{center} {\it Process $P$ uses channel $x$ of type $A$ and provides a service of type $B$ along channel $y$.}
\end{center}
However, since a process might not use any service provided along its left channel, e.g. $\cdot \vdash P :: (y: B),$ or it might not provide any service along its right channel, e.g. $ x:A \vdash Q:: (\cdot) $, the labelling of processes is generalized to be of the form:
\[\bar{x}:\omega \vdash P::(\bar{y}:\gamma),\] where $\bar{x}$ ($\bar{y}$) is either empty or $x$ ($y$), and $\omega$ ($\gamma$) is empty given that $\bar{x}$ ($\bar{y}$) is empty.

Process definitions are of the form $\bar{x}:\omega \vdash X=P_{\bar{x},\bar{y}} ::(\bar{y}:\gamma)$ representing that variable $X$ is defined as process $P$. A {\it program} $\mathcal{P}$ is defined as a pair $ \langle V,S \rangle$, where $V$ is a finite set of process definitions, and $S$ is the {\it main} process variable.\footnote{For the sake of simplicity, we may only refer to process variables in $V$ when their definition is clear from the context.} Figure~\ref{fig:annotrule} shows the logical rules annotated with processes in the context of session types.  This set of rules inherits the full symmetry of its underlying sequent calculus. They interpret \emph{pre-proofs}: as can be seen in the rule $\msc{Def}$, the typing rules inherit the infinitary nature of deductions from the logical rules in Figure~\ref{fig:rule} and are therefore not directly useful for type checking.\footnote{The rule $\msc{Def}$ corresponds to forming cycles in the circular derivations of the system of Figure~\ref{fig:rule}.} We obtain a finitary system to check \emph{circular} pre-proofs by removing the first premise from the $\msc{Def}$ rule and checking each process definition in $V$ separately, under the hypothesis that all process definitions are well-typed. Since the system is entirely syntax-directed we may sometimes equate (well-typed) programs with their typing derivations.  This system rules out communication mismatches without forcing processes to actually communicate along their external channels.  In order to also enforce communication the rules need to track additional information (see rules in Figures~\ref{fig:stp-order} (infinitary) and~\ref{fig:validity} (finitary) in Sections~\ref{rules} and~\ref{alg}).

All processes we consider in this paper provide a service along their right channel so in the remainder of the paper we restrict the sequents to be of the form $\bar{x}:\omega \vdash P::(y:A)$. We therefore do not need to consider the rules for type $\perp$ anymore, but the results of this paper easily generalize to the fully symmetric calculus.
\end{definition}

\begin{figure}
\[
\begin{tabular}{c} 
\infer[\msc{Id}]{x : A \vdash  y  \leftarrow x  :: (y : A)}{}\\[6pt] \infer[\msc{Cut}^{w}]{ \bar{x} : \omega \vdash   (w \leftarrow P_{w} ; Q_{w}) :: ( \bar{y} : \gamma)}{ \bar{x} :  \omega \vdash P_{w} ::(w:A) & w: A \vdash Q_{w} :: (\bar{y}: \gamma)} \\[6pt]
\infer[\oplus R]{\bar{x}:\omega \vdash Ry.k; P :: (y: \oplus\{\ell:A_\ell\}_{\ell \in L})}{\bar{x}: \omega \vdash P :: (y: A_{k}) \quad (k \in L)} \\[6pt]
\infer[\oplus L]{x:\oplus\{ \ell:A_\ell \}_{ \ell \in L} \vdash \mathbf{case}\, Lx \ (\ell\Ra P_{\ell})_{\ell \in L}:: (\bar{y}: \gamma)}{\forall \ell\in L \quad x:A_{\ell} \vdash P_\ell :: (\bar{y}:\gamma)}\\[6pt]
\infer[\& R]{\bar{x} : \omega \vdash \mathbf{case}\,  Ry \ (\ell \Ra P_\ell)_{\ell \in L} :: (y : \& \{\ell:A_\ell\}_{\ell \in L})}{\bar{x} : \omega \vdash  P_\ell :: (y :A_{\ell}) \quad \forall \ell\in L}   \\[6pt]
\infer[\& L]{x : \&\{ \ell:A_\ell \}_{ \ell \in L} \vdash  Lx .k; P :: ( \bar{y} : \gamma)}{k\in L \quad x :  A_{k} \vdash   P :: ( \bar{y} : \gamma)}
\end{tabular}
\]
\[
\begin{tabular}{c c} 
\infer[1R]{\cdot \vdash  \mathbf{close}\, Ry :: (y : 1)}{} &
\infer[1L]{x : 1 \vdash  \mathbf{wait}\, Lx;Q :: (\bar{y} : \gamma)}{  . \vdash  Q :: (\bar{y} : \gamma) }  \\[6pt]
\infer[ {\bot} R]{\bar{x}:A \vdash  \mathbf{wait} \, Ry; Q :: (y : \bot)}{\bar{x}:A \vdash Q:: \cdot} &
\infer[{\perp}L]{x : \bot \vdash  \mathbf{close} \, Lx :: \cdot}{}  \\[6pt]
\infer[\mu R]{ \bar{x} : \omega \vdash  Ry .\mu_t; P_{y} :: (y:t)}{  \bar{x} : \omega \vdash P_{y} :: (y :A) & t=_{\mu}^i A } & \infer[\mu L]{x : t \vdash  \mathbf{case}\, Lx\ (\mu_{t} \Ra Q_{x }):: ( \bar{y} : \gamma)}{x : A \vdash Q_{x } :: ( \bar{y} : \gamma)  &  t=_{\mu}^i A }\\[6pt]
\infer[\nu R]{\bar{x} : \omega \vdash  \mathbf{case}\, Ry  \ (\nu_t \Ra P_{y }) :: (y : t)}{\bar{x} : \omega \vdash P_{y} :: (y : A)  & t=_{\nu}^i A }& \infer[\nu L]{x : t \vdash  Lx .\nu_{t}; Q_{x }:: ( \bar{y} : \gamma)}{x : A \vdash Q_{x } :: ( \bar{y} : \gamma) & t=_{\nu}^i A }
\end{tabular}
\]\\
\[ \infer[\msc{Def}(X)]{\bar{x} : \omega \vdash \bar{y} \leftarrow X \leftarrow \bar{x} :: ( \bar{y} : \gamma)}{\bar{x} : \omega \vdash  P_{\bar{x},\bar{y}} :: ( \bar{y} : \gamma) & \bar{u}:\omega \vdash X=P_{\bar{u},\bar{w}} :: (\bar{w}:\gamma) \in V } \]
  \caption{Process assignment for subsingleton logic with fixed points (infinitary).}
  \label{fig:annotrule}
\end{figure}

The computational semantics is defined on configurations
\[
P_0 \mid_{\, x_1} \cdots \mid_{\, x_n} P_n
\]
where $\mid$ is associative and has unit $(\cdot)$ but is not commutative.  The following transitions can be applied anywhere in a configuration:

\newcommand{\semi}{\mathrel{;}}
\[\renewcommand{\arraystretch}{1}
  \begin{array}{lcll}
    P_x \mid_x (y \leftarrow x) \mid_y Q_y & \mapsto & P_z \mid_z Q_z & \mbox{($z$ fresh), forward} \\
    (x \leftarrow P_x \semi Q_x) & \mapsto & P_z \mid_z Q_z & \mbox{($z$ fresh), spawn} \\
    (Rx.k \semi P) \mid_x \mathbf{case}\, Lx\, (\ell \Rightarrow Q_\ell)_{\ell \in L} & \mapsto & P \mid_x Q_k & \mbox{send label $k\in L$ right} \\
    \mathbf{case}\, Rx\, (\ell \Rightarrow P_\ell)_{\ell \in L} \mid_x (Lx.k \semi Q) & \mapsto & P_k \mid_x Q & \mbox{send label $k\in L$ left} \\
    \mathbf{close}\, Rx \mid_x (\mathbf{wait}\, Lx \semi Q) & \mapsto & Q & \mbox{close channel right} \\
    (\mathbf{wait}\, Rx \semi P) \mid_x \mathbf{close}\, Lx & \mapsto & P & \mbox{close channel left} \\
    (Rx.\mu_t \semi P) \mid_x \mathbf{case}\, Lx\, (\mu_t \Rightarrow Q) & \mapsto & P \mid_x Q & \mbox{send $\mu_t$ unfolding message right} \\
    \mathbf{case}\, Rx\, (\nu_t \Rightarrow P) \mid_x (Lx.\nu_t \semi Q) & \mapsto & P \mid_x Q & \mbox{send $\nu_t$ unfolding message  left} \\
    \bar{y} \leftarrow X \leftarrow \bar{x}  & \mapsto &  P_{\bar{x},\bar{y}} & \mbox{where $\bar{u} : \omega \vdash X = P_{\bar{u},\bar{w}} :: (\bar{w} : \gamma) \in V$}
  \end{array}
\]
The forward rule removes process $y \leftarrow x$ from the configuration and replaces both channels $x$ and $y$ in the rest of the configuration with a fresh channel $z$. The rule for $x \leftarrow P_x\, ; Q_x$  spawns process $P_z$ and continues as $Q_z$. To ensure uniqueness of channels, we need $z$ to be a fresh channel. For internal choice, $Rx.k;P$ sends label $k$ along channel $x$ to the process on its right and continues as $P$. The process on the right, $\mathbf{case} \, Lx\, (\ell \Rightarrow Q_{\ell})$, receives the label $k$ sent from the left along channel $x$, and chooses the branch with label $k$  to continue with $Q_k$. The last transition rule unfolds the definition of a process variable $X$ while instantiating the left and right channels $\bar{u}$ and $\bar{w}$ in the process definition with proper channel names, $\bar{x}$ and $\bar{y}$ respectively.

\section{Ensuring communication and local validity}
\label{session}

In this section we motivate our algorithm as an effectively decidable compositional and local criterion which ensures that a program always terminates either in an empty configuration or one attempting to communicate along external channels. By defining type variables in the signature and process variables in the program, we can generate (mutually) recursive processes which correspond to circular pre-proofs in the sequent calculus. In Examples~\ref{begen} and~\ref{block}, we provide such recursive processes along with explanations of their computational steps and their corresponding derivations.
\begin{example}\label{begen}
Take the signature
\[\Sigma_1:=  \mathsf{nat}=^{1}_{\mu} \oplus\{ \mathit{z}:\mathsf{1}, \mathit{s}:\mathsf{nat}\}.\]
 We define a process
 \[\begin{aligned}
 \cdot \vdash \mathtt{Loop} :: (y:\mathsf{nat}) \\
 \end{aligned},\]
 where $\mathtt{Loop}$ is defined as 
 \begin{align}
 y \leftarrow \mathtt{Loop} \leftarrow \cdot = \  & Ry.\mu_{nat};  &&\phantom{lo space} \%\ \textit{send}\ \mu_{nat}\  \textit{to right} \tag{i}\\
& \phantom{low}  Ry.s;  && \phantom{lo space} \%\ \textit{send label}\ \mathit{s} \  \textit{to right} \tag{ii}\\
& \phantom{low s} y \leftarrow \mathtt{Loop} \leftarrow \cdot && \phantom{lo space} {\color{red} \%\ \textit{recursive call}}\ \tag{iii}
\end{align}
$\mathcal{P}_1:=\langle \{\mathtt{Loop}\}, \mathtt{Loop} \rangle$ forms a program over the signature $\Sigma_1$.  It (i) sends a {\it positive} fixed point unfolding message to the right, (ii) sends the label $\ \mathit{s}$, as another message corresponding to $\mathit{successor}$, to the right, (iii) calls itself and loops back to (i). The program runs forever, sending {\it successor} labels to the right, without receiving any fixed point unfolding messages from the left or the right. We can obtain the following infinite derivation in the system of Figure~\ref{fig:rule} via the Curry-Howard correspondence of the unique typing derivation of process $\mathtt{Loop}$:
\begin{center}
\begin{tikzpicture}
\node[](2){$\infer[\mu R]{\cdot \vdash \mathsf{nat}}{\infer[\oplus R_s]{\cdot \vdash \oplus\{ \mathit{z}:\mathsf{1}, \mathit{s}:\mathsf{nat}\}}{\cdot \vdash \mathsf{nat}}}$};
\draw[->] (-1,0.7).. controls (-3,.6) and (-3,-.7) .. (-1,-.7);
\end{tikzpicture} 
\end{center} 
\end{example}

\begin{example}\label{block}
Define process
 \[\begin{aligned}
 x:\mathsf{nat} \vdash \mathtt{Block} :: (y:\mathsf{1}) \\
 \end{aligned}\]
over the signature $\Sigma_1$ as 
 \begin{align}
  & y \leftarrow \mathtt{Block} \leftarrow x= && \notag \\ & \phantom{small} \mathbf{case}\, Lx\ (\mu_{nat} \Ra \tag{i} && \%
  \ \textit{receive}\ \mu_{nat}\ \textit{from left} \\ & \phantom{small spac}\mathbf{case}\,Lx && \%\ \textit{receive a label} \ \textit{from left} \tag{ii}\\ & \phantom{small space ti}  ( \ z \Rightarrow \mathbf{wait}\, Lx; && \% \ \textit{wait for}\ x\ \textit{to close}  \tag{ii-a}\\
   & \phantom{small space times two} \mathbf{close}\, Ry && \%\ \textit{close}\ y \notag\\
 & \phantom{small space ti} \mid s \Rightarrow  y \leftarrow \mathtt{Block} \leftarrow x)) && {\color{red} \% \  \textit{recursive call \tag{ii-b}}} 
\end{align}
$\mathcal{P}_2:=\langle \{\mathtt{Block}\}, \mathtt{Block} \rangle$ forms a program over the signature $\Sigma_1$:\\
(i) $\mathtt{Block} $ {\it \bf waits}, until it receives a {\it positive} fixed point unfolding message from the left, (ii) waits for another message from the left to determine the path it will continue with:\\ \indent (a) If the message is a $\ \mathit{z}\ $ label, (ii-a) the program waits until a closing message is received from the left. Upon receiving that message, it closes the left and then the right channel. \\ \indent (b) If the message is an $\ \mathit{s} \ $ label, (ii-b) the program calls itself and loops back to (i).\\
Process $\mathtt{Block}$ corresponds to the following infinite derivation:
\begin{center}
\begin{tikzpicture}
\node[](2){$\infer[\mu L]{\mathsf{nat} \vdash \mathsf{1}}{\infer[\oplus L]{ \oplus\{ \mathit{z}:\mathsf{1}, \mathit{s}:\mathsf{nat}\} \vdash \mathsf{1}}{\infer[\mathsf{1} L]{\mathsf{1} \vdash \mathsf{1}}{\infer[\mathsf{1}R]{\cdot \vdash \mathsf{1}}{}}& \mathsf{nat} \vdash \mathsf{1} }}$};
\draw[->] (1.5,0.15).. controls (3,0.15) and (3,-1.2) .. (1,-1.3);
\end{tikzpicture}
\end{center}
\end{example}
Derivations corresponding to both of these programs are cut-free. Also no internal loop takes place during their computation, in the sense that they both communicate with their left or right channels after finite number of steps. For process $\mathtt{Loop}$ this communication is restricted to sending infinitely many unfolding and successor messages to the right. Process $\mathtt{Block}$, on the other hand, receives the same type of messages after finite number of steps as long as they are provided by a process on its left. Composing these two processes as in $x \leftarrow \mathtt{Loop} \leftarrow \cdot \mid_x y \leftarrow \mathtt{Block} \leftarrow x$ results in an internal loop: process $\mathtt{Loop}$ keeps providing unfolding and successor messages for process $\mathtt{Block}$ so that they both can continue the computation and call themselves recursively. Because of this internal loop, the composition is not acceptable: it never communicates with its left (empty channel) or right (channel $y$). The infinite derivation corresponding to the composition $x \leftarrow \mathtt{Loop} \leftarrow \cdot \mid_x y \leftarrow \mathtt{Block} \leftarrow x$ therefore should be rejected as invalid:
\begin{center}
\begin{tikzpicture}
\draw[->] (-3.5,0.5).. controls (-4.7,0.5) and (-4.7,-1) .. (-3.5,-0.95);
\node[](2){$\infer[\msc{Cut}_{nat}]{\cdot \vdash \mathsf{1}}{\infer[\mu R]{\cdot \vdash \mathsf{nat}}{\infer[\oplus R_s]{\cdot \vdash  \oplus\{ \mathit{z}:\mathsf{1}, \mathit{s}:\mathsf{nat}\}}{\cdot \vdash \mathsf{nat}}} & \infer[\mu L]{\mathsf{nat} \vdash \mathsf{1}}{\infer[\oplus L]{ \oplus\{ \mathit{z}:\mathsf{1}, \mathit{s}:\mathsf{nat}\} \vdash \mathsf{1}}{\infer[\mathsf{1} L]{\mathsf{1} \vdash \mathsf{1}}{\infer[\mathsf{1}R]{\cdot \vdash \mathsf{1}}{}}& \mathsf{nat} \vdash \mathsf{1} }}}$};
\draw[->] (3.5,0.5).. controls (4.7,0.5) and (4.7,-1) .. (3.2,-0.9);
\end{tikzpicture}
\end{center}
The cut elimination algorithm introduced by Fortier and Santocanale uses a reduction function $\textsc{Treat}$ that may never halt. They proved that for derivations satisfying the guard condition $\textsc{Treat}$ is locally terminating since it always halts on guarded proofs~\cite{Fortier13csl}. The above derivation is an example of one that does not satisfy the FS guard condition and the cut elimination algorithm does not locally terminate on it.

The {\it progress property} for a configuration of processes ensures that during its computation it either: (i) takes a step; (ii) is empty; or (iii) communicates along its left or right channel. Without (mutually) recursive processes, this property is enough to make sure that computation never gets stuck. Having (mutually) recursive processes and fixed points, however, this property is not strong enough to restrict internal loops. The composition $x \leftarrow \mathtt{Loop} \leftarrow \cdot \mid_x y \leftarrow \mathtt{Block} \leftarrow x$, for example, satisfies the progress property but it never interacts with any other external process. We introduce a stronger form of the progress property, in the sense that it requires one of the conditions (ii) or (iii) to hold after a finite number of computation steps.

Like cut elimination, strong progress is not compositional. Processes $\mathtt{Loop}$ and $\mathtt{Block}$ both satisfy the strong progress property but their composition $x \leftarrow \mathtt{Loop} \leftarrow \cdot \mid_x y \leftarrow \mathtt{Block} \leftarrow x$ does not. We will show in Section~\ref{semantics} that FS validity implies strong progress. But, in contrast to strong progress, FS validity is compositional in the sense that composition of two disjoint valid programs is also valid. 
However, the FS guard condition is not local. Locality is particularly important from the programming point of view.
It is the combination of two properties that are pervasive and often implicit in the study of programming languages. First, the algorithm is syntax-directed, following the structure of the program and second, it checks each process definition separately, requiring only the signature and the types of other processes but not their definition. One advantage is indeed asymptotic complexity, and, furthermore, a practically very efficient implementation. In Remark~\ref{complexity} we show that the time complexity of our validity algorithm is linear in the total input, which consists of the signature and the process definitions.
Another is precision of error messages: locality implies that there is an exact program location where the condition is violated.  Validity is a complex property, so the value of precise error messages cannot be overestimated.  The final advantage is modularity: all we care about a process is its interface, not its definition, which means we can revise definitions individually without breaking validity of the rest of the program as long as we respect their interface. Our goal is to construct a locally checkable validity condition that accepts (a subset of) programs satisfying strong progress and is compositional.

In functional programming languages a program is called {\it terminating} if it reduces to a value in a finite number of steps, and is called {\it productive} if every piece of the output is generated in a finite number of steps (even if the program potentially runs forever). As in the current work, the theoretical underpinnings for terminating and productive programs are also least and greatest fixed points, respectively, but due to the functional nature of computation they take a different and less symmetric form than here (see, for example, \cite{Birkedal13lics,Grathwohl16phd}).

In our system of session types, least and greatest fixed points correspond to defined type variables with positive and negative polarity, respectively, and their behaviors are quite symmetric:  As in Definition~\ref{process}, an unfolding message $\mu_t$ for a type variable $t$ with positive polarity is received from the left and sent to the right, while for a variable $t$ with negative polarity, the unfolding message $\nu_t$ is received from the right and sent to the left. Going back to Examples~\ref{begen} and~\ref{block}, process $\mathtt{Loop}$ seems less acceptable than process $\mathtt{Block}$: process $\mathtt{Loop}$ does not receive any least or greatest fixed point unfolding messages. It is neither a terminating nor a productive process. We want our algorithm to accept process $\mathtt{Block}$ rather than $\mathtt{Loop}$, since it cannot accept both. This motivates a definition of reactivity on session-typed processes.

A  program defined over a signature $\Sigma$ is {\it reactive to the left} if it only continues forever if for a positive fixed point $t\in \Sigma$ with priority $i$ it receives a fixed point unfolding message $\mu_t$ from the left infinitely often.
A program is {\it reactive to the right} if it only continues forever if for a negative fixed point $t\in \Sigma$ with priority $i$ it receives a fixed point unfolding message $\nu_t$ from the right infinitely often.

A program is called {\it reactive} if it is either {\it reactive to the right} or {\it to the left}.
By this definition, process $\mathtt{Block}$ is reactive while process $\mathtt{Loop}$ is not. 
Although reactivity is not local we use it as a motivation behind our algorithm. We construct our local validity condition one step at a time. In each step, we expand the condition to accept one more family of interesting reactive programs, provided that we can check the condition locally. We first establish a local algorithm for programs with only direct recursion. We expand the algorithm further to support mutual recursions as well. Then we examine a subtlety regarding the cut rule to accept more programs locally. The reader may skip to Section~\ref{alg} which provides our complete finitary algorithm. Later, in Sections~\ref{Guard} and~\ref{semantics} we prove that our algorithm ensures the FS guard condition and strong progress.

Priorities of type variables in a signature are central to ensure that a process defined based on them satisfies strong progress. Throughout the paper we assume that the priorities are assigned (by a programmer) based on the intuition of why strong progress holds. We conclude this section with an example of a reactive process $\mathtt{Copy}$. This process, similar to $\mathtt{Block}$, receives a natural number from the left but instead of consuming it, sends it over to the right along a channel of type $\mathsf{nat}$.
\begin{example}\label{prex}
With signature $\Sigma_1:=  \mathsf{nat}=^{1}_{\mu} \oplus\{ \mathit{z}:\mathsf{1}, \mathit{s}:\mathsf{nat}\}$
we define process $\mathtt{Copy}$, $x:\mathsf{nat} \vdash \mathtt{Copy} :: (y:\mathsf{nat})$, as
 \begin{align}
  & y \leftarrow \mathtt{Copy} \leftarrow x= && \notag \\ & \phantom{small} \mathbf{case}\, Lx\ (\mu_{nat} \Ra \tag{i} && \%
  \ \textit{receive}\ \mu_{nat}\ \textit{from left} \\ & \phantom{small spac}\mathbf{case}\, Lx && \%\ \textit{receive a label} \ \textit{from left} \tag{ii}\\ & \phantom{small space tim}  (\ z\Ra Ry.\mu_{nat}; && \%\ \textit{send}\ \mu_{nat}\ \textit{ to right}  \tag{ii-a}\\  & \phantom{small space times two plu}  Ry.z; && \%\ \textit{send label}\ \mathit{z} \ \textit{ to right}  \notag\\ & \phantom{small space times two plus}  \mathbf{wait}\, Lx; && \% \ \textit{wait for}\ x\ \textit{to close} \notag \\
   & \phantom{small space times two plus on} \mathbf{close}\, Ry && \%\ \textit{close}\ y \notag\\
 & \phantom{small space tix} \mid s \Rightarrow Ry.\mu_{nat};  && \% \ \textit{send}\ \mu_{nat}\  to \ \textit{right} \tag{ii-b}\\ & \phantom{small space times two plu} Ry.s; && \% \ \textit{send label}\ \mathit{s}\ \textit{to right} \notag \\
 & \phantom{small space times two plus}  y \leftarrow \mathtt{Copy} \leftarrow x)) && {\color{red} \% \  \textit{recursive call}} \notag
\end{align}
This is an example of a recursive process, and $\mathcal{P}_3:=\langle \{\mathtt{Copy}\}, \mathtt{Copy} \rangle$ forms a {\it left reactive} program over the signature $\Sigma_1$:\\
(i) It waits until it receives a {\it positive} fixed point unfolding message from the left, (ii) waits for another message from the left to determine the path it will continue with:\\ \indent (a) If the message is a $\ \mathit{z}\ $ label, (ii-a) the program sends a {\it positive} fixed point unfolding message to the right, followed by the label $\ \mathit{z}$, and then waits until a closing message is received from the left. Upon receiving that message, it closes the right channel. \\ \indent (b) If the message is an $\ \mathit{s} \ $ label, (ii-b) the program sends a {\it positive} fixed point unfolding message to the right, followed by the label $\ \mathit{s}\ $, and then calls itself and loops back to (i).\\
The computational content of $\mathtt{Copy}$ is to simply copy a natural number given from the left to the right. Process $\mathtt{Copy}$ does not involve spawning (its underlying derivation is cut-free) and satisfies the strong progress property. This property is preserved when composed with $\mathtt{Block}$ as $y\leftarrow \mathtt{Copy} \leftarrow x\mid_y z \leftarrow \mathtt{Block} \leftarrow y$.
\end{example}

\section{Local Validity Algorithm: Naive Version}
\label{alg1}
In this section we develop a first naive version of our local validity algorithm using Examples~\ref{bits}-\ref{cobits}. 
\begin{example}\label{bits}
Let the signature be \[\Sigma_2:=\\  \mathsf{bits}=^{1}_{\mu} \oplus\{ \mathit{b0}:\mathsf{bits},\ \mathit{b1}:\mathsf{bits}\}\]
and define the process $\mathtt{BitNegate}$
 \[\begin{aligned}
 x:\mathsf{bits} \vdash \mathtt{BitNegate} :: (y:\mathsf{bits}) \\
 \end{aligned}\]
with
 \begin{align}
  & y \leftarrow \mathtt{BitNegate} \leftarrow x= && \notag \\ & \phantom{small} \mathbf{case}\, Lx\ (\mu_{bits} \Ra \tag{i} && \%
  \ \textit{receive}\ \mu_{bits}\ \textit{from left} \\ & \phantom{small spac}\mathbf{case}\, Lx && \%\ \textit{receive a label} \ \textit{from left} \tag{ii}\\ & \phantom{small space ti}(\ \mathit{b0}\Ra Ry.\mu_{bits}; && \%\ \textit{send}\ \mu_{bits}\ \textit{ to right}  \tag{ii-a}\\  & \phantom{small space times two plu}  Ry.b1; && \%\ \textit{send label}\ \mathit{b1} \ \textit{ to right}  \notag\\ & \phantom{small space times two plus}  y\leftarrow \mathtt{BitNegate} \leftarrow x && {\color{red}\% \ \textit{recursive call} } \notag \\
 & \phantom{small space ti} \mid \mathit{b1} \Rightarrow Ry.\mu_{bits};  && \% \ \textit{send}\ \mu_{bits} to \ \textit{right} \tag{ii-b}\\ & \phantom{small space times two plu} Ry.b0; && \% \ \textit{send label}\ \mathit{b0} \textit{ to right} \notag \\
 & \phantom{small space times two plus}  y \leftarrow \mathtt{BitNegate} \leftarrow x)) && {\color{red} \% \  \textit{recursive call}} \notag
\end{align}
 $\mathcal{P}_4:=\langle \{\mathtt{BitNegate}\}, \mathtt{BitNegate} \rangle$ forms a {\it left reactive} program over the signature $\Sigma_2$ quite similar to $\mathtt{Copy}$.
 Computationally, $\mathtt{BitNegate}$ is a buffer with one bit capacity that receives a bit from the left and stores it until a process on its right asks for it. After that, the bit is negated and sent to the right and the buffer becomes free to receive another bit. 
\end{example}
\begin{example}\label{cobits}
Dual to Example~\ref{bits}, we can define $\mathtt{coBitNegate}$.
Let the signature be \[\Sigma_3:=\\  \mathsf{cobits}=^{1}_{\nu} \&\{ \mathit{b0}:\mathsf{cobits},\ \mathit{b1}:\mathsf{cobits}\}\]
with process
 \[\begin{aligned}
 x:\mathsf{cobits} \vdash \mathtt{coBitNegate} :: (y:\mathsf{cobits}) \\
 \end{aligned}\]
where $\mathtt{coBitNegate}$ is defined as
 \begin{align}
  & y \leftarrow \mathtt{coBitNegate} \leftarrow x= && \notag \\ & \phantom{small} \mathbf{case}\, Ry\ (\nu_{cobits} \Ra \tag{i} && \%
  \ \textit{receive}\ \nu_{cobits}\ \textit{from right} \\ & \phantom{small spac}\mathbf{case}\, Ry&& \%\ \textit{receive a label} \ \textit{from right} \tag{ii}\\ & \phantom{small space ti} (\ \mathit{b0}\Ra Lx.\nu_{cobits}; && \%\ \textit{send}\ \nu_{cobits}\ \textit{ to left}  \tag{ii-a}\\  & \phantom{small space times two plu}  Lx.b1; && \%\ \textit{send label}\ \mathit{b1} \ \textit{ to left}  \notag\\ & \phantom{small space times two plus}  y\leftarrow \mathtt{coBitNegate} \leftarrow x && {\color{red}\% \ \textit{recursive call} } \notag \\
 & \phantom{small space ti} \mid \mathit{b1} \Rightarrow Lx.\nu_{cobits};  && \% \ \textit{send}\ \nu_{cobits} \ \textit{to left} \tag{ii-b}\\ & \phantom{small space times two plu} Lx.b0; && \% \ \textit{send label}\ \mathit{b0}\ \textit{to left} \notag \\
 & \phantom{small space times two plus}  y \leftarrow \mathtt{coBitNegate} \leftarrow x)) && {\color{red} \% \  \textit{recursive call}} \notag
\end{align}
 $\mathcal{P}_5:=\langle \{\mathtt{coBitNegate}\}, \mathtt{coBitNegate} \rangle$ forms a {\it right reactive} program over the signature $\Sigma_3$.
 Computationally, $\mathtt{coBitNegate}$ is a buffer with one bit capacity. In contrast to $\mathtt{BitNegate}$ in Example~\ref{bits}, its types have negative polarity: it receives a bit from the {\it right}, and stores it until a process on its {\it left} asks for it. After that the bit is negated and sent to the {\it left} and the buffer becomes free to receive another bit. 
\end{example}

\begin{remark} \label{wait} The property that assures the reactivity of the previous examples lies in their step (i) in which the program {\it blocks} until an unfolding message is received, i.e., the program can only continue the computation if it {\it receives} a message at step (i), and even after receiving the message it can only take finitely many steps further before the computation ends or another unfolding message is needed.
\end{remark}

We first develop a naive version of our algorithm which captures the property explained in Remark~\ref{wait}: associate an initial integer value (say $0$) to each channel and define the basic step of our algorithm to be \emph{decreasing} the value associated to a channel \emph{by one} whenever it \emph{receives} a fixed point unfolding message. Also, for a reason that is explained later in Remark~\ref{*}, whenever a channel \emph{sends} a fixed point unfolding message its value is \emph{increased by one}. Then at each recursive call, the value of the left and right channels are compared to their initial value.

For instance, in Example~\ref{prex}, in step (i) where the process receives a $\mu_{nat}$ message via the left channel ($x$), the value associated with $x$ is decreased by one, while in steps (ii-a) and (ii-b) in which the process sends a $\mu_{nat}$ message via the right channel ($y$) the value associated with $y$ is increased by one:
 \begin{align*}
 \phantom{y \leftarrow \mathtt{Copy} \leftarrow x=} & \phantom{caseLx (\mu_{nat}}  & x &\phantom{space} y \\
 y \leftarrow \mathtt{Copy} \leftarrow x= & \phantom{caseLx (\mu_{nat}}  & 0 &\phantom{space} 0 \\
\phantom{X=}& \mathbf{case}\, Lx\, (\mu_{nat} \Ra &  {\color{red}-1} &\phantom{space} 0 \\ \phantom{X=} &  \phantom{caseLx (\mu_{nat}}    \mathbf{case}\, Lx\, ( \mathit{z} \Rightarrow Ry.\mu_{nat};  &  -1 &\phantom{space} {\color{red}1} \\
\phantom{X=}&   \phantom{caseLx (\mu_{nat} \Ra caseLx ( z \Rightarrow}  R.z; \mathbf{wait}\, Lx; \mathbf{close}\, Ry & -1 &\phantom{space} 1\\
\phantom{X=}&   \phantom{caseLx (\mu_{nat} \Ra cas}  \mathit{s} \Ra  Ry.\mu_{nat}; & -1 &\phantom{space} {\color{red}1}\\
\phantom{X=}&   \phantom{caseLx (\mu_{nat} \Ra caseLx  \mathit{s} \Ra  } Ry.s;y\leftarrow \mathtt{Copy} \leftarrow x)) & {\color{blue}-1} &\phantom{space} {\color{blue}1}
 \end{align*}
When the recursive call occurs, channel $x$ has the value ${\color{blue}-1} < 0$, meaning that at some point in the computation it received a positive fixed point unfolding message. We can simply compare the value of the list $[x,y]$ lexicographically at the beginning and just before the recursive call: ${\color{blue}[-1,1]}$ being less than $[0,0]$ exactly captures the property observed in Remark~\ref{wait} for the particular signature $\Sigma_1$. Note that by the definition of $\Sigma_1$, $y$ never receives a fixed point unfolding message, so its value never decreases, and $x$ never sends a fixed point unfolding message, thus its value never increases.

 The same criterion works for the program $\mathcal{P}_3$ over the signature $\Sigma_2$ defined in Example~\ref{bits}, since $\Sigma_2$ also contains only one positive fixed point:
 \begin{align*}
   \phantom{y \leftarrow \mathtt{BitNegate} \leftarrow} & \phantom{caseLx (\mu_{nat}}  & x &\phantom{space} y  \\
   y \leftarrow \mathtt{BitNegate} \leftarrow x= & \phantom{caseLx (\mu_{nat}}  & 0 &\phantom{space} 0  \\ \phantom{X=} & \mathbf{case}\, Lx\ (\mu_{bits} \Ra  & {\color{red} -1} &\phantom{space} 0 \\ \phantom{X=}& \phantom{caseLx} \mathbf{case}\, Lx\ (b0 \Rightarrow Ry.\mu_{bits};\  & -1 & \phantom{space} {\color{red} 1} \\ \phantom{X=}&  \phantom{caseLx (\mu_{nat} \Rightarrow Ry.} Ry.b1; y\leftarrow \mathtt{BitNegate} \leftarrow x & {\color{blue}-1} & \phantom{space} {\color{blue} 1}\\ 
 \phantom{X=} & \phantom{caseLx (\mu_{nat} \Rightarrow (}  \mathit{b1} \Rightarrow Ry.\mu_{bits};  & -1 & \phantom{space} {\color{red} 1} \\ \phantom{X=} &   \phantom{caseLx (\mu_{nat} \Rightarrow Ry.} Ry.b0;  y \leftarrow \mathtt{BitNegate} \leftarrow x)) &   {\color{blue} -1} & \phantom{space}{\color{blue} 1}  
\end{align*}
 At both recursive calls the value of the list $[x,y]$ is less than $[0,0]$: ${\color{blue}[-1,1]} < [0,0]$.
 
However, for a program defined on a signature with a negative polarity such as the one defined in Example~\ref{cobits}, this condition does not work: 
  \begin{flalign*}
   \phantom{y \leftarrow \mathtt{coBitNegate} \leftarrow} & \phantom{caseRy (\nu_{nat}}  & x &\phantom{space} y  \\
   y \leftarrow \mathtt{coBitNegate} \leftarrow x= & \phantom{caseRy (\mu_{nat}}  & 0 &\phantom{space} 0  \\ \phantom{X} & \mathbf{case}\, Ry\ (\nu_{cobits} \Ra  & {0} &\phantom{spac} {\color{red} -1} \\ \phantom{X=}& \phantom{caseLx} \mathbf{case}\, Ry\ (b0 \Rightarrow Lx.\nu_{cobits};\  & {\color{red} 1} &\phantom{spac} {-1} \\ \phantom{X=}&  \phantom{caseLx (\mu_{nat} \Rightarrow Ry.} Lx.b1; y\leftarrow \mathtt{coBitNegate} \leftarrow x & {\color{blue} 1} &\phantom{spac} {\color{blue} -1}\\ 
 \phantom{X=} & \phantom{caseLx (\mu_{nat} \Rightarrow (}  \mathit{b1} \Rightarrow Lx.\nu_{cobits};  & {\color{red}1} &\phantom{spac} {-1} \\ \phantom{X=} &   \phantom{caseLx (\mu_{nat} \Rightarrow Ry.} Lx.b0;  y \leftarrow \mathtt{coBitNegate} \leftarrow x)) &   {\color{blue} 1} &\phantom{spac}{\color{blue} -1}  
\end{flalign*}
By the definition of $\Sigma_3$,
$y$ only receives unfolding fixed point messages, so its value only decreases. On the other hand, $x$ cannot receive an unfolding fixed point from the left and thus its value never decreases. In this case the property in Remark~\ref{wait} is captured by comparing the initial value of the list $[y,x]$, instead of $[x,y]$, with its value just before the recursive call: ${\color{blue}[-1,1]}<[0,0]$.

For a signature with only a single recursive type we can form a list by looking at the polarity of its type such that the value of the channel that receives the unfolding message comes first, and the value of the other one comes second. Our algorithm ensures that the value of the list right before a recursive call is lexicographically less than the initial value of the list. In this section, we implemented the algorithm by counting the number of fixed point unfolding messages sent or received along each channel. However, keeping track of the exact number of unfolding messages is too much information and unnecessary. At the end of the next section, we introduce an alternative implementation that establishes the relation between channels after sending or receiving a fixed point unfolding message without tracking the exact number of the messages.

\section{Priorities in the Local Validity Algorithm}
\label{priority}

The property explained in Remark~\ref{wait} of previous section is not strict enough, particularly when the signature has more than one recursive type. In that case not all programs that are waiting for a fixed point unfolding message before a recursive call are reactive.

\begin{example}\label{begen1} Consider the signature \[\begin{aligned}
 \Sigma_4:=\ & \mathsf{ack}=^{1}_{\mu} \oplus\{ \mathit{ack}:\mathsf{astream}\},\\
& \mathsf{astream}=^{2}_{\nu} \& \{\mathit {head}: \mathsf{ack}, \ \ \mathit{tail}: \mathsf{astream}\},\\
& \mathsf{nat}=^{3}_{\mu}\oplus \{\mathit{z}:1,\ \  \mathit{s}: nat\}
 \end{aligned}\]
 $\mathsf{ack}$ is a type with {\it positive} polarity that, upon unfolding, describes a protocol requiring an {\it acknowledgment} message to be sent to the right (or be received from the left). $\mathsf{astream}$ is a type with {\it negative} polarity of a potentially infinite stream where its $\mathit{head}$ is always followed by an acknowledgement while $\mathit{tail}$ is not.
 
 $\mathcal{P}_6:=\langle \{ \mathtt{Ping}, \mathtt{Pong}, \mathtt{PingPong}\},  \mathtt{PingPong} \rangle$ forms a program over the signature $\Sigma_4$ with the typing of its processes
 \[
 \begin{array}{l}
 x:\mathsf{nat} \vdash  \mathtt{Ping} :: (w:\mathsf{astream}) \\
 w:\mathsf{astream} \vdash  \mathtt{Pong}:: (y:\mathsf{nat}) \\
 x:\mathsf{nat} \vdash  \mathtt{PingPong}:: (y:\mathsf{nat})
 \end{array}
 \]
 We define processes $\mathtt{Ping}$, $\mathtt{Pong}$, and $\mathtt{PingPong}$ over $\Sigma_4$ as:
 \begin{align}
  & y \leftarrow \mathtt{PingPong} \leftarrow x=&& \notag\\&  \phantom{smal} w \leftarrow  \mathtt{Ping} \leftarrow x ; \tag{i} && \% \  \textit{spawn process}\ \mathtt{Ping} \\& \phantom{small space}  y \leftarrow \mathtt{Pong} \leftarrow w && {\color{red}\% \ \textit{continue with a tail call}}\notag\\ \notag\\
 & y\leftarrow \mathtt{Pong} \leftarrow w=&& \notag\\ & \phantom{smal} Lw.\nu_{astream};&&  \% \  \textit{send}\ \mathit{\nu_{astream}} \ \textit{to left} \tag{ii-$\mathtt{Pong}$}\\& \phantom{small s} Lw.\mathit{head};&&  \% \  \textit{send label}\ \mathit{head} \ \textit{to left} \tag{iii-$\mathtt{Pong}$}\\ & \phantom{small sp} \mathbf{case}\, Lw\ (\mu_{ack} \Ra && \% \  \textit{receive}\ \mathit{\mu_{ack}} \ \textit{from left} \tag{iv-$\mathtt{Pong}$}\\ &\phantom{small space } \mathbf{case}\, Lw\ ( && \% \  \textit{receive a label from left} \notag \\ & \phantom{small space times two}ack \Rightarrow Ry.\mu_{nat};&&\% \  \textit{send}\ \mathit{\mu_{nat}} \ \textit{to right} \notag\\ & \phantom{small space times two plu} Ry.s; && \% \  \textit{send label}\ \mathit{s} \ \textit{to right} \notag \\ & \phantom{small space times two plus o} y \leftarrow  \mathtt{Pong} \leftarrow w))&& {\color{red} \% \  \textit{recursive call}} \notag 
\end{align}
 \begin{align}
 & w \leftarrow \mathtt{Ping} \leftarrow x= && \notag \\ &  \phantom{smal} \mathbf{case}\, Rw\  (\nu_{astream} \Ra && \% \  \textit{receive } \mathit{\nu_{astream}} \textit{from right} \tag{ii-$\mathtt{Ping}$}\\ & \phantom{small sp} \mathbf{case}\, Rw\ ( &&\notag \% \  \textit{receive a label from right} \\ &\phantom{small space times} \mathit{head} \Rightarrow Rw.\mu_{ack};&&   \% \  \textit{send}\ \mu_{ack} \ \textit{to right} \tag{iii-$\mathtt{Ping}$}\\ & \phantom{small space times two plus on }Rw.\mathit{ack}; \notag && \% \  \textit{send label}\ \mathit{ack}\ \textit{to right}\\ & \phantom{small space times two plus one an } w\leftarrow \mathtt{Ping}\leftarrow x && {\color{red} \% \  \textit{recursive call}} \notag \\ 
 & \phantom{small space times } \mid \mathit{tail} \Rightarrow w \leftarrow  \mathtt{Ping}\leftarrow x)) \notag&& {\% \  \textit{recursive call}}
\end{align}

(i) Program $\mathcal{P}_6$ starting from $\mathtt{PingPong}$, spawns a new process $\mathtt{Ping}$ and continues as $\mathtt{Pong}$:\\ \indent (ii-$\mathtt{Pong}$) Process $\mathtt{Pong}$ sends an $\mathsf{astream}$ unfolding and then a $\mathit{head}$ message to the left, and then (iii-$\mathtt{Pong}$) {\it waits} for an acknowledgment, i.e., $\mathit{ack}$, from the left.\\ \indent
(ii-$\mathtt{Ping}$) At the same time process $\mathtt{Ping}$ {\it waits} for an $\mathsf{astream}$ fixed point unfolding message from the right, which becomes available after step (ii-$\mathtt{Pong}$). Upon receiving the message, it waits to receive either $\mathit{head}$ or $\mathit{tail}$ from the right, which is also available from (ii-$\mathtt{Pong}$) and is actually a $\mathit{head}$. So (iii-$\mathtt{Ping}$) it continues with the path corresponding to $\mathit{head}$, and acknowledges receipt of the previous messages by sending an unfolding messages and the label $\mathit{ack}$ to the right, and then it calls itself (ii-$\mathtt{Ping}$).\\
\indent (iv-$\mathtt{Pong}$) Process $\mathtt{Pong}$ now receives the two messages sent at (iii-$\mathtt{Ping}$) and thus can continue by sending a $\mathsf{nat}$ unfolding message and the label $\mathit{s}$ to the right, and finally calling itself (ii-$\mathtt{Pong}$).\\
Although both recursive processes $\mathtt{Ping}$ and $\mathtt{Pong}$ at some point {\it wait} for a fixed point unfolding message, this program runs infinitely without receiving any messages from the outside, and thus is not reactive.
\end{example}

The back-and-forth exchange of fixed point unfolding messages between two processes in the previous example can arise when at least two mutually recursive types with different polarities are in the signature. To avoid such non-reactive behavior, we need to incorporate priorities of the type variables into the validity checking algorithm and track both sending and receiving of the unfolding messages.

\begin{remark} \label{*}
 In Example~\ref{begen1}, for instance, waiting to {\it receive} an unfolding message $\nu_{astream}$ of priority 2 in line (ii-$\mathtt{Ping}$) is not enough to ensure validity of the recursive call because later in line (iii-$\mathtt{Ping}$) the process {\it sends} an unfolding message of a higher priority 1.
\end{remark}

To preclude such a call we form a list for each process. This list stores the information of the fixed point unfolding messages that the process received and sent before a recursive call for each type variable in their order of priority.

\begin{example}\label{c}
Consider the signature and program $\mathcal{P}_6$ as defined in Example~\ref{begen1}. For the process $x:\mathsf{nat} \vdash w \leftarrow \mathtt{Ping} \leftarrow x:: (w:\mathsf{astream})$ form the list
\[[\mathsf{ack}-\mathit{received}, \mathsf{ack}-\mathit{sent}, \mathsf{astream}-\mathit{received}, \mathsf{astream}-\mathit{sent}, \mathsf{nat}-\mathit{received}, \mathsf{nat}-\mathit{sent} ].\]
Types with positive polarity, i.e., $\mathsf{ack}$ and $\mathsf{nat}$, receive messages from the left channel ($x$) and send messages to the right channel ($w$), while those with negative polarity, i.e., $\mathsf{astream}$, receive from the right channel ($w$) and send to the left one ($x$). Thus, the above list can be rewritten as
\[[x_\mathsf{ack}, w_{\mathsf{ack}}, w_{\mathsf{astream}}, x_{\mathsf{astream}}, x_\mathsf{nat}, w_{\mathsf{nat}}].\]
To keep track of the sent/received messages, we start with $[0,0,0,0,0,0]$ as the value of the list, when the process $x:\mathsf{nat} \vdash \mathtt{Ping} :: (w:\mathsf{astream})$ is first spawned. Then, similar to the first version of our algorithm, on the steps in which the process {\it receives} a fixed point unfolding message, the value of the corresponding element of the list is {\it decreased by one}. And on the steps it {\it sends} a fixed point unfolding message, the corresponding value is {\it increased by one}:\\
 \begin{align*}
 & w \leftarrow \mathtt{Ping} \leftarrow x= & \phantom{caseRw}   [0,0 \ ,0\ ,0, 0, 0]\\
& \phantom{\mathtt{Ping}=} \mathbf{case}\, Rw\ (\nu_{astream} \Ra & [0,0,{\color{red}-1},0, 0 ,0] \\  &  \phantom{caseRw (\nu_{astream})} \mathbf{case}\, Rw\ ( \mathit{head} \Rightarrow Rw.\mu_{ack};  &  [0,{\color{red}1},-1,0, 0, 0]\\
&  \phantom{caseRw (\nu_{astream} \Ra caseRw ( head \ \ }  Rw.\mathit{ack}; w\leftarrow \mathtt{Ping} \leftarrow x & {\color{blue} [0,1,-1,0, 0, 0]}\\
&   \phantom{caseRw (\nu_{astream} \Ra casex} \mid\mathit{tail} \Ra w\leftarrow \mathtt{Ping} \leftarrow x))& [0,0,-1,0, 0, 0]
 \end{align*}
The two last lines are the values of the list on which process $\mathtt{Ping}$ calls itself recursively.
The validity condition as described in Remark~\ref{*} holds iff the value of the list at the time of the recursive call is less than the value the process started with, in lexicographical order.
    Here, for example, ${\color{blue} [0,1,-1,0]} \not < [0,0,0,0]$, and the validity condition does not hold for this recursive call. 
    
    We leave it to the reader to verify that no matter how we assign priorities of the type variables in $\Sigma_4$, our condition rejects $\mathtt{PingPong}$.
\end{example}

The following definition captures the idea of forming lists described above.  Rather than directly referring to type variables such as $\mathsf{ack}$ or $\mathsf{astream}$ we just refer to their priorities, since that is the relevant information.

\begin{definition}\label{list}
For a process \[ \bar{x}:\omega \vdash P :: (y:B),\] over the signature $\Sigma$, define $list(\bar{x},y)= [f_i]_{i \le n}$ such that
\begin{enumerate}
    \item $f_i= ( \bar{x}_i, y_i )$ if $\epsilon(i)= \mu$, and
    \item $f_i= (y_i, \bar{x}_i)$ if $\epsilon(i)=\nu,$
\end{enumerate}
where $n$ is the lowest priority in $\Sigma.$
\end{definition}

In the remainder of this section we use $n$ to denote the lowest priority in $\Sigma$ (which is numerically maximal).

\begin{example}\label{easy}
Consider the signature $\Sigma_1$ and program $\mathcal{P}_3:= \langle \{\mathtt{Copy}\}, \mathtt{Copy} \rangle$, from Example~\ref{prex}:\\$ \Sigma_1 :=  \mathsf{nat}=^{1}_{\mu} \oplus\{ \mathit{z}:\mathsf{1}, \mathit{s}:\mathsf{nat}\},$ and
\begin{align*}
 y \leftarrow \mathtt{Copy} \leftarrow x = \mathbf{case}\,Lx\ (\mu_{nat} \Ra \mathbf{case}\, Lx\ &(\ z\Ra Ry.\mu_{nat}; Ry.z; \mathbf{wait}\, Lx; \mathbf{close}\, Ry  \\
 & \mid s \Rightarrow Ry.\mu_{nat}; Ry.s; y \leftarrow \mathtt{Copy} \leftarrow x)) 
\end{align*}
By Definition~\ref{list}, for process $ x:\mathsf{nat} \vdash \mathtt{Copy} :: (y:\mathsf{nat})$, we have $n=1$, and $list(x,y)= [(x_1,y_1)]$ since $\epsilon(1)= \mu$.  Just as for the naive version of the algorithm, we can trace the value of $list(x,y)$:
 \begin{align*}
 y \leftarrow \mathtt{Copy} \leftarrow x= & \phantom{caseLx (\mu_{nat}}  & [0,0]\\
\phantom{X=}& \mathbf{case}\, Lx\ (\mu_{nat} \Ra & [{\color{red}-1},0] \\ \phantom{X=} &  \phantom{caseLx (\mu_{nat}} \mathbf{case}\, Lx\ ( \mathit{z} \Rightarrow Ry.\mu_{nat};  &  [-1,{\color{red}1}]\\
\phantom{X=}&   \phantom{caseLx (\mu_{nat} \Ra caseLx ( z \Rightarrow}  R.z; \mathbf{wait}\, Lx; \mathbf{close}\, Ry & { [-1,1]}\\
\phantom{X=}&   \phantom{caseLx (\mu_{nat} \Ra case} \mid \mathit{s} \Ra  Ry.\mu_{nat}; & [-1,{\color{red}1}]\\
\phantom{X=}&   \phantom{caseLx (\mu_{nat} \Ra caseLx  \mathit{s} \Ra  } Ry.s;y\leftarrow \mathtt{Copy} \leftarrow x & {\color{blue}[-1,1]}
 \end{align*}
 Here, ${\color{blue}[-1,1]} <[0,0]$ and the recursive call is classified as valid.
\end{example}

To capture the idea of {\it decreasing}/{\it increasing} the value of the elements of $list (\_,\_)$ by {\it one}, as depicted in Example~\ref{c} and Example~\ref{easy}, we assume that a channel transforms into a new generation of itself after sending or receiving a fixed point unfolding message.

\begin{example}\label{su}
Process $ x:\mathsf{nat} \vdash y\leftarrow \mathtt{Copy} \leftarrow x :: (y:\mathsf{nat})$ in Example~\ref{easy} starts its computation with the initial generation of its left and right channels: \[x^0:\mathsf{nat} \vdash y^0 \leftarrow \mathtt{Copy} \leftarrow x^0 :: (y^0:\mathsf{nat}).\] The channels evolve as the process sends or receives a fixed point unfolding message along them:
\begin{align*}
 y^0 \leftarrow \mathtt{Copy}\leftarrow x^0 & =  \phantom{caseL{x^0} (\mu_{nat}} &  \\
\phantom{X=}& \mathbf{case}\, L{x^0}\ (\mu_{nat} \Ra & {\color{red} x^0 \rightsquigarrow x^1} \\ \phantom{X=} &  \phantom{caseL {x^1} (\mu_{nat}} \mathbf{case}\, L{x^1}\ (\mathit{z} \Rightarrow R{y^0}.\mu_{nat};& {\color{red} y^0 \rightsquigarrow y^1} \\
\phantom{X=}&   \phantom{caseL{x^1} (\mu_{nat} \Ra caseL{x^1} ( z \Rightarrow}  R{y^1}.z; \mathbf{wait}\, L{y^1}; \mathbf{close}\, R{x^1}& \\
\phantom{X=}&   \phantom{caseL{x^0} (\mu_{nat} \Ra case}  \mid \mathit{s} \Ra  R{y^0}.\mu_{nat}; & {\color{red} y^0 \rightsquigarrow y^1}\\
\phantom{X=}&   \phantom{caseL{x^0} (\mu_{nat} \Ra caseL{x^1}  \mathit{s} \Ra  } R{y^1}.s;y^1 \leftarrow \mathtt{Copy} \leftarrow x^1))
 \end{align*}
  On the last line the process \[x^1:\mathsf{nat} \vdash y^1 \leftarrow \mathtt{Copy} \leftarrow x^1 :: (y^1:\mathsf{nat})\] is called recursively with a new generation of variables.
 \end{example}
 
In the inference rules introduced in Section~\ref{rules}, instead of recording the value of each element of $list(\ \_,\_)$ as we did in Example~\ref{c} and Example~\ref{easy}, we introduce $\Omega$ to track the relation between different generations of a channel indexed by their priority of types.

 \begin{remark}
 Generally speaking, $x^{\alpha+1}_{i}<x^{\alpha}_{i}$ is added to $\Omega$ when $x^\alpha$ {\it receives} a fixed point unfolding message for a type with priority $i$ and transforms to $x^{\alpha+1}$.  This corresponds to the {\it decrease by one} in the previous examples.
 
 If $x^\alpha$ {\it sends} a fixed point unfolding message for a type with priority $i$ is \emph{sent} on $x^\alpha$, which then evolves to $x^{\alpha+1}$, $x^{\alpha}_i$ and $x^{\alpha+1}_i$ are considered to be incomparable in $\Omega$. This corresponds to {\it increase by one} in the previous examples, since for the sake of lexicographically comparing the value of $list(\ \_,\_)$ at the {\it first call} of a process to its value just before a {\it recursive call}, there is no difference whether $x^{\alpha+1}$ is greater than $x^\alpha$ or incomparable to it.
 
When $x^\alpha$ receives/sends a fixed point unfolding message of a type with priority $i$ and transforms to $x^{\alpha+1}$, for any type with priority $j\neq i$, the value of $x^{\alpha}_j$ and $x^{\alpha+1}_j$ must remain equal. In these steps, we add $x^{\alpha}_j = x^{\alpha+1}_j$ for $j\neq i$ to $\Omega$.
 \end{remark}
 
 A process in the formalization of the intuition above is therefore typed as \[ x^\alpha:A \vdash_{\Omega} P :: (y^{\beta}:B),\] 
where $x^\alpha$ is the $\alpha$-th generation of channel $x$. The syntax and operational semantics of the processes with generational channels are the same as the corresponding definitions introduced in Section~\ref{operat}; we simply ignore generations over the channels to match processes with the previous definitions. We enforce the assumption that channel $x^\alpha$ transforms to its next generation $x^{\alpha+1}$ upon sending/receiving a fixed point unfolding message in the typing rules of Section~\ref{rules}.

The relation between the channels indexed by their priority of types is built step by step in $\Omega$ and represented by $\le$. The reflexive transitive closure of $\Omega$ forms a partial order $\le_{\Omega}$. We extend $\le_{\Omega}$ to the {\it list} of channels indexed by the priority of their types considered lexicographically. We may omit subscript $\Omega$ from $\le_{\Omega}$ whenever it is clear from the context. In the next examples, we present the set of relations $\Omega$ in the rightmost column.

The reader may refer to Figure~\ref{fig:stp-order} for the typing rules of processes enriched with generational channels and the set Omega. The cut rule presented in Figure~\ref{fig:stp-order} will be explained in detail later in Section~\ref{modcut}. For now, the reader may consider the following simplified version of the cut rule instead:

\[\infer[\msc{Cut}^{w}]{ \bar{x}^{\alpha}: \omega \vdash_{\Omega}  (w \leftarrow P_{w} ; Q_{w}) :: (y^{\beta}: C)}{ \bar{x}^{\alpha}:  \omega \vdash_{\Omega} P_{w^0} ::(w^0:A) & w^0: A \vdash_{\Omega} Q_{w^0} :: (y^{\beta}: C)}\]
\section{Mutual Recursion in the Local Validity Condition}
\label{mutual}

In examples of previous sections, the recursive calls were not \emph{mutual}. In the general case, a process may call any other process variable in the program, and this call can be mutually recursive. In this section, we incorporate mutual recursive calls into our algorithm.

\begin{example}\label{simplem}
Recall signature $\Sigma_4$ from Example~\ref{begen1}
\[
\begin{aligned}
 \Sigma_4:=\  & \mathsf{ack}=^{1}_{\mu} \oplus\{ \mathit{ack}:\mathsf{astream}\},\\
& \mathsf{astream}=^{2}_{\nu} \& \{\mathit {head}: \mathsf{ack}, \ \ \mathit{tail}: \mathsf{astream}\},\\
& \mathsf{nat}=^{3}_{\mu}\oplus \{\mathit{z}:1,\ \  \mathit{s}: nat\}
 \end{aligned}
\]
Define program $\mathcal{P}_7= \langle \{\mathtt{Idle}, \mathtt{Producer}\}, \mathtt{Producer} \rangle$, where
\[
\begin{array}{l}
z:\mathsf{ack} \vdash w \leftarrow \mathtt{Idle}\leftarrow z :: (w:\mathsf{nat})\\
x:\mathsf{astream} \vdash y \leftarrow \mathtt{Producer} \leftarrow x :: (y:\mathsf{nat}), 
\end{array}
\]
and processes $\mathtt{Idle}$ (or simply $\mathtt{I}$)  and $\mathtt{Producer}$ (or simply $\mathtt{P}$ ) are defined as:
\[
\begin{array}{l}
w \leftarrow \mathtt{I}\leftarrow z  = \mathbf{case}\, Lz\ (\mu_{ack} \Ra \mathbf{case}\, L{z}\ ( \mathit{ack}\Ra Rw.\mu_{nat}; Rw.\mathit{s}; w \leftarrow \mathtt{P}\leftarrow z)) 
\\[1ex]
  y \leftarrow \mathtt{P} \leftarrow x  = Lx.\nu_{astream}; Lx. \mathit{head}; y \leftarrow \mathtt{I}\leftarrow x.
\end{array}
\]
We have $list(x,y)= [(x_1,y_1), (y_2,x_2), (x_3,y_3)]$ and $list(z,w)= [(z_1,w_1), (w_2,z_2), (z_3,w_3)]$ since $\epsilon(1)=\epsilon(3)= \mu$ and $\epsilon(2)= \nu$.

By analyzing the behavior of this program step by step, we see that it is a {\it reactive} program that counts the number of acknowledgements received from the left.
The program starts with the process $ x^0:\mathsf{astream} \vdash_{\emptyset} y^0 \leftarrow \mathtt{Producer} \leftarrow x^0 :: (y^0:\mathsf{nat}) $. It first sends one message to left to unfold the {\it negative} fixed point type, and its left channel evolves to a next generation. Then another message is sent to the left to request the $\mathit{head}$ of the stream and after that it calls  process $y^0 \leftarrow \mathtt{Idle} \leftarrow x^1$.
 \begin{align*}
 & y^0 \leftarrow \mathtt{Producer}\leftarrow x^0= & \phantom{Lx^0.\nu_{astream}; }    \phantom{L.\nu } [0,0,0,0,0,0]& \phantom{L.\nu }\\
&\phantom{sma} Lx^0.\nu_{astream};  & \phantom{L.\nu } [0,0,0,1,0,0] & \phantom{L.\nu } x^1_1=x^0_1, x^1_3=x^0_3 \\ &  \phantom{small sp} Lx^1.\mathit{head};y^0 \leftarrow \mathtt{Idle} \leftarrow x^1  &\phantom{L.\nu } {\color{blue} [0,0,0,1,0,0]}& 
 \end{align*}
 Process $x^1:\mathsf{ack} \vdash y^0 \leftarrow \mathtt{Idle} \leftarrow x^1 :: (y^0:\mathsf{nat}) $, then {\it waits} to receive an acknowledgment from the left via a {\it positive} fixed point unfolding message for $\mathsf{ack}$ and its left channel transforms into a new generation upon receiving it. Then it waits for the label $\mathit{ack}$ and, upon receiving it, sends one message to the right to unfold the {\it positive} fixed point $\mathsf{nat}$ (and this time the right channel evolves). Then it sends the label $\mathit{s}$ to the right and calls $y^1 \leftarrow \mathtt{Producer}\leftarrow x^2$ recursively:
 
\begin{small}
 \begin{align*} 
 & y^0 \leftarrow \mathtt{Idle}\leftarrow x^1= & \phantom{caseLx^1} {\color{blue}[0, 0,\ 0,\ 1,\ 0, 0]} &  \phantom{L}\\
 &\phantom{sma} \mathbf{case}\, Lx^1\ (\mu_{ack} \Ra & [-1,0,0,1,0,0] &  \phantom{L} x^2_1 < x^1_1, x^2_2=x^1_2, x^2_3=x^1_3\\  &  \phantom{small sp} \mathbf{case}\, Lx^2\ ( \mathit{ack} \Rightarrow Ry^0.\mu_{nat};  & [-1,0,0,1,0,1] &  \phantom{Lx^0.\nu} y^1_1=y^0_1, y^1_2=y^0_2\\
&   \phantom{small space  }  Ry^1.\mathit{s}; y^1 \leftarrow \mathtt{Producer} \leftarrow x^2)) & {\color{red} [-1,0,0,1,0,1]}  &\phantom{L.}\\
\end{align*}
\end{small}%
Observe that the actual recursive call for $\mathtt{Producer}$ occurs at the last line (in red) above, where $\mathtt{Producer}$ eventually calls itself. At that point the value of $list(x^2,y^1)$ is recorded as ${\color{red} [-1,0,0,1,0,1]}$, which is less than the value of $list(x^0,y^0)$ when $\mathtt{Producer}$ was called for the first time:  \[{\color{red} [-1,0,0,1,0,1]} < {[0,0,0,0,0,0]}.\]
The same observation can be made by considering the relations introduced in the last column \[{\color{red}list(x^2,y^1)}={\color{red} [(x^2_1,y^1_1), (y^1_2,x^2_2), (x^2_3,y^1_3)]}< [(x^0_1,y^0_1), (y^0_2,x^0_2), (x^0_3,y^0_3)]= list(x^0,y^0)\] since  $x^2_1 < x^1_1= x^0_1$.
This recursive call is valid regardless of the fact that ${\color{blue} [0,0,0,1,0,0]} \not < {[0,0,0,0,0,0]},$ i.e.
\[{\color{blue}list(x^1,y^0)}={\color{blue} [(x^1_1,y^0_1), (y^0_2,x^1_2), (x^1_3,y^0_3)]}\not < [(x^0_1,y^0_1), (y^0_2,x^0_2), (x^0_3,y^0_3)]= list(x^0,y^0)\] since  $x^1_1= x^0_1$ but $x^1_2$ is incomparable to $x^0_2$. Similarly, we can observe that the actual recursive call on $\mathtt{Idle}$, where $\mathtt{Idle}$ eventually calls itself, is valid. 

To account for this situation, we introduce an order on {\it process variables} and trace the last seen variable on the path leading to the recursive call. In this example we define $\mathtt{Idle}$ to be less than $\mathtt{Producer}$ at position $2$ ($\mathtt{I} \subset_{2} \mathtt{P}$), i.e.:
\begin{quote} We incorporate process variables $\mathtt{Producer}$ and $\mathtt{Idle}$ into the lexicographical order on $list(\_,\_)$ such that their values are placed exactly before the element in the list corresponding to the {\it sent} unfolding messages of the type with priority $2$.
\end{quote}
We now trace the ordering as follows:

\begin{small}
 \begin{align*} 
 & y^0 \leftarrow \mathtt{Producer}\leftarrow x^0= & \phantom{L }    [0,0,0,\mathtt{P},0,0,0] & \phantom{L.\nu } \\
& \phantom{small } Lx^0.\nu_{astream}; & \phantom{L } [0,0,0,\mathtt{P},1,0,0] & \phantom{Lx^0.\nu } x^1_1=x^0_1, x^1_3=x^0_3 \\ &  \phantom{Lx^0.\nu_{strea}} Lx^1.\mathit{head};y^0 \leftarrow \mathtt{Idle} \leftarrow x^1  & \phantom{L }  {\color{blue} [0,0,0,\mathtt{I},1,0,0]} & \phantom{L.\nu }\\ \\
 & y^0 \leftarrow \mathtt{Idle} \leftarrow x^1= & \phantom{L}  {\color{blue}[0,0,0,\mathtt{I},1,0,0]} & \phantom{L.\nu} \\
& \phantom{small} \mathbf{case} \, Lx^1 (\mu_{ack} \Ra & [-1,0,0,\mathtt{I},1,0,0] &  \phantom{L.\nu} x^2_1<x^1_1, x^2_2=x^1_2, x^2_3=x^1_3\\  &  \phantom{small spa }   \mathbf{case} \,Lx^2 ( \mathit{ack} \Rightarrow Ry^0.\mu_{nat};  & [-1,0,0,\mathtt{I},1,0,1] &  \phantom{L.\nu} y^1_1=y^0_1, y^1_2=y^0_2 \\
&   \phantom{caseeLx^1 ( label }  Ry^1.\mathit{s};  y^1 \leftarrow \mathtt{Producer} \leftarrow x^2  &  {\color{red} [-1,0,0,\mathtt{P},1,0,1]} & \\
\end{align*}
\end{small}%

\noindent ${\color{red} [-1,0,0,\mathtt{P},1,0,1]} < {[0,0,0,\mathtt{I},1,0,0]}$ and ${\color{blue} [0,0,0, \mathtt{I}, 1,0,0]} < {[0,0,0,\mathtt{P},0,0,0]}$ hold, and both mutually recursive calls are recognized to be valid, as they are, without a need to substitute process definitions.
\end{example}
However, not every relation over the process variables forms a partial order. For instance, having both $\mathtt{P} \subset_{2} \mathtt{I}$ and $\mathtt{I} \subset_{2} \mathtt{P}$ violates the antisymmetry condition. Introducing the position of process variables into $list(\_,\_)$ is also a delicate issue. For example, if we have both $\mathtt{I} \subset_{1} \mathtt{P}$ and $\mathtt{I} \subset_{2} \mathtt{P}$, it is not determined where to insert the value of $\mathtt{Producer}$ and $\mathtt{Idle}$ on the $list(\_,\_)$. Definition~\ref{subs} captures the idea of Example~\ref{simplem}. It defines the relation $\subseteq$, given that the programmer introduces a family of partial orders such that their domains partition the set of process variables $V$. We again assume that the programmer defines this family based on the intuition of why a program satisfies strong progress. Definition~\ref{subs} ensures that $\subseteq$ is a well-defined partial order and it is uniquely determined in which position of $list(\_,\_)$ the process variables shall be inserted. Definition~\ref{defmain} gives the lexicographic order on $list(\_,\_)$  augmented with the $\subseteq$ relation.

\begin{definition}\label{subs}
Consider a program $\mathcal{P}=\langle V,S \rangle$ defined over a signature $\Sigma$. Let $\{\subseteq_i\}_{0 \le i \le n}$ be a disjoint family of partial orders whose domains partition the set of process variables $V$, where (a) $X \cong_{i}Y$ iff $X \subseteq_{i} Y$ and $Y \subseteq_{i} X$, and (b) $X \subset_{i}Y$ iff $X\subseteq_{i}Y$ but $X \not \cong_{i} Y$. 

We define $\subseteq$ as  $\bigcup_{i \le n} \subseteq_i$, i.e. $F \subseteq G$ iff $F \subseteq_{i} G$ for some (unique) $i \le n$. It is straightforward to see that $\subseteq$ is a partial order over the set of process variables $V$. Moreover, we define
(c) $X \cong Y$ iff $X \cong_{i} Y$ for some (unique) $i$, and (d) $X \subset Y$ iff $X\subset_{i}Y$  for some (unique) $i$.
\end{definition}

To integrate the order on process variables ($\subset$) with the order $<$, we need a prefix of the list from Definition~\ref{list}. We give the following definition of  $list(x,y,j)$ to crop $list(x,y)$ exactly before the element corresponding to a {\it sent} fixed point unfolding message for types with priority $j$. 
\begin{definition}
For a process \[\bar{x}:A \vdash P :: y:B,\] over signature $\Sigma$,
 and $0 \le j\le n$, define $list(\bar{x},y,j)$, as a prefix of the list $list(\bar{x},y)=[v_i]_{i \le n}$ by
\begin{enumerate}
\item $[]$ if $i=0$,
    \item $[ [v_i]_{ i < j},\ (\bar{x}_j)]$ if $\epsilon(j)=\mu$,
    \item $[[v_i]_{ i < j},\ (y_j)]$ if $\epsilon(j)=\nu$.
\end{enumerate}
\end{definition}
We use these prefixes in the following definition.
\begin{definition} \label{defmain}
Using the orders $\subset$ and $\le$, we define a new combined order $(\subset, <)$ (used in the local validity condition in Section~\ref{alg}). \[F,list(\bar{x},y)\ (\subset, <)\  G,list(\bar{z},w)\ \]
iff
\begin{enumerate}
    \item If $F \subset G$, i.e., $F \subset_i G$ for a unique $i$, then $list(\bar{x},y,i) \le list(\bar{z},w, i)$, otherwise,
    \item if $F \cong G$ and $list(\bar{x},y) < list(\bar{z},w)$, otherwise
    \item$list(\bar{x},y,\min(i,j)) < list(\bar{z},w,\min(i,j))$, where $F$ is in the domain of $\subseteq_{i}$ and $G$ is in the domain of $\subseteq_{j}$.
 \end{enumerate}
By conditions of Definition~\ref{subs}, $(\subset, <)$ is an irreflexive and transitive relation and thus a strict partial order.
\end{definition}\label{algn}
\begin{example}
Consider the signature of Example~\ref{simplem}
\[\begin{aligned}
 \Sigma_4 :=\ & \mathsf{ack}=^{1}_{\mu} \oplus\{ \mathit{ack}:\mathsf{astream}\},\\
& \mathsf{astream}=^{2}_{\nu} \& \{\mathit {head}: \mathsf{ack}, \ \ \mathit{tail}: \mathsf{astream}\},\\
& \mathsf{nat}=^{3}_{\mu}\oplus \{\mathit{z}:1,\ \  \mathit{s}: nat\}
 \end{aligned}\]
\renewcommand{\arraystretch}{1}%
and program $\mathcal{P}_7:=\langle \{\mathtt{Idle}, \mathtt{Producer}\}, \mathtt{Producer} \rangle$ with the relation $\subseteq$ defined over process variables as $\mathtt{Idle} \subset_2 \mathtt{Producer}$, $\mathtt{Producer} \subseteq_2 \mathtt{Producer}$, and $\mathtt{Idle} \subseteq_2 \mathtt{Idle}$.
For process $x:\mathsf{astream} \vdash \mathtt{Producer} :: (y:\mathsf{nat})$:
\[
\begin{array}{l}
list(x,y)=[(x_1,y_1),(y_2,x_2), (x_3,y_3)],\\
list(x,y,3)=[(x_1,y_1),(y_2, x_2), (x_3)],\\
list(x,y,2)=[(x_1,y_1),(y_2)],\\
list(x,y,1)=[(x_1)],\ \mathit{and}\\
list(x,y,0)=[].
\end{array}
\]
To check the validity of the recursive calls in Example~\ref{simplem} we observe that
\begin{itemize}
    \item $\mathtt{Producer}, list(x^2,y^1) \,(\subset, <)\, \mathtt{Idle}, list(x^1,y^0)$ since $list(x^2,y^1,2)<list(x^1,y^0,2)$, and
       \item $\mathtt{Idle}, list(x^1,y^0) \,(\subset, <)\, \mathtt{Producer}, list(x^0,y^0)$ since $list(x^1,y^0,2)=list(x^0,y^0,2)$ and\\ ${\mathtt{Idle} \subset_2 \mathtt{Producer}}$.\qedhere
\end{itemize}
\end{example}
\section{A Modified Rule for Cut}
\label{modcut}

There is a subtle aspect of local validity that we have not discussed yet. We need to relate a fresh channel, created by spawning a new process, with the previously existing channels. Process $y^\alpha:A \vdash (x\leftarrow P_x; Q_x):: (z^\beta:B)$, for example, creates a fresh channel $w^0$, spawns process
\mbox{$P_{w^0}$} providing along channel $w^0$, and then continues as $Q_{w^0}$. For the sake of our algorithm, we need to identify the relation between $w^0$, $y^\alpha$, and $z^\beta$. Since $w^0$ is a fresh channel, a naive idea is to make $w^0$ incomparable to any other channel for any type variable $t\in \Sigma$. To represent this incomparability in our examples we write ``$\infty$'' for the value of the fresh channel. While sound, we will see in Example~\ref{cutchannel} that we can improve on this naive approach to cover more valid processes.

\begin{example}\label{cutchannel}
Define the signature
\[\begin{aligned}
 \Sigma_5:=\ & \mathsf{ctr}=^{1}_{\nu} \& \{\mathit {inc}: \mathsf{ctr}, \ \ \mathit{val}: \mathsf{bin}\},\\
 & \mathsf{bin}=^{2}_{\mu} \oplus\{ \mathit{b0}:\mathsf{bin}, \mathit{b1}:\mathsf{bin}, \mathit{\$}:\mathsf{1}\}.\\
 \end{aligned}
\]
which provides numbers in binary representation as well as an interface to a counter.
We explore the following program $\mathcal{P}_8= \langle \{\mathtt{BinSucc},\mathtt{Counter},\mathtt{NumBits},\mathtt{BitCount}\}, \mathtt{BitCount} \rangle,$ where
\[
\begin{array}{l}
x:\mathsf{bin} \vdash y \leftarrow \mathtt{BinSucc} \leftarrow x :: (y:\mathsf{bin})\\ 
x:\mathsf{bin} \vdash y \leftarrow \mathtt{Counter}\leftarrow x :: (y:\mathsf{ctr})\\
x:\mathsf{bin} \vdash y \leftarrow \mathtt{NumBits} \leftarrow x :: (y:\mathsf{bin}) \\
x:\mathsf{bin} \vdash y \leftarrow \mathtt{BitCount}\leftarrow x :: (y:\mathsf{ctr})\\
\end{array}
\]
We define the relation $\subset$ on process variables as $\mathtt{BinSucc} \subset_{0} \mathtt{Counter} \subset_{0} \mathtt{BitCount}$ and $\mathtt{BinSucc} \subset_{0} \mathtt{NumBits} \subset_{0} \mathtt{BitCount}$.
The process definitions are as follows, shown here already with their termination analysis.
\begin{small}
  \begin{align*}
 & w^{\beta} \leftarrow \mathtt{BinSucc}\leftarrow z^{\alpha}= &  {\color{blue}[0, 0,\ 0,\ 0]} &  \phantom{L}\\
 &\phantom{s} \mathbf{case}\, Lz^{\alpha}\ (\mu_{bin} \Ra & [0,0,-1,0] &  \phantom{L} z^{\alpha+1}_1=z^\alpha_1, z^{\alpha+1}_2 < z^{\alpha}_2\\  &  \phantom{small}     \mathbf{case}\, Lz^{\alpha+1}\ ( \mathit{b0} \Rightarrow Rw^{\beta}.\mu_{bin};   & {[0,0,-1,1]} & { \phantom{Lz^0.\nu} w^{\beta+1}_1=w^{\beta}_1}\\
&   \phantom{small space  more than}  Rw^{\beta+1}.\mathit{b1}; w^{\beta+1} \leftarrow z^{\alpha+1} & {[0,0,-1,1]} \\
&  \phantom{small space morex}    \mid \mathit{b1} \Rightarrow Rw^{\beta}.\mu_{bin};   & {[0,0,-1,1]} & { \phantom{Lz^0.\nu} w^{\beta+1}_1=w^{\beta}_1}\\
&   \phantom{small space  more than}  Rw^{\beta+1}.\mathit{b0};  w^{\beta+1} \leftarrow \mathtt{BinSucc} \leftarrow z^{\alpha+1} & { \color{red}[0,0,-1,1]} \\
&  \phantom{small space morex}    \mid \mathit{\$}\Rightarrow Rw^{\beta}.\mu_{\mathsf{bin}}; Rw^{\beta+1}.b1;\ & { [0,0,-1,1]} & { \phantom{Lz^0.\nu} w^{\beta+1}_1=w^{\beta}_1}\\
&  \phantom{small space more than} Rw^{\beta+1}.\mu_{\mathsf{bin}}; Rw^{\beta+2}.\$;\ w^{\beta+2} \leftarrow z^{\alpha+1})) & { [0,0,-2,2]} & { \phantom{Lz^0.\nu} w^{\beta+2}_1=w^{\beta+1}_1}\\[10pt]
  & y^{\beta} \leftarrow \mathtt{Counter}\leftarrow w^{\alpha}= &  {\color{blue}[0, 0,\ 0,\ 0]} \\
 &\phantom{s} \mathbf{case}\, Ry^{\beta}\ (\nu_{ctr} \Ra & [-1,0,0,0] &  \phantom{L} y^{\beta+1}_1<y^\beta_1, y^{\beta+1}_2 = y^{\beta}_2\\  &  \phantom{small}     \mathbf{case}\, Ry^{\beta+1}\ ( \mathit{inc} \Rightarrow z^{0} \leftarrow \mathtt{BinSucc} \leftarrow w^{\alpha};  & & \phantom{L}{\mathtt{BinSucc} \subset_{0} \mathtt{Counter}} \\
&   \phantom{small space  more than}  y^{\beta+1} \leftarrow \mathtt{Counter} \leftarrow z^{0} & {\color{red} [-1, \infty, \infty ,0]}  &\phantom{L.}\\
&  \phantom{small space morex}    \mid \mathit{val} \Rightarrow y^{\beta+1} \leftarrow w^\alpha))  & { [-1,0,0,0]} 
\end{align*}\\[-10pt]
\begin{align*}
 & w^{\beta} \leftarrow \mathtt{NumBits}\leftarrow x^{\alpha}= & {\color{blue}[0, 0,\ 0,\ 0]} \\
 &\phantom{s} \mathbf{case}\, Lx^{\alpha}\ (\mu_{bin} \Ra & [0,0,-1,0] &  \phantom{L} x^{\alpha+1}_1=x^\alpha_1, x^{\alpha+1}_2 < x^{\alpha}_2\\  & \phantom{small}     \mathbf{case}\, Lx^{\alpha+1}\ (\mathit{b0} \Rightarrow z^{0} \leftarrow \mathtt{NumBits} \leftarrow x^{\alpha+1};   & { \color{red}[?,0,-1,?]} & {\color{red} \phantom{Lx^0.\nu } z^0_1\stackrel{?}{=}w^{\beta}_1, z_2^0\stackrel{?}{=}w^{\beta}_2}\\
&   \phantom{small space  more than}  w^{\beta} \leftarrow \mathtt{BinSucc} \leftarrow z^{0} &  &\phantom{L} \mathtt{BinSucc} \subset_{0} \mathtt{NumBits}\\
&  \phantom{small space more}    \mid \mathit{b1} \Rightarrow z^{0} \leftarrow \mathtt{NumBits} \leftarrow x^{\alpha+1};   & { \color{red}[?,0,-1,?]} & {\color{red} \phantom{Lx^0.\nu } z^0_1\stackrel{?}{=}w^{\beta}_1, z_2^0\stackrel{?}{=}w^{\beta}_2}\\
&   \phantom{small space  more than}  w^{\beta} \leftarrow \mathtt{BinSucc} \leftarrow z^{0} &   &\phantom{L1} \mathtt{BinSucc} \subset_{0} \mathtt{NumBits}\\
&  \phantom{small space more}    \mid \mathit{\$}\Rightarrow Rw^{\beta}.\mu_{\mathsf{bin}}; Rw^{\beta+1}.\$;\ w^{\beta+1} \leftarrow x^{\alpha+1})) & { [0,0,-1,1]} & 
\end{align*}
$\;\;\, y^{\beta} \leftarrow \mathtt{BitCount} \leftarrow x^{\alpha} = w^{0}\leftarrow \mathtt{NumBits}\leftarrow x^{\alpha}; y^{\beta}\leftarrow \mathtt{Counter}\leftarrow w^{0}$\\
\end{small}%

\noindent The program starts with process $\mathtt{BitCount}$ which creates a fresh channel $w^0$, spawns a new process $w^0 \leftarrow \mathtt{NumBits} \leftarrow x^\alpha$, and continues as $y^\beta \leftarrow \mathtt{Counter} \leftarrow w^0$.

Process $y^\beta \leftarrow \mathtt{Counter} \leftarrow w^\alpha$ as its name suggests works as a counter where $w : \mathit{bin}$ is the current value of the counter.  When it receives an increment message $\mathit{inc}$ it computes the successor of $w$, accessible through channel $z$.  If it receives a $\mathit{val}$ message it simply forwards the current value ($w$) to the client ($y$). Note that in this process, both calls are valid according to the condition developed so far.  This is also true for the binary successor process $\mathtt{BinSucc}$, which presents no challenges.
The only recursive call represents the ``carry'' of binary addition when a number with lowest bit $\mathit{b1}$ has to be incremented.

The process $w^\beta \leftarrow \mathtt{NumBits} \leftarrow x^\alpha$ counts the number of bits in the binary number $x$ and sends the result along $w$, also in the form of a binary number.  It calls itself recursively for every bit received along $x$ and increments the result $z$ to be returned along $w$.  Note that if there are no leading zeros, this computes essentially the integer logarithm of $x$.
The process $\mathtt{NumBits}$ is reactive.   However with our approach toward spawning a new process, the recursive calls have the list value $[\infty, 0, -1, \infty]  \not < [0,0,0,0]$, meaning that the local validity condition developed so far fails.

Note that we cannot just define $z_1^0= w_1^\beta$ and $z_2^0=w_2^\beta$, or $z_1^0=z_2^0=0$. Channel $z^0$ is a fresh one and its relation with the future generations depends on how it evolves in the process $w^\beta \leftarrow \mathtt{BinSucc} \leftarrow z^0$. But by definition of type $\mathsf{bin}$, no matter how $z^0:\mathsf{bin}$ evolves to some $z^\eta$ in process $\mathtt{BinSucc}$, it won't be the case that $z^\eta: \mathsf{ctr}$. In other words, the type $\mathsf{ctr}$ is not visible from $\mathsf{bin}$ and for any generation $\eta$, channel $z^\eta$ does not send or receive a $\mathsf{ctr}$ unfolding message. So in this recursive call, the value of $z_1^\eta$ is not important anymore and we safely put $z_1^0=w^{\beta}_1$. In the improved version of the condition we have:

\begin{small}
 \begin{align*} 
 & w^{\beta} \leftarrow \mathtt{NumBits}\leftarrow x^{\alpha}= & {\color{blue}[0, 0,\ 0,\ 0]} &  \phantom{L}\\
 &\phantom{s} \mathbf{case}\, Lx^{\alpha}\ (\mu_{bin} \Ra & [0,0,-1,0] &  \phantom{L} x^{\alpha+1}_1=x^\alpha_1, x^{\alpha+1}_2 < x^{\alpha}_2\\  &  \phantom{small}     \mathbf{case}\, Lx^{\alpha+1}\ ( \mathit{b0} \Rightarrow z^{0} \leftarrow \mathtt{NumBits} \leftarrow x^{\alpha+1};   & { \color{red}[0,0,-1,\infty]} & {\color{red} \phantom{Lx^0.\nu} z^0_1=w^{\beta}_1 }\\
&   \phantom{small space  more than}  w^{\beta} \leftarrow \mathtt{BinSucc} \leftarrow z^{0} &  & \mathtt{BinSucc} \subset_{0} \mathtt{NumBits}\\
&  \phantom{small space more}    \mid \mathit{b1} \Rightarrow z^{0} \leftarrow \mathtt{NumBits} \leftarrow x^{\alpha+1};   & { \color{red}[0,0,-1,\infty]} & {\color{red} \phantom{Lx^0.\nu} z^0_1=w^{\beta}_1}\\
&   \phantom{small space  more than}  w^{\beta} \leftarrow \mathtt{BinSucc} \leftarrow z^{0} &   &\mathtt{BinSucc} \subset_{0} \mathtt{NumBits}\\
&  \phantom{small space more}    \mid \mathit{\$}\Rightarrow Rw^{\beta}.\mu_{\mathsf{bin}}; Rw^{\beta+1}.\$;\ w^{\beta+1} \leftarrow x^{\alpha+1})) & { [0,0,-1,1]} & 
\end{align*}
\end{small}%

\noindent This version of the algorithm recognizes both recursive calls as valid. In the following definition we capture the idea of visibility from a type more formally.
\end{example}

\begin{definition}
For type $A$ in a given signature $\Sigma$ and a set of type variables $\Delta$, we define $\mathtt{c}(A; \Delta)$ inductively as:
\[
\begin{array}{l}
\mathtt{c}(1; \Delta)=\emptyset,\\
\mathtt{c}(\oplus\{\ell:A_{\ell}\}_{\ell \in L}; \Delta)= \mathtt{c}(\&\{\ell:A_{\ell}\}_{\ell \in L}; \Delta)= \bigcup_{\ell \in L}\mathtt{c}(A_{\ell}; \Delta),\\
\mathtt{c}(t; \Delta)= \{t\} \cup \mathtt{c}(A; \Delta \cup \{t\})\ \text{if}\ t=_{a}\ A\in \Sigma\ \text{and}\ t \not \in \Delta,\\
\mathtt{c}(t; \Delta)= \{t\}\ \text{if}\ t=_{a} A\in \Sigma\ \text{and} \ t \in \Delta.
\end{array}
\]
We put priority $i$ in the set $\mathtt{c}(A)$ iff for some type variable $t$ with $i=p(t)$, $t \in \mathtt{c}(A; \emptyset)$.
We say that \emph{priority $i$ is visible from type $A$} if and only if  $i \in \mathtt{c}(A)$. 
\end{definition}
In Example~\ref{cutchannel}, we have $\mathtt{c}(\mathsf{bin})=\{p(\mathsf{bin})\}=\{2\}$ and $\mathtt{c}(\mathsf{ctr})=\{p(\mathsf{bin}),p(\mathsf{ctr})\} = \{1,2\}$ which means that $\mathsf{bin}$ is visible from
$\mathsf{ctr}$ but not the other way around.  This expresses that the definition of $\mathsf{ctr}$ references $\mathsf{bin}$, but the definition of $\mathsf{bin}$ does not reference $\mathsf{ctr}$.

\section{Typing Rules for Session-Typed Processes with Channel Ordering}
\label{rules}

In this section we introduce inference rules for session-typed processes corresponding to derivations in subsingleton logic with fixed points. This is a refinement of the inference rules in Figure~\ref{fig:annotrule} to account for channel generations and orderings introduced in previous sections. The judgments are of the form 
\[\bar{x}^{\alpha}:\omega \vdash_{\Omega} P :: (y^{\beta}: A),\] where $P$ is a process, and $x^\alpha$ (the $\alpha$-th generation of channel $x$) and $y^ \beta$  (the $\beta$-th generation of channel $y$) are its left and right channels of types $\omega$ and $A$, respectively.  The order relation between the generations of left and right channels indexed by their priority of types is built step by step in $\Omega$ when reading the rules from the conclusion to the premises. We only consider judgments in which all variables $x^{\alpha'}$ occurring in $\Omega$ are such that $\alpha' \leq \alpha$ and, similarly, for $y^{\beta'}$ in $\Omega$ we have $\beta' \leq \beta$. This presupposition guarantees that if we construct a derivation bottom-up, any future generations for $x$ and $y$ are fresh and not yet constrained by $\Omega$. All our rules, again read bottom-up, will preserve this property.

We fix a signature $\Sigma$ as in Definition~\ref{signature}, a finite set of process definitions $V$ over $\Sigma$ as in Definition~\ref{process}, and define $\bar{x}^{\alpha}:\omega \vdash_{\Omega} P :: (y^{\beta}: A)$ with the rules in Figure~\ref{fig:stp-order}. To preserve freshness of channels and their future generations in $\Omega$, the channel introduced by $\msc{Cut}$ rule must be distinct from any variable mentioned in $\Omega$.
Similar to its underlying sequent calculus in Section~\ref{operat}, this system is infinitary, i.e., an infinite derivation may be produced for a given program. However, we can remove the first premise from the $\msc{Def}$ rule and check typing for each process definition in $V$ separately.

Programs derived in this system are all {\it well-typed}, but not necessarily \emph{valid}. It is, however, the basis for our finitary condition in Section~\ref{alg} and in Section~\ref{Guard} where we prove that local validity is stricter than Fortier and Santocanale's guard condition.

\begin{figure}
{\small\[ \infer[\msc{Id}]{x^{\alpha}: A \vdash_{\Omega} y^{\beta} \leftarrow x^{\alpha} :: (y^{\beta}: A)}{}\]
\[ \infer[\msc{Cut}^{w}]{ \bar{x}^{\alpha}: \omega \vdash_{\Omega}  (w \leftarrow P_{w} ; Q_{w}) :: (y^{\beta}: C)}{ \deduce {\bar{x}^{\alpha}:  \omega \vdash_{\Omega \cup \mathtt{r}(y^\beta)} P_{w^0} ::(w^0:A)}{\mathtt{r}(v)= \{w^{0}_{i}=v_{i}\mid i \not \in \mathtt{c}(A)\, \mbox{and}\, i \le n\}} & w^0: A \vdash_{\Omega \cup \mathtt{r}(\bar{x}^\alpha)} Q_{w^0} :: (y^{\beta}: C)}\]\\[-7.95pt]
\[ \infer[\oplus R]{\bar{x}^{\alpha}:\omega \vdash_{\Omega} Ry^{\beta}.k; P :: (y^{\beta}: \oplus\{\ell:A_{\ell}\}_{\ell \in L})}{\bar{x}^{\alpha}: \omega \vdash_{\Omega} P :: (y^{\beta}: A_{k}) \quad (k \in L)} \]
    \[\infer[\oplus L]{x^{\alpha}:\oplus\{ \ell:A_\ell \}_{ \ell \in L} \vdash_{\Omega} \mathbf{case}\, Lx^{\alpha} \ (\ell\Ra P_{\ell}):: (y^{\beta}: C)}{\forall \ell\in L \quad x^{\alpha}:A_{\ell} \vdash_{\Omega} P_\ell :: (y^{\beta}:C)}\]\\[-7.95pt]
    \[
    \infer[\& R]{\bar{x}^{\alpha}: \omega \vdash_{\Omega} \mathbf{case}\, Ry^{\beta}\ (\ell \Ra P_\ell) :: (y^{\beta}: \& \{\ell:A_\ell\}_{\ell \in L})}{\forall \ell\in L \quad \bar{x}^{\alpha}: \omega \vdash_{\Omega} P_\ell :: (y^{\beta}:A_{\ell})}\]
    \[\infer[\& L]{x^{\alpha}: \&\{ \ell:A_l \}_{ \ell \in L} \vdash_{\Omega} Lx^{\alpha}.k; P :: (y^{\beta}:C)}{k\in L \quad x^{\alpha}:  A_{k} \vdash_{\Omega}  P :: (y^{\beta}:C)}
    \]\\[-7.95pt]
    \[\begin{tabular}{c c}
    \infer[1R]{. \vdash_{\Omega} \mathbf{close}\, Ry^\beta :: (y^{\beta}: 1)}{} & \infer[1L]{x^{\alpha}: 1 \vdash_{\Omega} \mathbf{wait}\, Lx^\alpha;Q :: (y^{\beta}: A)}{  . \vdash_{\Omega} Q :: (y^{\beta}: A) }
\end{tabular}
\]\\
\[\infer[\mu R]{ \bar{x}^{\alpha}: \omega \vdash_{\Omega} Ry^{\beta}.\mu_t; P_{y^{\beta}} :: (y^{\beta}:t)}{\deduce{  \bar{x}^{\alpha}: \omega \vdash_{\Omega'} P_{y^{\beta+1}} :: (y^{\beta+1}:A)}{\Omega'= \Omega \cup \{(y^{\beta})_{p(s)} =(y^{\beta+1})_{p(s)} \mid p(s)\neq p(t)\}} & t=_{\mu}A & }\]
\[ \infer[\mu L]{x^{\alpha}: t \vdash_{\Omega} \mathbf{case}\, Lx^{\alpha}\ (\mu_{t} \Ra Q_{x^{\alpha}}):: (y^{\beta}: C)}{\deduce{x^{\alpha+1}: A \vdash_{\Omega'} Q_{x^{\alpha+1}} :: (y^{\beta}:C)}{ \Omega'=\Omega \cup \{x^{\alpha+1}_{p(t)} < x^{\alpha}_{p(t)}\} \cup \{x^{\alpha+1}_{p(s)} =x^{\alpha}_{p(s)} \mid p(s)\neq p(t)\} }  &  t=_{\mu} A }\]\\
\[
\infer[\nu R]{\bar{x}^{\alpha}: \omega \vdash_{\Omega} \mathbf{case}\, Ry^{\beta} \ (\nu_t \Ra P_{y^{\beta}}) :: (y^{\beta}: t)}{\deduce{\bar{x}^{\alpha}: \omega \vdash_{\Omega'} P_{y^{\beta+1}} :: (y^{\beta+1}: A)}{\Omega'= \Omega \cup \{y^{\beta+1}_{p(t)} < y^{\beta}_{p(t)}\} \cup \{y^{\beta+1}_{p(s)} = y^{\beta}_{p(s)} \mid p(s)\neq p(t)\}}  & t=_{\nu}A  }\]
\[ \infer[\nu L]{x^{\alpha}: t \vdash_{\Omega} Lx^{\alpha}.\nu_{t}; Q_{x^{\alpha}}:: (y^{\beta}: C)}{\deduce{x^{\alpha+1}: A \vdash_{\Omega'} Q_{x^{\alpha+1}} :: (y^{\beta}: C)} {\Omega'=\Omega \cup \{(x^{\alpha+1})_{p(s)} =(x^{\alpha})_{p(s)} \mid p(s)\neq p(t) \}} & t=_{\nu} A }\]\\
\[\infer[\msc{Def}(X)]{\bar{x}^{\alpha}: \omega \vdash_{\Omega} y^\beta \leftarrow X \leftarrow \bar{x}^\alpha:: (y^{\beta}: C)}{\bar{x}^{\alpha}: \omega \vdash_{\Omega} P_{\bar{x}^\alpha, y^\beta} :: (y^{\beta}: C) & \bar{u}:\omega \vdash X=P_{\bar{u},w} :: (w:C) \in V }\]}
\caption{Infinitary Typing Rules for Processes with Channel Ordering}
\label{fig:stp-order}
\end{figure}
\begin{figure}
\small{
\[\infer[\msc{Id}]{\langle\bar{u}^\gamma,X,v^{\delta}\rangle; z^\alpha:A \vdash_{\Omega, \subset} w^\beta \leftarrow z^\alpha :: (w^{\beta}:A)}{}\]
\[ \infer[\msc{Cut}^{x}]{ \langle\bar{u}^\gamma,X,v^{\delta}\rangle; \bar{z}^\alpha: \omega \vdash_{\Omega \subset}  (x \leftarrow P_x; Q_x) :: (w^{\beta}:C)}{ \deduce{ \langle\bar{u}^\gamma,X,v^{\delta}\rangle;  \bar{z}^\alpha:\omega \vdash_{\Omega \cup \mathtt{r}(w^\beta), \subset} P_{x^0} :: (x^{0}:A)} {\mathtt{r}(y)= \{x^{0}_{i}=y_{i}\mid i \not \in \mathtt{c}(A) \,\mbox{and}\, i \le n\}} \ \  \langle\bar{u}^\gamma,X,v^{\delta}\rangle; x^0:A \vdash_{\Omega \cup \mathtt{r}(\bar{z}^\alpha), \subset} Q_{x^0} :: (w^{\beta}:C) } \]\\[-8.5pt]
\[\infer[\oplus R]{\langle\bar{u}^\gamma,X,v^{\delta}\rangle; \bar{z}^\alpha: \omega \vdash_{\Omega, \subset} Rw^{\beta}.k; P :: (w^{\beta}: \oplus\{\ell:A_l\}_{\ell \in L})}{\langle\bar{u}^\gamma,X,v^{\delta}\rangle; \bar{z}^\alpha: \omega \vdash_{\Omega, \subset} P::(w^{\beta}: A_{k}) \quad (k \in L)}\]
\[\infer[\oplus L]{\langle\bar{u}^\gamma,X,v^{\delta}\rangle; z^\alpha: \oplus\{ \ell:A \}_{ \ell \in L} \vdash_{\Omega, \subset} \mathbf{case}\, Lz^{\alpha} \ (\ell\Ra P_{\ell}):: ( w^{\beta}: C)}{\forall \ell\in L \quad \langle\bar{u}^\gamma,X,v^{\delta}\rangle; z^\alpha:A_{\ell} \vdash_{\Omega, \subset}  P_\ell :: ( w^{\beta}: C)}\]\\[-8.5pt]
\[\infer[\& R]{\langle\bar{u}^\gamma,X,v^{\delta}\rangle; \bar{z}^\alpha: \omega \vdash_{\Omega, \subset}\mathbf{case}\, Rw^\beta\ (\ell \Ra P_\ell) :: ( w^{\beta}: \& \{\ell:A_\ell\}_{\ell \in L})}{\forall \ell\in L \quad \langle\bar{u}^\gamma,X,v^{\delta}\rangle; \bar{z}^\alpha: \omega \vdash_{\Omega, \subset} P_\ell :: ( w^{\beta}: A_{\ell})}\]\\[-8.5pt]
\[
\infer[\& L]{\langle\bar{u}^\gamma,X,v^{\delta}\rangle; z^\alpha: \&\{ \ell:A_\ell \}_{ \ell \in L} \vdash_{\Omega, \subset} Lz^{\alpha}.k; P :: ( w^{\beta}: C)}{(k\in L) \quad \langle\bar{u}^\gamma,X,v^{\delta}\rangle;  z^\alpha: A_{k} \vdash_{\Omega, \subset}  P :: ( w^{\beta}:C)}\]\\[-8.5pt]
\[\begin{tabular}{c c}
\infer[1R]{\langle\bar{u}^\gamma,X,v^{\delta}\rangle; \cdot \vdash_{\Omega, \subset} \mathbf{close}\, R :: (w^{\beta}:1)}{} &
\infer[1L]{\langle\bar{u}^\gamma,X,v^{\delta}\rangle; z^\alpha: 1 \vdash_{\Omega, \subset} \mathbf{wait}\, Lz^\alpha;Q :: (w^{\beta}:A)}{ \langle\bar{u}^\gamma,X,v^{\delta}\rangle; \cdot \vdash_{\Omega, \subset} Q :: (w^{\beta}:A) }
\end{tabular}\]\\[-6.5pt]
\[ \infer[\mu R]{ \langle\bar{u}^\gamma,X,v^{\delta}\rangle; \bar{z}^{\alpha}: \omega \vdash_{\Omega, \subset} Rw^\beta.\mu_t; P_{w^\beta} :: (w^{\beta}:t)}{\deduce{\langle\bar{u}^\gamma,X,v^{\delta}\rangle;  \bar{z}^\alpha:\omega \vdash_{\Omega', \subset} P_{w^{\beta+1}} :: (w^{\beta+1}:A)}{ \Omega'= \Omega \cup \{w^{\beta}_{p(s)} = w^{\beta+1}_{p(s)} \mid p(s)\neq p(t)\}} & t=_{\mu}A }\]\\[-8.5pt]
\[\infer[\mu L]{\langle\bar{u}^\gamma, X,v^{\delta}\rangle; z^\alpha: t \vdash_{\Omega, \subset} \mathbf{case}\, Lz^{\alpha}\ (\mu_{t} \Ra Q_{z^{\alpha}}):: (w^{\beta}:C)}{\deduce{\langle\bar{u}^\gamma,X,v^{\delta}\rangle; z^{\alpha+1}:A \vdash_{\Omega', \subset} Q_{z^{\alpha+1}} : (w^{\beta}::C) } {\Omega'=\Omega \cup \{z^{\alpha+1}_{p(t)} < z^{\alpha}_{p(t)}\} \cup \{z^{\alpha+1}_{p(s)} =z^{\alpha}_{p(s)} \mid p(s)\neq p(t)\}  } & t=_{\mu} A }\]\\[-8.5pt]
\[ \infer[\nu R]{\langle\bar{u}^\gamma,X,v^{\delta}\rangle; \bar{z}^\alpha: \omega \vdash_{\Omega, \subset} \mathbf{case}\, Rw^{\beta} \ (\nu_t \Ra P_{w^{\beta}}) :: (w^{\beta}:t)}{\deduce{\langle\bar{u}^\gamma,X,v^{\delta}\rangle; \bar{z}^\alpha: \omega \vdash_{\Omega', \subset} P_{w^{\beta+1}} :: (w^{\beta+1}:A)}{\Omega'= \Omega \cup \{w^{\beta+1}_{p(t)} < w^{\beta}_{p(t)}\} \cup \{w^{\beta+1}_{p(s)} =w^{\beta}_{p(s)} \mid p(s)\neq p(t)\}} & t=_{\nu} A }\]\\[-9pt]
\[\infer[\nu L]{\langle\bar{u}^{\gamma},X,v^{\delta}\rangle; z^{\alpha}: t \vdash_{\Omega, \subset} Lz^{\alpha}.\nu_{t}; Q_{z^{\alpha}}:: (w^{\beta}:C)}{\deduce{  \langle\bar{u}^{\gamma},X,v^{\delta}\rangle; z^{\alpha+1}: A \vdash_{\Omega', \subset} Q_{z^{\alpha+1}} :: (w^{\beta}:C)} {\Omega'=\Omega \cup \{z^{\alpha+1}_{p(s)} =z^{\alpha}_{p(s)} \mid p(s)\neq p(t)\} } & t=_{\nu} A }\]\\[-9pt]
\[ \infer[\msc{Call}]{\langle \bar{u}^\gamma,X,v^{\delta}\rangle; \bar{z}^\alpha: \omega \vdash_{\Omega, \subset} w^\beta \leftarrow Y \leftarrow \bar{z}^\alpha :: (w^{\beta}:C)}{ Y, list(\bar{z}^\alpha,w^{\beta}) \ (\subset, <_{\Omega})\ X, list(\bar{u}^\gamma, v^{\delta}) &  \bar{x}:\omega \vdash Y=P_{\bar{x},y} :: (y:C) \in V}\]}
\caption{Finitary Rules for Local Validity}
\label{fig:validity}
\end{figure}

\section{A Local Validity Condition}
\label{alg}

In Sections~\ref{session} to~\ref{algn}, using several examples, we developed an algorithm for identifying {\it valid} programs. Illustrating the full algorithm based on the inference rules in Section~\ref{rules} was postponed to this section. We reserve for the next section our main result that the programs accepted by this algorithm satisfy the guard condition introduced by Fortier and Santocanale~\cite{Fortier13csl}.

The condition checked by our algorithm is a \emph{local} one in the sense that we check validity of each process definition in a program separately. The algorithm works on the sequents of the form 
\[\langle\bar{u}^\gamma,X,v^{\delta}\rangle; \bar{z}^\alpha:\omega \vdash_{\Omega, \subset} P:: (w^{\beta}:C),\] where $\bar{u}^\gamma$ is the left channel of the process the algorithm started with and can be either empty or $u^\gamma$. Similarly, $v^{\delta}$ is the right channel of the process the algorithm started with (which cannot be empty). And $X$ is the last process variable a definition rule has been applied to (reading the rules bottom-up). Again, in this judgment the (in)equalities in $\Omega$ can only relate variables $z$ and $w$ from earlier generations to guarantee freshness of later generations.

Generally speaking, when analysis of the program starts with $\bar{u}^\gamma:\omega \vdash v^\delta \leftarrow X \leftarrow \bar{u}^\gamma :: (v^\delta: B)$, a snapshot of the channels $\bar{u}^\gamma$ and $v^{\delta}$ and the process variable $X$ are saved. Whenever the process reaches a call $\bar{z}^\alpha:\_\vdash  w^\beta \leftarrow Y \leftarrow \bar{z}^\alpha:: (w^\beta: \_)$, the algorithm compares $X,list(\bar{u}^\gamma, v^\delta)$ and $Y,list(\bar{z}^\alpha, w^\beta)$ using the $(\subset, <)$ order to determine if the call is (locally) valid.
This comparison is made by the $\msc{Call}$ rule in the rules in Figure~\ref{fig:validity}, and is local in the sense that only the interface of a process is consulted at each call site, not its definition.  Since it otherwise follows the structure of the program it is also local in the sense of Pierce and Turner~\cite{Pierce2000toplas}.

\begin{definition}
A program $\mathcal{P}= \langle V,S \rangle$ over signature $\Sigma$ and a fixed order $\subset$ satisfying the properties in Definition~\ref{subs} is {\it locally valid} iff for every $\bar{z}:A\vdash X=P_{\bar{z},w}:: (w:C) \in V$, there is a derivation for \[\langle\bar{z}^0,X,w^0\rangle;  \bar{z}^0:\omega \vdash_{\emptyset, \subset} P_{\bar{z}^0, w^0}:: (w^0:C)\] in the rule system in Figure~\ref{fig:validity}. This set of rules is \emph{finitary} so it can be directly interpreted as an algorithm.  This results from substituting the $\msc{Def}$ rule (of Figure~\ref{fig:stp-order}) with the $\msc{Call}$ rule (of Figure~\ref{fig:validity}). Again, to guarantee freshness of future generations of channels, the channel introduced by $\msc{Cut}$ rule is distinct from other variables mentioned in $\Omega$.
\end{definition}

The starting point of the algorithm can be of an arbitrary form \[\langle\bar{z}^\alpha,X,w^{\beta}\rangle; \bar{z}^\alpha:\omega \vdash_{\Omega, \subset} P_{z^{\alpha},w^{\beta}}::(w^{\beta}: C),\]
as long as $\bar{z}^{\alpha+i}$ and $w^{\beta+i}$ do not occur in $\Omega$
for every $i>0$. In both the inference rules and the algorithm, it is implicitly assumed that the next generation of channels introduced in the $\mu/\nu-R/L$ rules do not occur in $\Omega$. Having this condition we can convert a proof for \[\langle\bar{z}^0,X,w^0\rangle; \bar{z}^0:\omega \vdash_{\emptyset, \subset} P_{z^{0},w^{0}}:: (w^0:C),\] to a proof for \[\langle\bar{z}^\alpha,X,w^{\beta}\rangle; \bar{z}^\alpha:\omega \vdash_{\Omega, \subset} P_{z^{\alpha},w^{\beta}}::(w^{\beta}: C),\] by rewriting each $\bar{z}^{\gamma}$ and $w^{\delta}$ in the proof as $\bar{z}^{\gamma+\alpha}$ and $w^{\delta+\beta}$, respectively. This simple proposition is used in the next section where we prove that every locally valid process accepted by our algorithm is a valid proof according to the FS guard condition.

\begin{proposition}\label{subuv}
If there is a deduction of \[\langle\bar{z}^0,X,w^0\rangle; \bar{z}^0:\omega \vdash_{\emptyset, \subset} P_{z^{0},w^{0}}:: (w^0:C),\] then there is also a deduction of \[\langle\bar{z}^\alpha,X,w^{\beta}\rangle; \bar{z}^\alpha:\omega \vdash_{\Omega, \subset} P_{z^{\alpha},w^{\beta}}::(w^{\beta}: C),\]  if for all $0<i$, $\bar{z}^{\alpha+i}$ and $w^{\beta+i}$ do not occur in $\Omega$.
\end{proposition}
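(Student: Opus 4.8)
The plan is to proceed by structural induction on the given derivation $D$ of $\langle\bar{z}^0,X,w^0\rangle; \bar{z}^0:\omega \vdash_{\emptyset, \subset} P_{z^{0},w^{0}}:: (w^0:C)$, transforming it step by step into a derivation of the shifted judgment. First I would fix the \emph{generation shift} $\phi$ that rewrites every left-channel generation $\bar z^{\gamma}$ as $\bar z^{\gamma+\alpha}$ and every right-channel generation $w^{\delta}$ as $w^{\delta+\beta}$, leaving untouched the fresh channels introduced by $\msc{Cut}$ (which may be renamed apart, since they are fresh by the $\msc{Cut}$ side condition). The invariant to prove by induction is slightly more general than the statement itself: for every node of $D$ whose conclusion carries a relation set $\Omega_0$, the correspondingly $\phi$-shifted node is derivable with relation set $\Omega\cup\phi(\Omega_0)$ and snapshot $\langle\bar z^\alpha,X,w^\beta\rangle$, under the standing freshness hypothesis that $\bar z^{\alpha+i}$ and $w^{\beta+i}$ do not occur in $\Omega$ for $i>0$.

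For the inductive step on every rule other than $\msc{Call}$, I would re-apply the same rule to the shifted premises. Each rule's update of the relation set (the extensions $\Omega'$ in $\mu L,\nu R,\mu R,\nu L$ and the sets $\mathtt{r}(\cdot)$ in $\msc{Cut}$) is defined purely in terms of consecutive generations and their priorities, so it commutes with $\phi$; adding the fixed set $\Omega$ on top is harmless because the rules never \emph{remove} relations. The one presupposition these rules depend on — that the next generation produced when reading bottom-up is fresh and unconstrained — is preserved exactly by the freshness hypothesis, since every new generation in the shifted derivation has index strictly above $\alpha$ (resp.\ $\beta$) and therefore lies outside $\Omega$. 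The base cases $\msc{Id}$ and $1R$ carry no side condition and are immediate.

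The crux is the $\msc{Call}$ rule, where I must show that the side condition $Y,\mathit{list}(\bar z^{\gamma'},w^{\delta'})\,(\subset,<_{\Omega_0})\,X,\mathit{list}(\bar z^{0},w^{0})$ that held in $D$ survives, i.e.\ that $Y,\mathit{list}(\phi(\bar z^{\gamma'}),\phi(w^{\delta'}))\,(\subset,<_{\Omega\cup\phi(\Omega_0)})\,X,\mathit{list}(\bar z^{\alpha},w^{\beta})$ still holds. The $\subset$ component concerns only process variables and is untouched by $\phi$. For the lexicographic $<_{\Omega}$ component I would show that $\phi$ carries $\le_{\Omega_0}$ and its strict part into $\le_{\Omega\cup\phi(\Omega_0)}$: the forward direction is immediate, as a witnessing chain in $\Omega_0$ maps to a chain in $\phi(\Omega_0)\subseteq\Omega\cup\phi(\Omega_0)$, and the equalities that fix the earlier list positions are preserved because augmenting a relation set only adds pairs. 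The delicate point is that the strict inequality at the witnessing position must not collapse, i.e.\ no reverse relation may be created by the added $\Omega$. Here the freshness hypothesis is exactly what is needed: a strict decrease at priority $j$ forces the smaller (call-side) entry to be a channel whose generation index is strictly above $\alpha$ (resp.\ $\beta$), hence fresh for $\Omega$; because $\Omega$ mentions no such channel, any $\le$-chain reaching it must cross into the fresh generations using $\phi(\Omega_0)$-edges alone, and since cross-generation reachability at a fixed priority is governed only by equality edges, a reverse relation would reflect back through $\phi^{-1}$ to an equality in $\Omega_0$, contradicting the original strictness.

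I expect this last point — isolating that the relations of $\Omega$ interact only with the base generations $z^{\alpha},w^{\beta}$ and never reach the fresh higher generations that carry the strict decreases — to be the main obstacle, and the freshness hypothesis of the proposition is tailored precisely to discharge it. The remaining bookkeeping, namely that $\phi$ is injective and order-preserving on generation indices and that cut-introduced channels stay disjoint from the shifted ones, is routine.
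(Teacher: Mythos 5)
Your proposal is correct and follows essentially the same route as the paper: the paper's entire proof is ``by substitution, as explained above,'' meaning exactly the generation shift $\bar z^{\gamma}\mapsto\bar z^{\gamma+\alpha}$, $w^{\delta}\mapsto w^{\delta+\beta}$ that you call $\phi$, justified by the freshness hypothesis on $\Omega$. Your write-up simply makes explicit the bookkeeping the paper leaves implicit (the induction invariant $\Omega\cup\phi(\Omega_0)$, renaming cut channels apart, and the preservation of the strict inequalities at $\msc{Call}$ sites), which is a faithful elaboration rather than a different argument.
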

\begin{proof}
By substitution, as explained above.
\end{proof}
To show the algorithm in action we run it over program $\mathcal{P}_3:= \langle \{\mathtt{Copy}\}, \mathtt{Copy} \rangle$ previously defined in Example~\ref{prex}.
\begin{example}\label{ex:copyalg}
Consider program $\mathcal{P}_3:= \langle \{\mathtt{Copy}\}, \mathtt{Copy} \rangle$ over signature $\Sigma_1$ where $\mathtt{Copy}$ has types $x:\mathsf{nat} \vdash \mathtt{Copy} ::(y:\mathsf{nat})$.
\begin{align*}
    \Sigma_1:=  \mathsf{nat}=^{1}_{\mu} \oplus\{ \mathit{z}:\mathsf{1}, \mathit{s}:\mathsf{nat}\},
\end{align*} 
\begin{align*}
y \leftarrow \mathtt{Copy} \leftarrow x  =\mathbf{case}\, Lx\ (\mu_{nat} \Ra \mathbf{case}\, Lx\ & (\ z\Ra Ry.\mu_{nat}; Ry.z; \mathbf{wait}\, Lx; \mathbf{close}\, Ry  \\
 & \mid s \Rightarrow Ry.\mu_{nat}; Ry.s; y \leftarrow \mathtt{Copy} \leftarrow x)). 
\end{align*}
In this example, following Definition~\ref{subs} the programmer has to define $\mathtt{Copy}\subseteq_1 \mathtt{Copy}$ since the only priority in $\Sigma$ is $1$.
To verify local validity of this program we run our algorithm over the definition of $\mathtt{Copy}$. Here we show the interesting branch of the constructed derivation:

\setlength{\inferLineSkip}{5pt} 
\setlength{\inferLabelSkip}{1pt}

\[\small \infer[\mu L]{ x^0{:}\mathsf{nat} \vdash_{\emptyset} \mathbf{case}\, Lx^0\ (\mu_{nat}\Rightarrow \cdots):: (y^0{:}\mathsf{nat})}{  \infer[\oplus L]{ x^1{:}1\oplus \mathsf{nat} \vdash_{\{x^1_1<x^0_1\}} \mathbf{case}\, Lx^1\ (\cdots):: (y^0{:} \mathsf{nat})}{\boldsymbol{\cdots} \hspace{-20pt} & \infer[\mu R]{x^1{:}\mathsf{nat} \vdash_{\{x^1_1<x^0_1\}} Ry^0.\mu_{nat}; \cdots :: (y^0{:} \mathsf{nat})}{\infer[\oplus R]{x^1{:}\mathsf{nat} \vdash_{\{x^1_1<x^0_1\}} Ry^1.s; \cdots :: (y^1{:} 1 \oplus \mathsf{nat})}{\infer[\msc{Call}]{ x^1{:}\mathsf{nat} \vdash_{\{x^1_1<x^0_1\}} y^1 \leftarrow \mathtt{Copy} \leftarrow x^1 :: (y^1{:} \mathsf{nat})}{ [x^1_1, \mathtt{Copy}, y^1_1]\,(\subset, <_{\{x^1_1<x^0_1\}})\,[x^0_1, \mathtt{Copy}, y^0_1]\ \  &  x{:}\mathsf{nat} \vdash \mathtt{Copy}=(\mathbf{case}\, Lx\,(\cdots) )_{x,y} :: (y{:}\mathsf{nat}) \in V}}}}}\]

As being checked by the $\msc{Call}$ rule, $[x^1_1, \mathtt{Copy}, y^1_1]\,(\subset, <_{\{x^1_1<x^0_1\}})\,[x^0_1, \mathtt{Copy}, y^0_1]$ and the recursive call is accepted. In this particular setting in which $\mathsf{Copy}$ calls \emph{itself} recursively, the condition of the $\msc{Call}$ rule can be reduced to $[x^1_1,  y^1_1]\, <_{\{x^1_1<x^0_1\}}\,[x^0_1, y^0_1]$.

Note that at a meta-level the generations on channel names and the set $\Omega$ are both used for bookkeeping purposes. We showed in this example that using the rules of Figure~\ref{fig:validity} as an algorithm we can annotate the given definition of a process variable with the generations and the set $\Omega$.  
\end{example}
\section{Local Validity and Guard Conditions}
\label{Guard}

Fortier and Santocanale \cite{Fortier13csl} introduced a {\it guard condition} for identifying valid circular proofs among all infinite pre-proofs in the singleton logic with fixed points. They showed that the pre-proofs satisfying this condition, which is based on the definition of left $\mu$- and right $\nu$-traces, enjoy the cut elimination property. In this section, we translate their guard condition into the context of session-typed concurrency and generalize it for subsingleton logic. It is straightforward to show that the cut elimination property holds for a proof in subsingleton logic if it satisfies the generalized version of the guard condition. The key idea is that cut reductions for individual rules stay untouched in subsingleton logic and rules for the new constant $1$ only provide more options for the cut reduction algorithm to terminate. We prove that all locally valid programs in the session typed system, determined by the algorithm in Section~\ref{alg}, also satisfy the guard condition. We conclude that our algorithm imposes a stricter but local version of validity on the session-typed programs corresponding to circular pre-proofs.

Here we adapt definitions of the {\it left} and {\it right traceable} paths, {\it left $\mu$-} and {\it right $\nu$-traces,} and then {\it validity} to our session type system.

\begin{definition}\label{rt}
Consider path $\mathbb{P}$ in the (infinite) typing derivation of a program $\mathcal{Q}=\langle V, S\rangle$ defined on a signature $\Sigma$:
\[{\infer{\bar{z}^\alpha: \omega \vdash_{\Omega} Q :: (w^{\beta}:C)}{ \infer{\vdots}{{\bar{x}^\gamma: \omega' \vdash_{\Omega'} Q' :: (y^{\delta}:C')}}}  }\]
$\mathbb{P}$ is called {\it left traceable} if $\bar{z}$ and $\bar{x}$ are non-empty and $\bar{z}=\bar{x}$. It is called {\it right traceable} if $w=y$.

Moreover, $\mathbb{P}$ is called a \emph{cycle} over program $\mathcal{Q}$, if for some $X\in V$, we have $Q= w^\beta \leftarrow X \leftarrow \bar{z}^\alpha $ and $Q'=y^\delta \leftarrow X \leftarrow \bar{x}^\gamma$.
\end{definition}

\begin{definition}\label{tr}
A path $\mathbb{P}$ in the (infinite) typing derivation of a program  $\mathcal{Q}=\langle V, S\rangle$ defined over signature $\Sigma$ is a left $\mu$-trace if (i) it is left-traceable, (ii) there is a left fixed point rule applied in it, and (iii) the highest priority of its left fixed point rules is $i\le n$ such that $\epsilon(i)=\mu$.
Dually, $\mathbb{P}$ is a right $\nu$-trace if (i) it is right-traceable, (ii) there is a right fixed point rule applied in it, and (iii) the highest priority of its right fixed point rules is $i \le n$ such that $\epsilon(i)=\nu$.
\end{definition} 

\begin{definition}[FS guard condition on cycles]
\label{guard}
A program $\mathcal{Q}=\langle V, S\rangle$ defined on signature $\Sigma$ satisfies the FS guard condition if every cycle $\mathbb{C}$
\[{\infer{\bar{z}^\alpha: \omega \vdash_{\Omega} w^\beta \leftarrow X \leftarrow \bar{z}^\alpha :: (w^{\beta}:C)}{ \infer{\vdots}{{\bar{x}^\gamma: \omega' \vdash_{\Omega'} y^\delta \leftarrow X \leftarrow \bar{x}^\gamma :: (y^{\delta}:C')}}}  }\]
over $\mathcal{Q}$ is either  a left $\mu$-trace or a right $\nu$-trace.
Similarly, we say a single cycle $\mathbb{C}$ satisfies the guard condition if it is either  a left $\mu$-trace or a right $\nu$-trace.
\end{definition}

Definitions~\ref{rt}-\ref{guard} are equivalent to the definitions of the same concepts by Fortier and Santocanale using our own notation. In particular, Definition~\ref{guard} is equivalent to \emph{FS guard condition on cycles}. For the intended use of infinite derivations in this paper in which $V$ is a finite set of process definitions, the FS guard condition on infinite paths is equivalent to their condition on cycles.

As an example, consider program $\mathcal{P}_3:= \langle \{\mathtt{Copy}\}, \mathtt{Copy} \rangle$ over signature $\Sigma_1$, defined in Example~\ref{prex}, where $\mathtt{Copy}$ has types $x:\mathsf{nat} \vdash \mathtt{Copy} ::(y:\mathsf{nat})$.
\begin{align*}
    \Sigma_1:=  \mathsf{nat}=^{1}_{\mu} \oplus\{ \mathit{z}:\mathsf{1}, \mathit{s}:\mathsf{nat}\}
\end{align*} 
\begin{align*}
y \leftarrow \mathtt{Copy} \leftarrow x  =\mathbf{case}\, Lx\ (\mu_{nat} \Ra \mathbf{case}\, Lx\ & (\ z\Ra Ry.\mu_{nat}; Ry.z; \mathbf{wait}\, Lx; \mathbf{close}\, Ry  \\
 & \mid s \Rightarrow Ry.\mu_{nat}; Ry.s; y \leftarrow \mathtt{Copy} \leftarrow x))
\end{align*}
Consider the first several steps of the derivation of the program starting with $x^0:\mathsf{nat} \vdash_{\emptyset} y^0 \leftarrow \mathtt{Copy} \leftarrow x^0 ::(y^0:\mathsf{nat})$:

\setlength{\inferLineSkip}{5pt} 
\setlength{\inferLabelSkip}{1pt}
\begin{small}
\[\infer[\msc{Def}(\mathtt{Copy})]{\color{red}x^0:\mathsf{nat}\vdash_{\emptyset} y^0 \leftarrow \mathtt{Copy} \leftarrow x^0 :: (y^0:\mathsf{nat})}{ \infer[\mu L]{ x^0:\mathsf{nat} \vdash_{\emptyset} \mathbf{case}\, Lx^0\ (\mu_{nat}\Rightarrow \cdots):: (y^0:\mathsf{nat})}{  \infer[\oplus L]{  x^1:1\oplus \mathsf{nat} \vdash_{\{x^1_1<x^0_1\}} \mathbf{case}\, Lx^1\ (\cdots):: (y^0: \mathsf{nat})}{x^1:\mathsf{1} \vdash_{\{x^1_1<x^0_1\}} Ry^0.\mu_{nat}; \cdots :: (y^0: \mathsf{nat}) & \infer[\mu R]{x^1:\mathsf{nat} \vdash_{\{x^1_1<x^0_1\}} Ry^0.\mu_{nat}; \cdots :: (y^0: \mathsf{nat})}{\infer[\oplus R]{x^1:\mathsf{nat} \vdash_{\{x^1_1<x^0_1\}} Ry^1.s; \cdots :: (y^1: 1 \oplus \mathsf{nat})}{\color{blue} x^1:\mathsf{nat} \vdash_{\{x^1_1<x^0_1\}} y^1 \leftarrow \mathtt{Copy} \leftarrow x^1 :: (y^1: \mathsf{nat})}}}}}\]
\end{small}

\noindent
The path between \[{\color{red} x^0:\mathsf{nat} \vdash_{\emptyset} y^0\leftarrow \mathtt{Copy} \leftarrow x^0 :: (y^0: \mathsf{nat})}\] and \[{\color{blue} x^1:\mathsf{nat} \vdash_{\{x^1_1<x^0_1\}} y^1 \leftarrow \mathtt{Copy} \leftarrow x^1 :: (y^1: \mathsf{nat})}\] is by definition both left traceable and right traceable, but it is only a left $\mu$-trace and not a right $\nu$-trace: the highest priority of a fixed point applied on the left-hand side on this path belongs to a positive type; this application of the $\mu L$ rule added $x^1_1<x^0_1$ to the set defining the $<$ order. However, there is no negative fixed point rule applied on the right, and $y_1^1$ and $y_1^0$ are incomparable to each other.

 This cycle satisfies the {\it guard condition} by being a left $\mu$-trace. We showed in Example~\ref{ex:copyalg} that it is also accepted by our algorithm since $list(x^1,y^1)=[(x^1_1,y^1_1)]<[(x^0_1,y^0_1)]=list(x^0,y^0)$.

Here, we can observe that being a left $\mu$-trace coincides with having the relation $x^1_1<x^0_1$ between the left channels, and not being a right $\nu$-trace coincides with not having the relation $y^1_1<y^0_1$ for the right channels. We can generalize this observation to every path and every signature with $n$ priorities.

\begin{definition}
Consider a signature $\Sigma$ and a channel $x^\gamma$. We define the list $[x^\gamma]$ as $[x_1^\gamma, \cdots, x_n^\gamma]$.
\end{definition}

\begin{theorem}\label{listor}
A cycle $\mathbb{C}$ 
\[{\infer{\bar{z}^\alpha: \omega \vdash_{\Omega} w^\beta \leftarrow X \leftarrow \bar{z}^\alpha :: (w^{\beta}:C)}{ \infer{\vdots}{{\bar{x}^\gamma: \omega' \vdash_{\Omega'} y^\delta \leftarrow X \leftarrow \bar{x}^\gamma :: (y^{\delta}:C')}}}  }\]
on a program $\mathcal{Q}=\langle V, S\rangle$ defined over Signature $\Sigma$ is a {\it left $\mu$-trace} if $\bar{x}$ and $\bar{z}$ are non-empty and the list $[x^\gamma]=[x_1^\gamma, \cdots, x_n^\gamma]$ is lexicographically less than the list $[z^\alpha]=[z_1^\alpha, \cdots, z_n^\alpha]$ by the order $<_{\Omega'}$ built in $\Omega'$.
Dually, it is a {\it right $\nu$-trace}, if the list $[y^\delta]=[y_1^\delta, \cdots, y_n^\delta]$ is lexicographically less than the list $[w^\beta]=[w_1^\beta, \cdots, w_n^\beta]$ by the strict order $<_{\Omega'}$ built in $\Omega'$
\end{theorem}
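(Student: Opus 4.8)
The plan is to prove both statements by reducing the comparison of the index lists to a coordinate-wise analysis of how the order $\le_{\Omega'}$ records the fixed point rules applied along the cycle. I will carry out the left case in full and obtain the right case by the evident $L\leftrightarrow R$, $\mu\leftrightarrow\nu$, $\bar x/\bar z\leftrightarrow y/w$ duality. First I fix the setup from Definition \ref{rt}: since $\mathbb C$ is left traceable, $\bar x=\bar z$ denote the same underlying left channel, call it $x$, whose generation runs from $\alpha$ at the conclusion to $\gamma$ at the premise as the cycle is read bottom-up, with $\Omega\subseteq\Omega'$. I would then observe that among the rules of Figure \ref{fig:stp-order}, only $\mu L$ and $\nu L$ simultaneously advance the generation of the traced left channel and add relations between its generations to $\Omega$: the right rules and $\msc{Id}$ leave $x$ entirely untouched; the choice and unit rules $\oplus L,\&L,1L$ leave the generation of $x$ unchanged and add nothing; and on a left-traceable segment a $\msc{Cut}$ is entered through its left premise, where the left channel is still $\bar x^\alpha$ and the only additions $\mathtt r(y^\beta)$ concern the fresh cut channel rather than the generations of $x$. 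Hence the behaviour of $[x^\gamma]$ relative to $[x^\alpha]$ is governed entirely by the sequence of left fixed point rules.

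The key step is a coordinate-wise lemma, proved by induction on the number of left fixed point rules along the path using the $\Omega'$-updates displayed in the $\mu L$ and $\nu L$ rules. For each priority $j\le n$: if no left fixed point rule of priority $j$ is applied then $x^\gamma_j=_{\Omega'}x^\alpha_j$, since every step touching a priority $\neq j$ contributes only an equality $x^{k+1}_j=x^k_j$ and these compose; if $\epsilon(j)=\mu$ and at least one $\mu L$ of priority $j$ is applied then $x^\gamma_j<_{\Omega'}x^\alpha_j$, because such a step contributes the strict relation $x^{k+1}_j<x^k_j$ while the others contribute equalities, and transitivity yields a strict decrease; and if $\epsilon(j)=\nu$ and at least one $\nu L$ of priority $j$ is applied then $x^\gamma_j$ and $x^\alpha_j$ are incomparable in $\le_{\Omega'}$. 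I expect the incomparability case to be the main obstacle: the $\nu L$ rule adds no relation at priority $p(t)$, so I must argue that no relation between these generations is created \emph{indirectly} through the transitive closure. This is exactly where the freshness presupposition of Section \ref{rules} is essential — each newly introduced generation is unconstrained in the current $\Omega$ — and I would phrase it as an invariant maintained while building $\Omega'$ bottom-up, so that the closure can never connect $x^\gamma_j$ to $x^\alpha_j$ across a $\nu L$ step.

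With the coordinate lemma in hand, the conclusion follows from the definition of the lexicographic extension of the partial order $\le_{\Omega'}$. Let $i$ be the smallest priority (equivalently, the highest-priority fixed point) at which the two lists fail to be equal; by the lemma $i$ is precisely the highest priority at which a left fixed point rule is applied on the cycle, and the lists agree at every $j<i$. Assuming $[x^\gamma]<_{\Omega'}[x^\alpha]$ lexicographically means exactly that $x^\gamma_i<_{\Omega'}x^\alpha_i$ strictly at this first differing coordinate; by the lemma a strict decrease at $i$ forces $\epsilon(i)=\mu$ together with the presence of a priority-$i$ $\mu L$ rule. Thus the path is left traceable (i), carries a left fixed point rule (ii), and the highest priority of its left fixed point rules is $i$ with $\epsilon(i)=\mu$ (iii) — exactly conditions (i)–(iii) of Definition \ref{tr} — so $\mathbb C$ is a left $\mu$-trace. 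Reading these three facts in reverse gives the converse as well, so the characterisation is in fact an equivalence, matching the coincidence observed for $\mathcal P_3$. Finally, replaying the argument on the right channel $y=w$, where by the $\nu R$ and $\mu R$ updates a $\nu R$ contributes strict decreases and a $\mu R$ contributes incomparabilities, yields the dual statement that $[y^\delta]<_{\Omega'}[w^\beta]$ characterises right $\nu$-traces.
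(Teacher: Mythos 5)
There is a genuine gap at the very first step: you \emph{assume} that the cycle is left traceable (``since $\mathbb{C}$ is left traceable, $\bar{x}=\bar{z}$ denote the same underlying left channel''), but left traceability is condition (i) of Definition \ref{tr} and is therefore part of what must be \emph{proven}, not part of the hypothesis. The hypothesis of the theorem only gives non-emptiness of $\bar{x},\bar{z}$ and the lexicographic relation $[x^\gamma]<_{\Omega'}[z^\alpha]$; a cycle in the sense of Definition \ref{rt} need not be left traceable, because the left channel name changes whenever the path passes through the \emph{right} premise of a $\msc{Cut}$ (the left channel then becomes the fresh cut channel $w^0$). Example \ref{bogus} in the paper exhibits exactly such a cycle: $\mathtt{SuccCopy}$ calls itself, the left endpoint channels are $x^0$ and $w^1$ with $w\neq x$, and the path is not left traceable even though both endpoints are calls to the same process variable. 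Your subsequent case analysis --- ``on a left-traceable segment a $\msc{Cut}$ is entered through its left premise, where the left channel is still $\bar x^\alpha$'' --- therefore bakes the conclusion into the argument: it is precisely by excluding right premises of cuts that you make the coordinate-wise bookkeeping work, and that exclusion is what needs to be justified from the order hypothesis.

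The paper closes this gap with part (a) of Lemma \ref{ordtrace} (and dually Lemma \ref{dordtrace}(a) for the right-hand statement, where $y=w$ must likewise be derived rather than assumed): if $x^\gamma_i \le_{\Omega'} z^\alpha_i$ for a priority $i\in\mathtt{c}(\omega')$ with $\epsilon(i)=\mu$, then $x=z$. The crucial case in its induction is the right premise of $\msc{Cut}$, where the new left channel $x^0$ is fresh and the only relations created for it by $\mathtt{r}$ concern priorities \emph{not} visible in its type; hence the assumed inequality simply cannot hold at a visible $\mu$-priority, and the case is vacuous. This is the step your proposal is missing. Aside from this, your coordinate-wise lemma (equality when no priority-$j$ left rule fires, strict decrease for $\mu L$, incomparability for $\nu L$, with freshness preventing indirect relations through the transitive closure) and the concluding lexicographic argument run parallel to parts (b) and (c) of the paper's Lemmas \ref{ordtrace} and \ref{dordtrace}, and would go through once left (resp.\ right) traceability has first been extracted from the order hypothesis; your additional claim that the implication is in fact an equivalence is plausible but is not needed for the stated theorem.
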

\begin{proof}
This theorem is a corollary of Lemmas~\ref{ordtrace} and~\ref{dordtrace}.
\end{proof}

\begin{lemma}
\label{ordtrace}
Consider a path $\mathbb{P}$ in the (infinite) typing derivation of a program $\mathcal{Q}=\langle V, S\rangle$ defined on a signature $\Sigma$,
\[{\infer{z^\alpha: \omega \vdash_{\Omega} P :: (w^{\beta}:C)}{ \infer{\vdots}{{x^\gamma: \omega' \vdash_{\Omega'} P' :: (y^{\delta}:C') }}}  }\]
with $n$ the maximum priority in $\Sigma$.
\begin{enumerate}
    \item[(a)] For every $i  \in \mathtt{c}(\omega')$ with $\epsilon(i)=\mu$, if  $x^\gamma_i \le_{\Omega'} z^\alpha_i$ 
    then $x=z$ and $i \in \mathtt{c}(\omega) $.
    \item[(b)] For every $i<n$, if $ x^\gamma_i <_{\Omega'} z^\alpha_i$, then $i \in \mathtt{c}(\omega)$ and a $\mu L$ rule with priority $i$ is applied on $\mathbb{P}$.
    \item[(c)] For every $c \le n$ with $\epsilon(c)=\nu$, if $ x^\gamma_c \le_{\Omega'} z^\alpha_c$ , then no $\nu L$ rule with priority $c$ is applied on $\mathbb{P}$.
\end{enumerate}
\end{lemma}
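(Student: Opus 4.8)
The plan is to prove statements (a), (b) and (c) simultaneously by induction on the length of the path $\mathbb{P}$, i.e.\ on the number of rule applications between the conclusion $z^\alpha:\omega\vdash_\Omega P::(w^\beta:C)$ at the bottom and the premise $x^\gamma:\omega'\vdash_{\Omega'}P'::(y^\delta:C')$ at the top, peeling off the bottom-most rule and applying the induction hypothesis to the remaining shorter path. The base case (empty path) is immediate: there $x=z$, $\gamma=\alpha$, $\omega'=\omega$ and $\Omega'=\Omega$, so the $\le_{\Omega'}$ relations hold only reflexively, the strict hypotheses of (b) fail, and no fixed-point rules occur, making all three claims trivially true.

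Before the induction I would record two structural observations that drive every case. First, every (in)equality inserted into $\Omega$ by a rule of Figure~\ref{fig:stp-order} relates two channels \emph{at the same priority index} $i$; hence $\le_{\Omega}$ decomposes into an independent preorder for each index $i$, and I may reason one index at a time. Second, I would tabulate which rule contributes which relation at a fixed index $i$: a $\mu L$ of priority $i$ contributes the strict $z^{\alpha+1}_i<z^\alpha_i$ and a $\nu R$ of priority $i$ the strict $w^{\beta+1}_i<w^\beta_i$; a $\nu L$ of priority $i$ and a $\mu R$ of priority $i$ leave the adjacent generations \emph{incomparable} at $i$; every fixed-point rule of priority $\neq i$ contributes an \emph{equality} at $i$; and $\msc{Cut}$ links the fresh channel to an existing one at index $i$ only when $i\notin\mathtt{c}(A)$, where $A$ is the cut formula. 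I would also use the auxiliary monotonicity fact that every left rule can only shrink visibility — unfolding $t=_\mu A$ gives $\mathtt{c}(A)\subseteq\mathtt{c}(t)$, and a summand of $\oplus/\&$ sees no more priorities than the whole — together with the invariant that a fixed-point rule of priority $i$ can fire on a channel only if $i$ is visible from that channel's type.

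With these in hand the inductive step is a case analysis on the bottom rule. The routine left rules $\oplus L,\&L$ and all right rules leave the left channel and its index-$i$ relations untouched, so the hypothesis transfers verbatim, the induction hypothesis applies, and the visibility conclusion follows from monotonicity. A bottom $\mu L$ or $\nu L$ of priority $j\neq i$ inserts the equality $z^{\alpha+1}_i=z^\alpha_i$, so the hypothesis slides up to the next generation and the induction hypothesis closes the case; when $j=i$ the rule is itself the $\mu L$ witness required in (b), while for $\nu L$ in (c) it breaks the chain at index $i$ so that, using freshness, the hypothesis $x^\gamma_i\le_{\Omega'} z^\alpha_i$ cannot hold and the statement is vacuous.

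The genuinely delicate case, which I expect to be the main obstacle, is $\msc{Cut}$, and in particular its right premise, where the left base channel is replaced by the fresh channel $f^0$ and $\Omega$ gains $\mathtt{r}(z^\alpha)=\{f^0_i=z^\alpha_i\mid i\notin\mathtt{c}(A)\}$. Here I would split on whether $i\in\mathtt{c}(A)$: if $i\in\mathtt{c}(A)$ then, by the freshness and generation discipline on $\Omega$ (a consumed channel acquires no new relations above the cut), $z^\alpha_i$ is isolated in $\Omega'$ and the ordering hypothesis fails; if $i\notin\mathtt{c}(A)$ then $f^0_i=z^\alpha_i$ lets me rewrite the hypothesis in terms of $f^0$ and invoke the induction hypothesis, which returns $i\in\mathtt{c}(A)$, contradicting $i\notin\mathtt{c}(A)$. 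Either way the right-premise case is vacuous, and it is exactly the interplay between the $\mathtt{r}$-relations, the visibility set $\mathtt{c}(A)$, and the freshness presupposition on $\Omega$ that must be handled with care. The dual statements for right channels, which supply the right-$\nu$-trace half of Theorem~\ref{listor} through Lemma~\ref{dordtrace}, then follow by the symmetric argument.
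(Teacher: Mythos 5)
Your case tabulation matches the paper's, but your induction runs in the opposite direction, and that is where the proof breaks. The paper peels off the \emph{topmost} rule of $\mathbb{P}$: there the channel whose relations must be analyzed is the newest generation $x^\gamma$, which by the freshness presupposition occurs in no relation of the older order set, so every $\le_{\Omega'}$-chain leaving $x^\gamma_i$ must begin with one of the finitely many links just inserted by that topmost rule; this is what closes each case. You instead peel off the \emph{bottom-most} rule, so the decomposition has to happen at the other end of the chain, at $z^\alpha_i$ --- and $z^\alpha$ is \emph{not} fresh. It may already carry arbitrary incoming relations in $\Omega$ (accumulated from rules below the path), and channels created above the bottom rule can reach that old part of $\Omega$ through the right channel: right fixed-point rules and the $\mathtt{r}$-relations of cuts applied above both insert links mentioning $w^\beta_i$, after which a chain could in principle continue inside $\Omega$ toward $z^\alpha_i$. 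Hence your key vacuousness claims --- ``$z^\alpha_i$ is isolated in $\Omega'$'' in the cut case with $i \in \mathtt{c}(A)$, and ``using freshness, the hypothesis $x^\gamma_c \le_{\Omega'} z^\alpha_c$ cannot hold'' for a bottom $\nu L$ of priority $c$ --- do not follow from freshness alone. They need an additional global invariant about $\Omega$, roughly that the left and right lineages of any judgment reachable from the root are never $\le_\Omega$-related, so that no chain can cross into $\Omega$ via $w^\beta_i$ and re-emerge at $z^\alpha_i$. You neither state nor prove such an invariant, and neither of your two ``structural observations'' supplies it.

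A second, smaller omission: when the bottom rule is $\mu L$ of priority $i$ you only discharge part (b); part (a) still demands $x = z$, which again requires knowing that the chain $x^\gamma_i \le_{\Omega'} z^\alpha_i$ factors through the inserted link $z^{\alpha+1}_i < z^\alpha_i$ (so that the induction hypothesis applies to the subpath) --- the same missing decomposition. Both gaps disappear if you flip the induction and peel the topmost rule, as the paper does: then every case is settled by inspecting which links that rule inserts on the maximally fresh generation, and your per-priority bookkeeping, the equality ``slides'', the visibility monotonicity $\mathtt{c}(A)\subseteq\mathtt{c}(t)$, and your (correct) treatment of the cut subcase $i \notin \mathtt{c}(A)$ all carry over essentially unchanged.
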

\begin{proof}
 The complete proof by induction on the structure of $\mathbb{P}$ is given in Appendix~\ref{proofapp}.
\end{proof}
\begin{lemma}\label{dordtrace}
Consider a path $\mathbb{P}$ in the (infinite) typing derivation of a program $\mathcal{Q}=\langle V, S\rangle$ defined on a Signature $\Sigma$,
\[{\infer{\bar{z}^\alpha: \omega \vdash_{\Omega} P ::(w^{\beta}: C)}{ \infer{\vdots}{{\bar{x}^\gamma: \omega' \vdash_{\Omega'} P' :: ( y^{\delta}:C') }}}  }\]
with $n$ the maximum priority in $\Sigma$.
\begin{enumerate}
    \item[(a)] For every $i \in \mathtt{c}(\omega')$ with $\epsilon(i)=\nu$, if $ y^{\delta}_i \le_{\Omega'} w^{\beta}_i$, then $y=w$ and $i \in \mathtt{c}(\omega)$.
    \item[(b)] If $ y^{\delta}_i<_{\Omega'} w_i^{\beta}$, then $i \in \mathtt{c}(\omega) $ and a $\nu L$ rule with priority $i$ is applied on $\mathbb{P}$ .
    \item[(c)] For every $c\le n$ with $\epsilon(c)=\mu$, if $y^{\delta}_c \le_{\Omega'} w^{\beta}_c$, then no $\mu R$ rule with priority $c$ is applied on $\mathbb{P}$ .
\end{enumerate}
\end{lemma}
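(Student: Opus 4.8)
The plan is to prove Lemma \ref{dordtrace} by induction on the length of the path $\mathbb{P}$, following exactly the pattern used for Lemma \ref{ordtrace} but with the roles of antecedent and succedent exchanged and the roles of $\mu$ and $\nu$ (equivalently, of positive and negative polarity) exchanged. Abstractly the two statements are mirror images under this involution, and one might hope to obtain \ref{dordtrace} from \ref{ordtrace} by a single duality argument; however, the system of Figure \ref{fig:stp-order} is not perfectly self-dual, since we have restricted attention to sequents whose succedent is always present while the antecedent may be empty (the $1L$ rule can erase the antecedent, but nothing erases the succedent). Because of this asymmetry I would not rely on a formal dualization and would instead redo the induction directly, now tracking the right channel $w$ and the greatest-fixed-point ($\nu R$) steps.

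For the base case $\mathbb{P}$ has length zero, so the top and bottom sequents coincide ($y=w$, $\delta=\beta$, $C'=C$, $\Omega'=\Omega$); then (a) is exactly its own hypothesis, (b) holds vacuously since $<_{\Omega'}$ is irreflexive and the two generations are identical, and (c) holds vacuously since no rule is applied. For the inductive step I would case-split on the lowest rule $\mathcal{R}$ of $\mathbb{P}$ and invoke the induction hypothesis on the sub-path above its relevant premise, using the same order set $\Omega'$ for all comparisons. The routine cases are the antecedent rules $\oplus L$, $\&L$, $1L$, $\mu L$, $\nu L$, the succedent label rules $\oplus R$, $\&R$, and the transparent rule $\msc{Def}$: none of these changes the generation $w^\beta$ of the right channel, and the (in)equalities they add to $\Omega$ concern only generations of the \emph{left} channel, so by the freshness invariants maintained by the rules they do not alter the $\le_{\Omega'}$-comparison between $y^\delta$ and $w^\beta$, and all three statements transfer from the induction hypothesis. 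The only bookkeeping to check is visibility: each such rule passes from a type to one of its immediate subformulas, so by the definition of $\mathtt{c}$ we have $\mathtt{c}(\text{premise type}) \subseteq \mathtt{c}(\text{conclusion type})$, which is what lets the membership conditions $i \in \mathtt{c}(\cdot)$ propagate downward.

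The genuinely informative cases are the two succedent fixed-point rules at the lowest step. When $\mathcal{R}$ is $\nu R$ with priority $i$, it replaces $w^\beta$ by $w^{\beta+1}$, adding $w^{\beta+1}_{p(t)} <_{\Omega'} w^\beta_{p(t)}$ together with equalities $w^{\beta+1}_{p(s)} = w^\beta_{p(s)}$ for $p(s)\neq p(t)$; this is the \emph{only} mechanism that produces a strict decrease of a right-channel generation at a $\nu$-priority, which is precisely the phenomenon recorded in part (b) and which underlies the trace-preservation of part (a). When $\mathcal{R}$ is $\mu R$ with priority $c$ (so $\epsilon(c)=\mu$), it again advances $w^\beta$ to $w^{\beta+1}$ but now leaves $w^{\beta+1}_{c}$ and $w^\beta_c$ \emph{incomparable} while equating all other coordinates; this drives part (c), since if such a rule were applied at $c$ then $y^\delta_c \le_{\Omega'} w^\beta_c$ could not hold: by locality of $\Omega'$ any $\le_{\Omega'}$-chain from $y^\delta_c$ down to $w^\beta_c$ would have to pass through the incomparable step and is therefore blocked. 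In each case the induction hypothesis supplies the statement above the premise, and the explicit additions to $\Omega$ let me extend it down past $\mathcal{R}$.

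The hard part, as in Lemma \ref{ordtrace}, will be the $\msc{Cut}$ rule, where a fresh channel $w^0$ of type $A$ is spawned and the restriction $\mathtt{r}(v)=\{\,w^0_i = v_i \mid i\notin \mathtt{c}(A),\ i\le n\,\}$ is added. Across a cut the order between right-channel generations is transmitted \emph{only} at priorities not visible from the cut type $A$; the coordinates at priorities in $\mathtt{c}(A)$ are detached. The delicate point is to track, for each priority under consideration, whether the path continues into the left premise (where the right channel becomes the new $w^0:A$, so its name changes and right-traceability is broken) or into the right premise (where the succedent $y^\beta:C$ is preserved but its order is inherited only through $\mathtt{r}$ at priorities outside $\mathtt{c}(A)$). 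I expect the argument to hinge on showing that the visibility hypotheses of the statement match exactly the coordinates preserved by $\mathtt{r}$, so that a $\nu$-priority witnessing $y^\delta_i \le_{\Omega'} w^\beta_i$ must lie outside $\mathtt{c}(A)$ and is therefore carried faithfully across the cut; this is the step where the visibility function $\mathtt{c}$ does the real work and where the succedent-always-present asymmetry must be handled with care. Once all three parts are established, Lemma \ref{dordtrace} together with Lemma \ref{ordtrace} yields Theorem \ref{listor} as stated.
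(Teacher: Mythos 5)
Your overall strategy coincides with the paper's own proof, which consists literally of the sentence ``Dual to the proof of Lemma~\ref{ordtrace}'': that is, the mirrored induction on the topmost rule of $\mathbb{P}$ that you outline. Your base case, the routine cases, and the two succedent fixed-point cases are right, including your silent correction of the printed ``$\nu L$'' in part (b) to $\nu R$, which is what the rules of Figure~\ref{fig:stp-order} and the way the lemma is invoked in the proof of Theorem~\ref{listor} actually require.

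The genuine gap is in the $\msc{Cut}$ case, which you correctly identify as the crux but describe with the two premises interchanged, so the case analysis as written would not go through. In $\msc{Cut}^{w}$ of Figure~\ref{fig:stp-order}, it is the \emph{left} premise that replaces the succedent by the fresh channel $w^0:A$ and extends the order by $\mathtt{r}(y^\beta)=\{w^0_i=y^\beta_i \mid i\notin \mathtt{c}(A),\ i\le n\}$, while the \emph{right} premise keeps the succedent $y^\beta:C$ untouched and adds only $\mathtt{r}(\bar{x}^\alpha)$, which mentions antecedent channels alone. Hence, when the path enters the right premise, right-channel comparisons are inherited in full at every priority (the easy sub-case), contrary to your claim that they pass ``only through $\mathtt{r}$ at priorities outside $\mathtt{c}(A)$''; and when the path enters the left premise, a relation $y^\delta_i\le_{\Omega'} w^\beta_i$ can still cross the cut, namely through the equation $w^0_i=y^\beta_i$ when $i\notin\mathtt{c}(A)$ --- you say only that right-traceability is ``broken'' there. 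This left-premise sub-case, dual to Case $\msc{Cut}^{x}$ in the paper's proof of Lemma~\ref{ordtrace}, is where the real work happens: for part (b) one must use freshness of $w^0$ to factor the strict inequality through $\mathtt{r}(y^\beta)$ (forcing $i\notin\mathtt{c}(A)$) and then apply the induction hypothesis below the cut, while for part (a) the visible priorities $i\in\mathtt{c}(A)$ admit \emph{no} relation at all by freshness, making that case vacuous. This also shows that your phrase ``the visibility hypotheses of the statement match exactly the coordinates preserved by $\mathtt{r}$'' is inverted for part (a): the visible priorities are precisely the ones \emph{not} preserved by $\mathtt{r}$, and that complementarity, not a coincidence of the two sets, is what the argument runs on.
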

\begin{proof}
Dual to the proof of Lemma~\ref{ordtrace} given in Appendix~\ref{proofapp}.
\end{proof}

To illustrate Theorem~\ref{listor}, we present a few additional examples. The reader may skip the examples to get more directly to the main theorems (Lemma~\ref{algtoinf} and Theorem~\ref{thm:main}).

Define a program $\mathcal{P}_9:=\langle \{\mathtt{Succ},\mathtt{Copy},\mathtt{SuccCopy}\}, \mathtt{SuccCopy} \rangle$, over the signature $\Sigma_1$, using the process $\ w:\mathsf{nat} \vdash \mathtt{Copy} :: (y:\mathsf{nat})$ and two other processs: 
$x:\mathsf{nat} \vdash \mathtt{Succ} :: (w:\mathsf{nat})$ and $\ x:\mathsf{nat} \vdash \mathtt{SuccCopy}:: (y:\mathsf{nat})$. The processes are defined as
\begin{align*}
w \leftarrow \mathtt{Succ} \leftarrow x= Rw.\mu_{nat};Rw.s; w\leftarrow x 
\end{align*}
\begin{align*}
 y \leftarrow \mathtt{Copy} \leftarrow w= \mathbf{case}\, Lw\ (\mu_{nat} \Ra \mathbf{case}\, Lw\ & (\ \mathit{s} \Ra Ry.\mu_{nat}; Ry.\mathit{s}; y \leftarrow \mathtt{Copy} \leftarrow w\\
 & \mid \mathit{z} \Rightarrow Ry.\mu_{nat}; Ry.\mathit{z}; \mathbf{wait}\, Lw;\mathbf{close}\, Ry)) 
\end{align*}
\begin{align*}
y \leftarrow \mathtt{SuccCopy}\leftarrow x= w \leftarrow \mathtt{Succ} \leftarrow x ; y \leftarrow \mathtt{Copy} \leftarrow w,
\end{align*}
Process $\mathtt{SuccCopy}$ spawns a new process $\mathtt{Succ}$ and continues as $\mathtt{Copy}$. The $\mathtt{Succ}$ process prepends an $\mathit{s}$ label to the beginning of the finite string representing a natural number on its left hand side and then forwards the string as a whole to the right. $\mathtt{Copy}$ receives this finite string representing a natural number,  and \emph{forwards} it to the right label by label.

The only recursive process in this program is $\mathtt{Copy}$. So program $\mathcal{P}_9$, itself, does not have a further interesting point to discuss. We consider a bogus version of this program in Example~\ref{bogus} that provides further intuition for Theorem~\ref{listor}.

\begin{example}\label{bogus}
Define program $\mathcal{P}_{10}:=\langle \{\mathtt{Succ},\mathtt{BogusCopy},\mathtt{SuccCopy}\}, \mathtt{SuccCopy} \rangle$ over the signature 
\begin{align*}
 \Sigma_1 := \mathsf{nat} =^{1}_{\mu}\oplus \{\mathit{z}:1,\ \  \mathit{s}: \mathsf{nat}\},
\end{align*}
The processes
 $x:\mathsf{nat} \vdash \mathtt{Succ} :: (w:\mathsf{nat})$, $\ \ w:\mathsf{nat} \vdash \mathtt{BogusCopy} :: (y:\mathsf{nat})$, and $\ x:\mathsf{nat} \vdash \mathtt{SuccCopy}:: (y:\mathsf{nat})$,
 are defined as
 \begin{align*}
w \leftarrow \mathtt{Succ} \leftarrow x= Rw.\mu_{nat};Rw.s; w \leftarrow x 
\end{align*}
\begin{align*}
 y\leftarrow \mathtt{BogusCopy}\leftarrow w = \mathbf{case} Lw\, (\mu_{nat} \Ra \mathbf{case} Lw\, & (\mathit{s}\Rightarrow Ry.\mu_{nat}; Ry.\mathit{s}; y \leftarrow \mathtt{SuccCopy} \leftarrow w\\
 & \mid\mathit{z} \Rightarrow Ry.\mu_{nat}; Ry.\mathit{z}; \mathbf{wait}\, Lw;\mathbf{close}\, Ry)) 
\end{align*}
\begin{align*}
y \leftarrow \mathtt{SuccCopy} \leftarrow x = w \leftarrow \mathtt{Succ} \leftarrow x ; y \leftarrow \mathtt{BogusCopy}\leftarrow w
\end{align*}
Program $\mathcal{P}_{10}$ is a non-reactive \emph{bogus} program, since $\mathtt{BogusCopy}$ instead of calling itself recursively, calls $\mathtt{SuccCopy}$.  At the very beginning $\mathtt{SuccCopy}$ spawns $\mathtt{Succ}$ and continues with $\mathtt{BogusCopy}$ for a fresh channel $w$. $\mathtt{Succ}$ then sends a fixed point unfolding message and a {\it successor} label via $w$ to the right, while $\mathtt{BogusCopy}$ receives the two messages just sent by $\mathtt{Succ}$ through $w$ and calls $\mathtt{SuccCopy}$ recursively again. This loop continues forever, without any messages being received from the outside.

The first several steps of the derivation of $x^0:\mathsf{nat} \vdash_{\emptyset} \mathtt{SuccCopy}:: (y^0:\mathsf{nat})$ in our inference system (Section~\ref{rules}) are given below.

\setlength{\inferLineSkip}{5pt} 
\setlength{\inferLabelSkip}{1pt} 
{\small \[\infer[\msc{Def}]{{\color{red}x^0{:}\mathsf{nat}\vdash_{\emptyset} y^0 \leftarrow \mathtt{SuccCopy} \leftarrow x^0 :: (y^0{:}\mathsf{nat})}}{\infer[\msc{Cut}^w]{x^0{:}\mathsf{nat} \vdash_{\emptyset} w \leftarrow \mathtt{Succ}; y^0 \leftarrow \mathtt{BogusCopy} \leftarrow w:: (y^0{:}\mathsf{nat})}{\infer[\msc{Def}]{x^0{:}\mathsf{nat} \vdash_{\emptyset} {w^0} \leftarrow \mathtt{Succ}\leftarrow x^0:: (w^0{:} \mathsf{nat})}{\infer[\mu R]{x^0{:}\mathsf{nat} \vdash_{\emptyset} Rw^{0}.\mu_{nat}; \cdots :: (w^0{:} \mathsf{nat})}{\infer[\oplus R]{x^0{:}\mathsf{nat} \vdash_{\emptyset}  Rw^1.\mathit{s};\cdots :: (w^1{:}1 \oplus \mathsf{nat})}{\infer[\msc{Id}]{x^0{:}\mathsf{nat} \vdash_{\emptyset} w^1 \leftarrow x^0 :: (w^1{:} \mathsf{nat})}{}}}} \hspace{-16pt} & {\infer[\msc{Def}]{ w^0{:} \mathsf{nat}\vdash_{\emptyset} y^0 \leftarrow \mathtt{BogusCopy} \leftarrow w^0:: (y^0{:}\mathsf{nat})}{\infer[\mu L]{w^0{:} \mathsf{nat}\vdash_{\emptyset} \mathbf{case}\, Lw^0\ (\mu_{nat} \Ra \cdots) ::(y^0{:}\mathsf{nat})}{\infer[\oplus L]{w^1{:} 1\oplus \mathsf{nat}\vdash_{\{w^1_1<w^0_1\}} \mathbf{case}\, Lw^1\ (\cdots) ::(y^0{:}\mathsf{nat})}{\boldsymbol{\cdots} & {\color{blue} w^1{:}\mathsf{nat} \vdash_{\{w^1_1<w^0_1\}} y^0 \leftarrow \mathtt{SuccCopy}\leftarrow w^1 ::(y^0{:}\mathsf{nat})}}}}}}}\]}

\noindent
Consider the cycle between
\[\color{red}x^0:\mathsf{nat}\vdash_{\emptyset} y^0 \leftarrow \mathtt{SuccCopy} \leftarrow x^0 :: (y^0:\mathsf{nat})\]

\noindent and 
\[\color{blue}w^1:\mathsf{nat}\vdash_{\{w^1_1 < w^0_1\}} y^0 \leftarrow \mathtt{SuccCopy} \leftarrow w^1 :: (y^0:\mathsf{nat}).\]

\noindent
By Definition~\ref{tr}, this path is right traceable, but not left traceable. And by Definition~\ref{rt}, the path is neither a right $\nu$-trace nor a left $\mu$-trace:
\begin{enumerate}
\item No negative fixed point unfolding message is received from the right and $y^0$ does not evolve to a new generation that has a smaller value in its highest priority than $y^0_1$. In other words, $y^0_1 \not < y^0_1$ since no negative fixed point rule has been applied on the right channel.
\item The positive fixed point unfolding message that is received from the left is received through the channel $w^0$, which is a fresh channel created after $\mathtt{SuccCopy}$ spawns the process $\mathtt{Succ}$. Although $w^1_1<w^0_1$, since $x^0_1$ is incomparable to $w^0_1$, the relation $w^1_1<x^0_1$ does not hold. This path is not even a left-traceable path.
\end{enumerate}
Neither $[w^1]= [w^1_1] < [x^0_1]= [x^0]$, nor  $[y^0]= [y^0_1] < [y^0_1] =[y^0]$ hold, and this cycle does not satisfy the guard condition. This program is not locally valid either since $[w^1_1,y^0_1]\not < [x^0_1, y^0_1]$.
\end{example}

As another example consider the program
$\mathcal{P}_6=\langle\{\mathtt{Ping},\mathtt{Pong},\mathtt{PingPong}\},  \mathtt{PingPong} \rangle$ over the signature $\Sigma_4$ as defined in Example~\ref{begen1}. We discussed in Section~\ref{priority} that this program is not accepted by our algorithm as locally valid.

\begin{example}
Recall the definition of signature $\Sigma_4$:
\[\begin{aligned}
 \Sigma_4 :=\  & \mathsf{ack}=^{1}_{\mu} \oplus\{ \mathit{ack}:\mathsf{astream}\},\\
& \mathsf{astream}=^{2}_{\nu} \& \{\mathit {head}: \mathsf{ack}, \ \ \mathit{tail}: \mathsf{astream}\},\\
& \mathsf{nat}=^{3}_{\mu}\oplus \{\mathit{z}:1,\ \  \mathit{s}: \mathsf{nat}\}
 \end{aligned}\]
Processes
\[
  \begin{array}{l}
  x:\mathsf{nat} \vdash \mathtt{Ping} :: (w:\mathsf{astream}),\\
  w:\mathsf{astream} \vdash \mathtt{Pong} :: (y:\mathsf{nat}),\\
  x:\mathsf{nat} \vdash \mathtt{PingPong}:: (y:\mathsf{nat})
  \end{array}
\]
are defined as
\begin{align*}
 w \leftarrow \mathtt{Ping} \leftarrow x= \mathbf{case}\, Rw\, (\nu_{astream} \Ra \mathbf{case}\, Rw\, & ( \mathit{head} \Rightarrow Rw.\mu_{ack}; Rw.\mathit{ack}; w\leftarrow \mathtt{Ping} \leftarrow x\\ 
 & \mid \mathit{tail} \Rightarrow w \leftarrow \mathtt{Ping} \leftarrow x)) 
\end{align*}
\begin{align*}
  y \leftarrow \mathtt{Pong} \leftarrow w = & Lw.\nu_{astream};Lw.\mathit{head}; \\
  &\quad \mathbf{case}\, Lw\ (\mu_{ack} \Ra \mathbf{case}\, Lw\ (ack \Rightarrow Ry.\mu_{nat}; Ry.s;  y \leftarrow \mathtt{Pong} \leftarrow w))
\end{align*}
\begin{align*}
 y \leftarrow \mathtt{PingPong} \leftarrow x= w \leftarrow \mathtt{Ping}\leftarrow x ; y \leftarrow \mathtt{Pong} \leftarrow w
\end{align*}
 The first several steps of the proof of $x^0:nat \vdash_{\emptyset}  \mathtt{PingPong}:: (y^0:nat)$ in our inference system (Section~\ref{rules}) are given below (with some abbreviations).

\setlength{\inferLineSkip}{5pt} 
\setlength{\inferLabelSkip}{1pt}
{\small\[\infer[\msc{Def}]{{x^0{:}\mathsf{nat}\vdash_{\emptyset} y^0 \leftarrow \mathtt{PingPong} \leftarrow x^0 :: (y^0{:}\mathsf{nat})}}{ \infer[\msc{Cut}]{ x^0{:}\mathsf{nat} \vdash_{\emptyset} w \leftarrow \mathtt{Ping}\leftarrow x^0 ; y^0 \leftarrow \mathtt{Pong} \leftarrow w:: (y^0{:}\mathsf{nat})}{  \infer[\msc{Def}]{ {\color{red} x^0{:}\mathsf{nat} \vdash_{\emptyset} w^0 \leftarrow \mathtt{Ping}\leftarrow x^0 :: (w^0{:} \mathsf{astream})}}{\infer[\nu R]{x^0:\mathsf{nat} \vdash_{\emptyset} \mathbf{case}\, Rw^0\ (\nu_{astream} \Rightarrow \cdots) :: (w^0{:} \mathsf{astream})}{ \infer[\& R]{x^0{:}\mathsf{nat} \vdash_{A}  \mathbf{case}\, Rw^1\ (\cdots) :: (w^1{:}\mathsf{ack}\  \&\ \mathsf{astream})}{ \infer[\mu R]{x^0{:}\mathsf{nat} \vdash_{A} Rw^1. \mu_{ack}; \cdots ::(w^1{:}\mathsf{ack})}{ \infer[\oplus R]{x^0{:}\mathsf{nat} \vdash_{B} Rw^2. \mathit{ack}; \cdots ::(w^2{:}\oplus\{\mathsf{astream}\})}{ \color{blue} x^0{:}\mathsf{nat} \vdash_{B} w^2 \leftarrow \mathtt{Ping} \leftarrow x^0::(w^2{:}\mathsf{astream})}}\hspace{-25pt} & { x^0{:}\mathsf{nat} \vdash_{A} \boldsymbol{\cdots} ::(w^1{:}\mathsf{astream})}}}} \hspace{-53pt} & { w^0{:} \mathsf{astream}\vdash_{\emptyset} \boldsymbol{\cdots}:: (y^0{:}\mathsf{nat})}}}\]}

\noindent
where $A= {\{w^1_1=w^0_1, w^1_2<w^0_2, w^1_3=w^0_3\}}$, and
$B=\{w^1_1=w^0_1, w^2_2=w^1_2<w^0_2, w^2_3=w^1_3=w^0_3\}$.
The cycle between the processes \[{\color{red}x^0:\mathsf{nat} \vdash_{\emptyset}w^0 \leftarrow \mathtt{Ping} \leftarrow {x^0} :: (w^0: \mathsf{astream})}\] and 
  \[ {\color{blue} x^0:\mathsf{nat} \vdash_{B} w^2 \leftarrow \mathtt{Ping} \leftarrow x^0 ::(w^2:\mathsf{astream})}\] is neither a left $\mu$-trace, nor a right $\nu$-trace:
  \begin{enumerate}
    \item No fixed point unfolding message is received or sent through the left channels in this path and thus $[x^0]= [x_1^0,x_2^0,x_3^0]\not < [x_1^0,x_2^0,x_3^0]=[x^0]$.
    \item On the right, fixed point unfolding messages are both sent and received: (i) $w^0$ receives an unfolding message for a negative fixed point with priority $2$ and evolves to $w^1$, and then later (ii) $w^1$ sends an unfolding message for a positive fixed point with priority $1$ and evolves to $w^2$. But the positive fixed point has a higher priority than the negative fixed point, and thus this path is not a right $\nu$-trace either.
\end{enumerate}
This reasoning can also be reflected in our observation about the list of channels in Theorem~\ref{listor}: When, first, $w^0$ evolves to $w^1$ by receiving a message in (i) the relations $w^1_1=w^0_1$, $w^1_2<w^0_2$, and $w^1_3=w^0_3$ are recorded. And, later, when $w^1$ evolves to $w^2$ by sending a message in (ii) the relations  $w^2_2=w^1_2$, and $w^2_3=w^1_3$ are added to the set. This means that $w^2_1$ as the first element of the list $[w^2]$ remains incomparable to $w^0_1$ and thus  $[w^2]= [w_1^2,w_2^2,w_3^2] \not < [w_1^0,w_2^0,w_3^0]=[w^0]$.
\end{example}

We are now ready to state our main theorem that proves the local validity algorithm introduced in Section~\ref{alg} is {\it stricter} than the FS guard condition. Since the guard condition is defined over an infinitary system, we need to first map our local condition into the infinitary calculus given in Section~\ref{fig:stp-order}. 

\begin{lemma}\label{algtoinf}
Consider a finitary derivation (Figure~\ref{fig:validity}) for \[\langle\bar{u}, X, v\rangle; \bar{x}^\alpha: \omega \vdash_{\Omega,\subset} P ::(y^\beta:C),\] on a locally valid program $\mathcal{Q}= \langle V, S \rangle$ defined on signature $\Sigma$ and order $\subset$. There is a (potentially infinite) derivation $\mathbb{D}$ for 
\[\bar{x}^\alpha: \omega \vdash_{\Omega} P ::(y^\beta:C),\]
in the infinitary system of Figure~\ref{fig:stp-order}.

Moreover, for every $\bar{w}^{\gamma}:\omega' \vdash_{\Omega'} z^{\delta} \leftarrow Y \leftarrow \bar{w}^{\gamma} :: (z^{\delta}:C')$ in $\mathbb{D}$, we have  \[Y,list(\bar{w}^{\gamma},z^{\delta}) \mathrel{(\subset, <_{\Omega'})} X,list(\bar{u},v).\]
\end{lemma}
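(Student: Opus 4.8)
The plan is to build $\mathbb{D}$ by \emph{unfolding} the finitary derivation. Every rule of Figure~\ref{fig:validity} other than $\msc{Call}$ has an exact counterpart in Figure~\ref{fig:stp-order}, obtained by erasing the snapshot $\langle\bar{u},X,v\rangle$ and the order $\subset$; these translate one-for-one (in particular $\msc{Cut}$ maps to $\msc{Cut}$, renaming the spawned channel to keep it globally fresh). The only genuine work is at a $\msc{Call}$ node $\bar{z}^{\alpha}:\omega' \vdash_{\Omega'} w^{\beta}\leftarrow Y\leftarrow \bar{z}^{\alpha} :: (w^{\beta}:C')$: there I replace the leaf by an application of $\msc{Def}(Y)$ whose second premise is the membership condition $\bar{x}:\omega'\vdash Y=P_{\bar{x},y}::(y:C')\in V$ (exactly the side condition of $\msc{Call}$), and whose first premise is a copy of the finitary derivation of $Y$'s body, which exists because $\mathcal{Q}$ is locally valid. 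That body derivation is produced at generation $0$ with empty constraint set; I shift it to generations $\alpha,\beta$ and to the ambient set $\Omega'$ using Proposition~\ref{subuv}, which applies because the invariant on $\Omega$ from Section~\ref{rules} guarantees that the future generations $\bar{z}^{\alpha+i},w^{\beta+i}$ are fresh in $\Omega'$. Repeating this replacement coinductively on each newly inserted body yields a well-formed, possibly infinite derivation $\mathbb{D}$ of $\bar{x}^\alpha:\omega\vdash_\Omega P::(y^\beta:C)$ in the infinitary system, establishing the first claim.

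For the ordering claim I first record one bookkeeping fact: after the shift, the snapshot of the inserted body of $Y$ becomes $\langle\bar{z}^{\alpha},Y,w^{\beta}\rangle$, so it coincides with the channels occurring at the $\msc{Def}(Y)$ node of $\mathbb{D}$. Hence every $\msc{Call}$ leaf inside that body compares its own process and channels against $Y$ together with the channels \emph{at that $\msc{Def}(Y)$ node}. Since the lemma is applied at entry points of definitions, I may also assume the root snapshot coincides with the root channels, i.e. identify $\langle\bar{u},X,v\rangle$ with $\langle\bar{x}^{\alpha},X,y^{\beta}\rangle$. I then prove the estimate by induction on the number $d$ of $\msc{Def}$ nodes lying strictly between the root and a given call node $N$ of $\mathbb{D}$, a finite quantity since $N$ is reached from the root by a finite path.

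When $d=0$, the node $N$ arises from a $\msc{Call}$ leaf of the top-level finitary derivation, whose side condition is precisely $Y,list(\bar{w}^{\gamma},z^{\delta})\,(\subset,<_{\Omega'})\,X,list(\bar{x}^{\alpha},y^{\beta})$, as required. For the step, let $M$ be the nearest $\msc{Def}$-ancestor of $N$; say $M$ introduces $Z$ with channels giving the list $\ell_M$, and $N$ lies in the shifted body of $Z$ above $M$. The $\msc{Call}$ rule inside that body yields $Y,list(\bar{w}^{\gamma},z^{\delta})\,(\subset,<_{\Omega_N})\,Z,\ell_M$, while the induction hypothesis applied to $M$ (of $\msc{Def}$-depth $d-1$) yields $Z,\ell_M\,(\subset,<_{\Omega_M})\,X,list(\bar{x}^{\alpha},y^{\beta})$. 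As $\Omega_M\subseteq\Omega_N$ and $<_{\Omega}$ is monotone in $\Omega$ (the reflexive transitive closure only grows), both inequalities hold already in $<_{\Omega_N}$; transitivity of the combined order $(\subset,<_{\Omega_N})$ from Definition~\ref{defmain} then gives $Y,list(\bar{w}^{\gamma},z^{\delta})\,(\subset,<_{\Omega_N})\,X,list(\bar{x}^{\alpha},y^{\beta})$, closing the induction.

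The main obstacle is the interplay between snapshots, generation shifts, and the accumulating constraint set. Transitivity is legitimate only because the snapshot of each unfolded body is realigned, via Proposition~\ref{subuv}, to the channels present at its $\msc{Def}$ node, so that the object ``$Z,\ell_M$'' produced by the inner $\msc{Call}$ is literally the one bounded by the induction hypothesis; and because the two inequalities being chained live in comparable sets $\Omega_M\subseteq\Omega_N$. Confirming that the governing invariants, namely freshness of future generations in $\Omega$ and monotonicity of $<_{\Omega}$ under extension of $\Omega$, are preserved by every rule of Figure~\ref{fig:stp-order} is the delicate part, even though each individual check is routine once the rules are inspected.
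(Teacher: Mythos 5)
Your proposal is correct and matches the paper's own proof in all essentials: the paper likewise builds $\mathbb{D}$ coinductively, invoking local validity and Proposition~\ref{subuv} at each $\msc{Call}$ node to re-index the called definition's body (thereby realigning its snapshot with the channels at the new $\msc{Def}$ node) before wrapping it with $\msc{Def}$, and it closes the ordering claim by chaining the $\msc{Call}$ side condition with the inner comparison, using monotonicity of $<_{\Omega}$ under extension of $\Omega$ (its Lemma~\ref{basics}) and transitivity of $(\subset,<)$. Your only structural deviation---establishing the ordering estimate by induction on the number of $\msc{Def}$ nodes below a given call node rather than threading it through the coinduction---is an explicit unrolling of the same chaining argument, and your identification of the root snapshot with the root channels is exactly the reading under which the stated conclusion coincides with what the paper's own proof establishes.
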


\begin{proof}
This lemma is a special case of Lemma~\ref{lem:coinduction} proved in Appendix~\ref{proofapp}.
\end{proof}

\begin{theorem}\label{thm:main}
A locally valid program satisfies the FS guard condition.
\end{theorem}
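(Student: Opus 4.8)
The plan is to unfold the definition of the FS guard condition (Definition~\ref{guard}) and verify its requirement one cycle at a time. Fix a locally valid program $\mathcal{Q}=\langle V,S\rangle$ and an arbitrary cycle $\mathbb{C}$ in its infinitary typing derivation, with lower endpoint $\bar{z}^\alpha:\omega\vdash_{\Omega} w^\beta\leftarrow X\leftarrow\bar{z}^\alpha::(w^\beta:C)$ and upper endpoint $\bar{x}^\gamma:\omega'\vdash_{\Omega'} y^\delta\leftarrow X\leftarrow\bar{x}^\gamma::(y^\delta:C')$, both calling the \emph{same} process variable $X$ (Definition~\ref{rt}). By local validity there is a finitary derivation for the definition of $X$, and by Proposition~\ref{subuv} I would shift its generations and initial constraint set so that it starts exactly at the lower endpoint of $\mathbb{C}$. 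Feeding this finitary derivation into Lemma~\ref{algtoinf} produces the corresponding infinite derivation $\mathbb{D}$ together with the crucial uniform inequality: \emph{every} call occurring in $\mathbb{D}$ is related to the root by $(\subset,<)$. Since the upper endpoint of $\mathbb{C}$ is itself a call to $X$ inside $\mathbb{D}$, this yields
$$X,\,list(\bar{x}^\gamma,y^\delta)\ (\subset,\,<_{\Omega'})\ X,\,list(\bar{z}^\alpha,w^\beta).$$
Note that this already absorbs the difficulty of \emph{non-immediate} cycles: because $(\subset,<)$ is transitive, the relation is guaranteed for calls at arbitrary depth, so I never need to search for or decompose intermediate calls.

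Next I would specialise this to a genuine cycle, where both calls use the same variable $X$. Since $X$ lies in the domain of exactly one $\subseteq_i$ we have $X\cong X$ and certainly not $X\subset X$, so clause~(1) of Definition~\ref{defmain} is inapplicable and the established relation can hold only through clause~(2). Hence it collapses to the purely numerical comparison
$$list(\bar{x}^\gamma,y^\delta)\ <_{\Omega'}\ list(\bar{z}^\alpha,w^\beta).$$

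The remaining, and most delicate, step is to split this comparison on the \emph{merged} list back into a disjunction on the two \emph{separated} channel lists $[x^\gamma]=[x_1^\gamma,\dots,x_n^\gamma]$ and $[y^\delta]=[y_1^\delta,\dots,y_n^\delta]$ used in Theorem~\ref{listor}. Here I would unfold Definition~\ref{list}: reading $list(\_,\_)$ left to right gives, for each priority $i$ in increasing order, the pair $(\bar{x}_i,y_i)$ when $\epsilon(i)=\mu$ and $(y_i,\bar{x}_i)$ when $\epsilon(i)=\nu$, i.e.\ a flat interleaving of the left and right values in which, at each priority, the \emph{receiving} channel comes first. A strict lexicographic inequality of two such interleavings relative to the partial order $<_{\Omega'}$ is witnessed by a first position at which the entries differ: all earlier entries are equal in $\le_{\Omega'}$ and this entry is strictly smaller. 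That witnessing position is either a left-channel value or a right-channel value. If it is a left-channel value $\bar{x}_i<_{\Omega'}\bar{z}_i$, then all left entries at priorities $j<i$ occur earlier in the interleaving and are therefore equal, so the first place where $[x^\gamma]$ and $[z^\alpha]$ differ is $i$ and $[x^\gamma]<_{\Omega'}[z^\alpha]$; symmetrically, a right-channel witness gives $[y^\delta]<_{\Omega'}[w^\beta]$. Since both endpoints call the same $X$, the left interface is empty for one iff it is empty for the other, so when a left-channel witness occurs the left channels are non-empty and the side condition of Theorem~\ref{listor} is met, while when they are empty only a right-channel witness can arise.

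Finally I would invoke Theorem~\ref{listor}: $[x^\gamma]<_{\Omega'}[z^\alpha]$ makes $\mathbb{C}$ a left $\mu$-trace and $[y^\delta]<_{\Omega'}[w^\beta]$ makes it a right $\nu$-trace, which is exactly what Definition~\ref{guard} demands of every cycle. As $\mathbb{C}$ was arbitrary, $\mathcal{Q}$ satisfies the FS guard condition. I expect the interleaving-split argument of the third paragraph to be the main obstacle: one must argue carefully that ``equality of all earlier merged entries'' transfers to equality of all lower-priority entries in \emph{each} of the two separated lists, and that this is compatible with $<_{\Omega'}$ being only a partial order, so that ``differ'' must be read as ``not equal in $\le_{\Omega'}$'' and the witnessing entry must be \emph{strictly} below rather than merely incomparable.
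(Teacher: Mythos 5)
Your proposal is correct and follows essentially the same route as the paper's proof: local validity plus Proposition~\ref{subuv} and Lemma~\ref{algtoinf} yield $X,\,list(\bar{x}^\gamma,y^\delta)\ (\subset,<_{\Omega'})\ X,\,list(\bar{z}^\alpha,w^\beta)$, which collapses to the merged-list inequality $list(\bar{x}^\gamma,y^\delta) <_{\Omega'} list(\bar{z}^\alpha,w^\beta)$ and is then split according to whether the lexicographic witness sits in a left- or right-channel slot. The only difference is one of packaging: you discharge the final step by citing Theorem~\ref{listor} (including its non-emptiness side condition, which you handle correctly), whereas the paper inlines the corresponding applications of parts (a)--(c) of Lemmas~\ref{ordtrace} and~\ref{dordtrace}, of which Theorem~\ref{listor} is the stated corollary.
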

\begin{proof}
Consider a cycle $\mathbb{C}$ on a (potentially infinite) derivation produced from $\langle \bar{u}, Y, v \rangle; \bar{z}^\alpha: \omega \vdash_{\Omega} w^\beta \leftarrow  X \leftarrow \bar{z}^\alpha :: (w^{\beta}:C)$ as in Lemma~\ref{algtoinf},
\[\infer[\msc{Def}]{\bar{z}^\alpha: \omega \vdash_{\Omega} w^\beta \leftarrow  X \leftarrow \bar{z}^\alpha :: (w^{\beta}:C)}{\infer{\bar{z}^\alpha: \omega \vdash_{\Omega} P_{\bar{z}^\alpha, w^\beta} :: (w^{\beta}:C)}{ \infer{\vdots}{\infer[\msc{Def}]{\bar{x}^\gamma: \omega \vdash_{\Omega'} y^\delta \leftarrow X \leftarrow \bar{x}^\gamma :: (y^{\delta}:C) }{\bar{x}^\gamma: \omega \vdash_{\Omega'} P_{\bar{x}^\gamma, y^\delta} :: (y^{\delta}:C) & \bar{z}:\omega \vdash X=P_{\bar{z},w} :: (w:C) \in V }}} &\hspace{-90pt} \bar{z}:\omega \vdash X=P_{\bar{z},w} :: (w:C) \in V}\]
By Lemma~\ref{algtoinf} we get 
\[X, list(\bar{x}^\gamma, y^{\delta})\ (\subset, <_{\Omega'})\ X, list(\bar{z}^\alpha,w^{\beta}),\] and thus by definition of $(\subset, <_{\Omega'})$, 
\[list (\bar{x}^\gamma, y^{\delta}) <_{\Omega'}  list (\bar{z}^\alpha,w^{\beta}).\]
Therefore, there is an $i\le n$, such that either 
\begin{enumerate}
    \item $\epsilon(i)=\mu,$ $x^\gamma_i < z^\alpha_i,$ and $x^\gamma_l = z^\alpha_l$ for every $l<i$, having that $\bar{x}=x$ and $\bar{z}=z$ are non-empty, or 
    \item $\epsilon(i)=\nu,$ $y^{\delta}_i < w^{\beta}_i,$ and $y^{\delta}_l = w^{\beta}_l$ for every $l<i$.
\end{enumerate}
In the first case, by part (b) of Lemma~\ref{ordtrace}, a $\mu L$ rule with priority $i \in \mathtt{c}(\omega)$ is applied on $\mathbb{C}$. By part (a) of the same Lemma $x=z$, and by its part (c), no $\nu L$ rule with priority $c<i\ $ is applied on $\mathbb{C}$. Therefore, $\mathbb{C}$ is a left $\mu$- trace. \\
In the second case, by part (b) of Lemma~\ref{dordtrace}, a $\nu R$ rule with priority $i \in \mathtt{c}(\omega)$ is applied on $\mathbb{C}$. By part (a) of the same Lemma $y=w$ and by its part (c), no $\mu R$ rule with priority $c<i\ $ is applied on $\mathbb{C}$. Thus, $\mathbb{C}$ is a right $\nu$- trace. 
\end{proof}

By Theorem~\ref{listor}, a cycle $\mathbb{C}$ 
\[{\infer{\bar{z}^\alpha: \omega \vdash_{\Omega}w^\beta \leftarrow X \leftarrow \bar{z}^\alpha :: (w^{\beta}:C)}{ \infer{\vdots}{{\bar{x}^\gamma: \omega' \vdash_{\Omega'} y^\delta \leftarrow X \leftarrow \bar{x}^\gamma :: (y^{\delta}:C')}}}  }\]
is either a {\it left $\mu$-trace} {or} a {\it right $\nu$-trace} if either $[x^\gamma]<_{\Omega'} [z^\alpha]$ {\bf or} $[y^\delta]<_{\Omega'} [w^\beta]$ holds. Checking a disjunctive condition for each cycle implies that the FS guard condition cannot simply analyze each path from the beginning of a definition to a call site in isolation and then compose the results---instead, it must unfold the definitions and examine every cycle, possibly composed of smaller individual cycles, in the infinitary derivation separately. 
In other words, the FS guard condition may accept two individual cycles but reject their combination.

In our algorithm, however, we form a transitive condition by merging the lists of left and right channels, e.g. $[x^\gamma]$ and $[y^\delta]$ respectively, into a single list $list(x^{\gamma},y^{\delta})$. The values in  $list(x^{\gamma},y^{\delta})$ from Definition~\ref{list} are still recorded in their order of priorities, but for the same priority the value corresponding to {\it receiving} a message precedes the one corresponding to {\it sending} a message. As described in Definition~\ref{defmain} we merge this list with process variables to check all immediate calls even those that do not form a cycle in the sense of the FS guard condition (that is, when process $X$ calls process $Y\neq X$).

{Transitivity of our validity check condition is the key to establishing its locality. Our algorithm only checks the condition for the immediate calls that a process makes. As this condition enjoys transitivity, it also holds for all possible non-immediate recursive calls, including any combination of the immediate calls. As a result, we do not need to search for every possible cycle in the infinitary derivation.}

\begin{remark}\label{complexity}
We briefly analyze the asymptotic complexity of our algorithm.
Let $n$ be the number of priorities and $s$ the size of the signature, where we add in the sizes of all types $A$ appearing in applications of the $\msc{Cut}$ rule.  In time $O(n\,s)$ we can compute a table to look up $i \in c(A)$ for all priorities $i$ and types $A$ appearing in cuts.

Now let $m$ be the size of the program (not counting the signature).
We traverse each process definition just once, maintaining a list of relations between the current and original channel pairs for each priority.  We need to update at most $2n$ entries in the list at each step and compare at most $2n$ entries at each $\msc{Call}$ rule.  Furthermore, for each $\msc{Cut}$ rule we have a constant-time table lookup to determine if $i \in c(A)$ for each priority $i$.
Therefore, analysis of the process definitions takes time $O(m\, n)$.

Putting it all together, the time complexity is bounded by $O(m\, n + n\, s) = O(n\, (m+s))$.
In practice the number of priorities, $n$, is a small constant so validity checking is linear in the total input, which consists of the signature and the process definitions.  As far as we are aware of, the best upper bound for the complexity of the FS guard condition is $\mathit{PSPACE}$~\cite{Doumane17phd}.

It is also interesting to note that the complexity of type-checking itself is bounded below by $O(m + s^2)$ since, in the worst case, we need to compute equality between each pair of types.  That is, validity checking is faster than type-checking.
\end{remark}

Another advantage of locality derives from the fact that our algorithm checks each process definition independent of the rest of the program: we can safely reuse a previously checked locally valid process in other programs defined over the same signature and order $\subset$ without the need to verify its local validity again. 

\section{Computational Meta-theory}
\label{semantics}

Fortier and Santocanale~\cite{Fortier13csl} defined a function $\textsc{Treat}$  as a part of their cut elimination algorithm. They proved that this function terminates on a list 
of pre-proofs fused by consecutive cuts if all of them satisfy their guard condition. 
 In our system, function $\textsc{Treat}$ corresponds to computation on a configuration of processes. In this section we first show that the usual preservation and progress theorems hold even if a program does not satisfy the validity condition. Then we use Fortier and Santocanale's result to prove a stronger compositional progress property for (locally) valid programs.
 
In Section~\ref{operat}, we introduced process configurations $\mathcal{C}$ as a list of processes connected by the associative, noncommutative parallel composition operator $\mid_x$. 
\[\mathcal{C} ::= \cdot \mid \mathtt{P} \mid (\mathcal{C}_1 \mid_{x}\ \mathcal{C}_2)\]
with unit $(\cdot)$.
The type checking judgments for configurations
$\bar{x}: \omega \Vdash \mathcal{C}:: (y:B)$ are:\\
\[
\begin{tabular}{c c c c c}
\infer[]{x:A \Vdash \cdot :: (x:A) }{} & &
\infer[]{\bar{x}:\omega \Vdash P :: (y:B)}{\bar{x}:\omega \vdash P :: (y:B)} &
& \infer[]{\bar{x}:\omega \Vdash C_1 |_{\ z} \ C_2 :: (y:B)}{\bar{x}:\omega \Vdash C_1:: (z:A) & z:A \Vdash C_2 :: (y:B)}\\\\
\end{tabular}
\]
A configuration can be read as a list of processes connected by consecutive cuts. Alternatively, considering $\mathcal{C}_1$  and $\mathcal{C}_2$ as two processes, configuration $\mathcal{C}_1 \mid_{z} \mathcal{C}_2$ can be read as their composition by a cut rule $(z \leftarrow \mathcal{C}_1; \mathcal{C}_2)$. In section~\ref{operat}, we defined an operational semantics on configurations using transition rules. Similarly, these computational transitions can be interpreted as cut reductions called ``internal operations'' by Fortier and Santocanale. The usual preservation theorem ensures types of a configuration are preserved during computation \cite{DeYoung16aplas}.

\begin{theorem}\label{preservation} (Preservation)
For a configuration  $\bar{x}:\omega \Vdash \mathcal{C} :: (y:A) $, if $\mathcal{C} \mapsto \mathcal{C}'$ by one step of computation, then $\bar{u}:\omega \Vdash \mathcal{C}' : (w:A)$ for some channels $\bar{u}$ and $w$.
\end{theorem}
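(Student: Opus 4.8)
The plan is to proceed by a case analysis on the single computation step $\mathcal{C} \mapsto \mathcal{C}'$, following the standard recipe for preservation in session-typed systems; crucially, the argument is purely syntactic and uses only the process-assignment rules of Figure~\ref{fig:annotrule}, so it is insensitive to the generation/ordering information $\Omega$ and hence holds regardless of validity. Since a transition may be applied anywhere inside a configuration, I would first use the configuration typing rules to locate the redex: decomposing $\mathcal{C}$ along the parallel-composition operator $\mid_z$, the subconfiguration containing the redex is itself typed by the composition rule, so it suffices to establish preservation for the redex and then reassemble the surrounding context. This framing step reduces the theorem to checking each of the nine transition rules in isolation, and at the same time explains why the external channel names (but not the types $\omega$ and $A$) may change: whenever a fresh channel is introduced, the interface channel shared with the context is renamed, which is exactly the slack captured by ``for some channels $\bar{u}$ and $w$''.

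For the message-passing transitions (sending a label left or right, closing a channel, and sending a $\mu_t$ or $\nu_t$ unfolding message) I would invert the typing derivations of the two interacting processes. Because the rules of Figure~\ref{fig:annotrule} are syntax-directed, the sender must have been typed by a right/left introduction rule ($\oplus R$, $\& L$, $1R$, $\mu R$, $\nu L$) and the receiver by the matching rule ($\oplus L$, $\& R$, $1L$, $\mu L$, $\nu R$). Inversion extracts a typing for each continuation at the reduced type: e.g. for $(Rx.k; P) \mid_x \mathbf{case}\, Lx\,(\ell \Ra Q_\ell)$ we recover $P :: (x:A_k)$ and $Q_k$ typed from $x:A_k$, and recomposing the two continuations with $\msc{Cut}$ along $x$ yields a derivation of $P \mid_x Q_k$ with the same external type. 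These are precisely the principal cut reductions, so no channel renaming is needed here.

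For the identity (forward) and cut (spawn) transitions I would argue as follows. The spawn step arises from a $\msc{Cut}^w$ node; inverting it gives $P_w :: (w:A)$ and $Q_w :: (y:\gamma)$, and after the globally fresh renaming $w \mapsto z$ these compose to type $P_z \mid_z Q_z$ with the original external type. The forward step eliminates an $\msc{Id}$ process $y \leftarrow x$, whose typing forces the two adjacent types to coincide; replacing both $x$ and $y$ by a fresh $z$ throughout the remainder of the configuration re-links the neighbouring components directly. Finally the call transition is immediate, since the $\msc{Def}(X)$ premise supplies a typing of the unfolded body $P_{\bar{x},\bar{y}}$ at exactly the interface of the call.

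I expect the main obstacle to be the forward rule, because it performs a \emph{global} channel substitution across the whole configuration rather than a local rewrite: the bookkeeping needed to show that this substitution preserves the configuration typing, and that freshness of $z$ keeps the resulting channel assignment well-formed, is where the freedom in the choice of $\bar{u}$ and $w$ is actually consumed. The cleanest way to dispatch it is to prove, once and for all, an auxiliary \emph{renaming lemma}: if $\bar{x}:\omega \Vdash \mathcal{C} :: (y:A)$ then $\bar{x}\sigma:\omega \Vdash \mathcal{C}\sigma :: (y\sigma:A)$ for any injective renaming $\sigma$ of channels, and then to invoke it in both the forward and spawn cases. With that lemma in hand, every remaining case is a routine inversion followed by a recomposition, and the theorem follows.
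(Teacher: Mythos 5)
Your proposal is correct and takes essentially the same approach as the paper: the paper's entire proof is the one-line appeal ``this property follows directly from the correctness of cut reduction steps,'' and your case analysis---inverting the typings of the two interacting processes, recomposing the continuations by cut, and handling forward/spawn via a fresh-renaming lemma---is precisely the detailed content of that appeal. You have simply made explicit the routine details (including the renaming bookkeeping that accounts for the ``for some channels $\bar{u}$ and $w$'' slack) that the paper leaves implicit.
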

\begin{proof}
This property follows directly from the correctness of cut reduction steps.
\end{proof}
The usual progress property as proved below ensures that computation makes progress or it attempts to communicate with an external process.
\begin{theorem}(Progress)
If $\bar{x}:\omega \Vdash \mathcal{C}:: (y:A)$, then either
\begin{enumerate}
    \item $\mathcal{C}$ can make a transition,
    \item or $\mathcal{C}=  (\cdot)$ is empty,
    \item or $\mathcal{C}$ attempts to communicate either to the left or to the right.
\end{enumerate}
\end{theorem}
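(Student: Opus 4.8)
The plan is to argue by induction on the derivation of the configuration typing judgement $\bar{x}:\omega \Vdash \mathcal{C} :: (y:A)$, following the three configuration rules given above. This is the standard session-typed progress argument; the only genuinely delicate point is the synchronization of two subconfigurations across an internal cut channel, and everything hinges on a polarity-based inversion lemma.

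First I would handle the two leaves. If $\mathcal{C}$ is typed by $x:A \Vdash \cdot :: (x:A)$, then conclusion (2) holds. If $\mathcal{C}$ is a single process $P$ with $\bar{x}:\omega \vdash P :: (y:A)$, I would case-analyze the head construct of $P$ from Definition \ref{process}. A cut $(w \leftarrow P_w; Q_w)$ and a call $y \leftarrow X \leftarrow x$ each fire an internal transition (spawn and definition unfolding), giving (1); every other construct is a message-passing action blocked on an external channel, giving (3). Along the way I would establish the central \emph{inversion lemma}: whenever a well-typed process is poised on a channel of type $B$, the \emph{direction} of its action is forced---a provider sends on a positive $B$ and receives on a negative $B$, and dually for a client---so that two processes poised on opposite ends of one channel always present a matching send/receive pair.

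The bulk of the work is the composition case $\mathcal{C} = \mathcal{C}_1 \mid_z \mathcal{C}_2$ with $\bar{x}:\omega \Vdash \mathcal{C}_1 :: (z:A)$ and $z:A \Vdash \mathcal{C}_2 :: (y:B)$. I would apply the induction hypothesis to both factors. If either factor can step, so can $\mathcal{C}$, since the transitions of Section \ref{operat} apply anywhere in a configuration; if a factor is empty I would collapse the composition by the unit law for $\mid$ and reuse the induction hypothesis on the other factor; and if $\mathcal{C}_1$ is blocked on its external left or $\mathcal{C}_2$ on its external right, then $\mathcal{C}$ attempts to communicate externally, giving (3). The decisive subcase is when $\mathcal{C}_1$ presents an action on $z$ (its right) and $\mathcal{C}_2$ presents an action on $z$ (its left): by the inversion lemma these two actions are dual with respect to the common type $A$, so one of the synchronizing rules of Section \ref{operat}---a label send against a $\mathbf{case}$, a $\mathbf{close}$ against a $\mathbf{wait}$, a $\mu_t$ or $\nu_t$ unfolding send against its receiving $\mathbf{case}$, or the forward rule when either side is an identity---fires, giving (1).

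I expect the main obstacle to be that the three-way conclusion does not, on its own, record the behavior of a subconfiguration at the \emph{internal} channel $z$: the induction hypothesis may report, say, that $\mathcal{C}_1$ is blocked on its left while leaving open whether its right boundary on $z$ is ready. The fix is to strengthen the invariant into a \emph{poisedness dichotomy}: a non-empty configuration that cannot step is either blocked on its leftmost (external-left) process, or is poised to the right in a cascade, so that its rightmost process is blocked on the external-right channel and, in particular, its right boundary on $z$ presents an action. Proving this cascade---essentially that a well-typed chain cannot deadlock internally, with the forward process treated separately since it fires as soon as it has neighbors on both sides---is the technical heart of the composition case. Once it is available, combining it with the inversion lemma makes the synchronization subcase immediate by inspection of the transition table.
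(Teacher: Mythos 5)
Your proposal is correct and takes the same route as the paper: the paper's entire proof is the single sentence ``by structural induction on the configuration typing,'' and your argument is exactly that induction, fleshed out with the standard details (case analysis on the head construct of a single process, and the typing-inversion/duality argument for synchronization at the internal channel $z$, with the forward process as a special case). One minor observation: the strengthened ``poisedness dichotomy'' is not actually needed, since the plain three-way induction hypothesis already closes every case---if $\mathcal{C}_1$ is blocked on its external left then so is the whole configuration, so the only case requiring work is when $\mathcal{C}_1$ is poised on $z$ from the left and $\mathcal{C}_2$ is poised on $z$ from the right, which your inversion lemma handles.
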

\begin{proof}
The proof is by structural induction on the configuration typing.
\end{proof}
In the presence of (mutual) recursion, this progress property is not strong enough to ensure that a program does not get stuck in an infinite inner loop. Since our local validity condition implies the FS guard condition, we can use their results for a stronger version of the progress theorem on valid programs. 

\begin{theorem}\label{progress}(Strong Progress)
Configuration $\bar{x}:\omega \Vdash \mathit{C}:: (y:A)$ of (locally) valid processes satisfies the progress property. Furthermore, after a finite number of steps, either
\begin{enumerate}
    \item $\mathcal{C}= (\cdot)$ is empty,
    \item or $\mathcal{C}$ attempts to communicate to the left or right.
\end{enumerate}
\end{theorem}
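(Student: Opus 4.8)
The plan is to obtain the ordinary progress property directly and to derive the strong, finite-step refinement by transporting it to Fortier and Santocanale's termination result for their cut-elimination function $\textsc{Treat}$. The first sentence of the statement is immediate: a configuration of locally valid processes is in particular well-typed, so the Progress theorem proved just above applies and yields, at every reachable configuration, one of its three alternatives (take a step, be empty, or attempt external communication). What remains, and where the real work lies, is to show that the ``take a step'' alternative cannot be chosen indefinitely without eventually forcing one of the other two; that is, that only finitely many purely internal transitions can be performed.

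First I would set up the correspondence between configurations and pre-proofs. By the configuration typing rules, a configuration $\mathcal{C} = P_0 \mid_{x_1} \cdots \mid_{x_n} P_n$ is exactly a list of the circular pre-proofs $D_0,\dots,D_n$ (the typing derivations of the $P_i$) fused by consecutive cuts---precisely the shape of input on which $\textsc{Treat}$ operates. Since the processes are drawn from a locally valid program, the main theorem of Section~\ref{Guard} (local validity implies the FS guard condition) guarantees that each $D_i$ is guarded. I would then identify our operational transitions with the internal operations of $\textsc{Treat}$: the message-passing rules (sending a label, closing a channel, and the isorecursive unfolding messages $\mu_t$ and $\nu_t$) are the principal cut reductions, the forwarding rule is the identity reduction, the spawn rule realizes a process-level cut as a configuration-level cut, and the $\msc{Def}$/call-unfolding rule unfolds a cycle of the circular derivation. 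Crucially, none of these is an external communication.

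With this dictionary in place the argument concludes quickly. Fortier and Santocanale~\cite{Fortier13csl} proved that $\textsc{Treat}$ halts on any list of guarded pre-proofs joined by cuts; translated through the correspondence, this says that from $\mathcal{C}$ only finitely many internal transitions can be performed. Hence after a finite number of steps the computation reaches a configuration $\mathcal{C}'$ admitting no internal redex. By Preservation (Theorem~\ref{preservation}) $\mathcal{C}'$ is still well-typed, so the ordinary Progress theorem applies to it; since $\mathcal{C}'$ can take no internal step, its first alternative is unavailable and $\mathcal{C}'$ must be empty or attempting to communicate along its external left or right interface---exactly the two cases in the statement.

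The hard part will be making the correspondence between our synchronous, \emph{isorecursive} operational semantics and Fortier and Santocanale's internal operations fully precise, so that their termination theorem genuinely transfers. Two points need care. First, our unfolding messages $\mu_t$ and $\nu_t$ have no direct counterpart in the original setting; I would argue that each such synchronization matches the corresponding fixed-point cut reduction, so that the priority bookkeeping underlying guardedness is respected step for step. Second, I must account for the additional unit $1$ (we have already restricted to right-providing processes, so $\bot$ does not arise); as observed in Section~\ref{Guard}, the reductions for $1$ only furnish extra ways for the cut-reduction algorithm to terminate and hence cannot jeopardize halting. A final, more delicate check is that an ``external communication attempt'' is a stable terminal state for the configuration taken in isolation, so that the only way to continue forever is through internal steps---which is precisely what termination of $\textsc{Treat}$ rules out.
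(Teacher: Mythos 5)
Your proposal is correct and follows essentially the same route as the paper's own proof: both establish the correspondence between the operational transitions on configurations and the internal operations of Fortier and Santocanale's $\textsc{Treat}$ function, invoke the fact that local validity implies the FS guard condition so that $\textsc{Treat}$ terminates, and dispose of the extra reduction rule for the constant $1$ by noting it only adds ways for the computation to end. Your write-up merely spells out more explicitly the bookkeeping the paper leaves implicit (the dictionary of transitions, and the final appeal to Preservation plus ordinary Progress on the terminal configuration).
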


\begin{proof}
There is a correspondence between the $\textsc{Treat}$ function's internal operations and the computational transitions introduced in Section~\ref{operat}. The only point of difference is the extra computation rule we introduced for the constant $1$. Fortier and Santocanale's proof of termination of the function $\textsc{Treat}$ remains intact after extending \textsc{Treat}'s primitive operation with a reduction rule for the constant $1$, since this reduction step only introduces a new way of closing a process in the configuration.  Under this correspondence, termination of the function $\textsc{Treat}$ on valid proofs implies the strong progress property for valid programs.
\end{proof}

As a corollary to Theorem~\ref{progress}, computation of a closed valid program $\mathcal{P}= \langle V,S\rangle$ with $\cdot \vdash S=P :: (y:1)$ always terminates by closing the channel $y$ (which follows by inversion on the typing derivation).

We conclude this section by briefly revisiting sources of invalidity in computation. In Example~\ref{begen} we saw that process $\mathtt{Loop}$ is not valid, even though its proof is cut-free. Its computation satisfies the strong progress property as it attempts to communicate with its right side in finite number of steps. However, its communication with left and right sides of the configuration is solely by sending messages. Composing $\mathtt{Loop}$ with any process $y:\mathsf{nat} \vdash \mathtt{P} :: (z:1)$ results in exchanging an infinite number of messages between them. For instance, for ${\mathtt{Block}}$, introduced in Example~\ref{block}, the infinite computation of $\cdot \Vdash  y \leftarrow \mathtt{Loop} \mid_{\, y} z \leftarrow \mathtt{Block} \leftarrow y :: (z:1)$ without communication along $z$ can be depicted as follows:
\begin{center}
\begin{tabular}{c l l }
& $y \leftarrow \mathtt{Loop} \mid_{\, y} z \leftarrow \mathtt{Block} \leftarrow y$ & $\mapsto$\\ &
$Ry.\mu_{\mathsf{nat}}; Ry.\mathit{s}; y \leftarrow \mathtt{Loop} \mid_{\, y} z \leftarrow \mathtt{Block} \leftarrow y$ & $  \mapsto$\\
& $Ry.\mu_{\mathsf{nat}}; Ry.\mathit{s}; y \leftarrow \mathtt{Loop} \mid_{\, y} \mathbf{case}\, Ly\ (\mu_{\mathsf{nat}}\Ra \mathbf{case}\, Ly\ \cdots)$ & $\mapsto$\\ & $Ry.s; y \leftarrow \mathtt{Loop} \mid_{\, y} \mathbf{case}\, Ly\ (\mathit{s} \Rightarrow  z \leftarrow \mathtt{Block} \leftarrow y \ \mid \mathit{z} \Rightarrow \mathbf{wait}\, Ly;\mathbf{close}\, Rz)$&$ \mapsto$\\
& $y \leftarrow \mathtt{Loop} \mid_{\, y} z \leftarrow \mathtt{Block} \leftarrow y$ & $\mapsto$\\ & $\mathbf{\cdots}$\\

\end{tabular}
\end{center}

In this example, the strong progress property of computation is violated. The configuration does not communicate to the left or right and a never ending series of internal communications takes place. This internal loop is a result of the infinite number of unfolding messages sent by $\mathtt{Loop}$ without any unfolding message with higher priority being received by it. In other words, it is the result of $\mathtt{Loop}$  not being valid.

\section{Incompleteness of Validity Conditions}
\label{negative}

In this section we provide a straightforward example of a program with the strong progress property that our algorithm cannot identify as valid. Intuitively, this program seems to preserve the strong progress property after being composed with other valid programs. We show that this example does not satisfy the FS guard condition, either. 

\begin{example}\label{ex:counter}
Define the signature
\[\begin{aligned}
 \Sigma_5 :=\ & \mathsf{ctr}=^{1}_{\nu} \& \{\mathit {inc}: \mathsf{ctr}, \ \ \mathit{val}: \mathsf{bin}\},\\
 & \mathsf{bin}=^{2}_{\mu} \oplus\{ \mathit{b0}:\mathsf{bin}, \mathit{b1}:\mathsf{bin}, \mathit{\$}:\mathsf{1}\}\\
 \end{aligned}\]
and program $\mathcal{P}_{11}= \langle \{\mathtt{Bit0Ctr},\mathtt{Bit1Ctr}, \mathtt{Empty}\}, \mathtt{Empty} \rangle,$ where
\[
\begin{array}{l}
x:\mathsf{ctr} \vdash y \leftarrow \mathtt{Bit0Ctr} \leftarrow x :: (y:\mathsf{ctr})\\
x:\mathsf{ctr} \vdash y \leftarrow \mathtt{Bit1Ctr}\leftarrow x :: (y:\mathsf{ctr})\\
\cdot \vdash y \leftarrow \mathtt{Empty}:: (y:\mathsf{ctr})
\end{array}
\] 
with
\begin{small}
\begin{align*} 
 & y^{\beta} \leftarrow \mathtt{Bit0Ctr}\leftarrow x^{\alpha}= & {\color{blue}[0, 0,\ 0,\ 0]} \\
 &\phantom{s} \mathbf{case}\, Ry^{\beta}\ (\nu_{ctr} \Ra & [-1,0,0,0] &  \phantom{L} y^{\beta+1}_1<y^\beta_1,\\[-4pt]
 &&& \phantom{L} y^{\beta+1}_2 = y^{\beta}_2\\[-10pt]  &  \phantom{sm}     \mathbf{case}\, Ry^{\beta+1} ( \mathit{inc} \Rightarrow 
 y^{\beta+1} \leftarrow \mathtt{Bit1Ctr} \leftarrow x^{\alpha} & {\color{red} [-1,\ 0,\ 0 ,0]}  &\phantom{L.}\\
&  \phantom{smal}    \mid \mathit{val} \Rightarrow Ry^{\beta+1}.\mu_{bin};Ry^{\beta+2}.\mathit{b0};Lx^\alpha.\nu_{ctr}; Lx^{\alpha+1}.val; y^{\beta+2} \leftarrow x^{\alpha+1})) & { [-1,1,0,1]} & \\
\end{align*}
\begin{align*} 
 & y^{\beta} \leftarrow \mathtt{Bit1Ctr}\leftarrow x^{\alpha}= &{\color{blue}[0, 0,\ 0,\ 0]} &  \phantom{L}\\
 &\phantom{s} \mathbf{case}\, Ry^{\beta}\ (\nu_{ctr} \Ra & [-1,0,0,0] &  \phantom{L} y^{\beta+1}_1<y^\beta_1,\\[-4pt] &&&\phantom{L} y^{\beta+1}_2 = y^{\beta}_2\\[-2pt]  &  \phantom{sm}     \mathbf{case}\, Ry^{\beta+1}\, ( \mathit{inc} \Rightarrow
 Lx^{\alpha}.\nu_{ctr}; Lx^{\alpha+1}.inc; y^{\beta+1} \leftarrow \mathtt{Bit0Ctr} \leftarrow x^{\alpha+1} & {\color{red} [-1,\ 1,\ 0 ,0]}  &\phantom{L}  x^{\alpha+1}_2 = x^{\alpha}_2\\ 
&  \phantom{smal}    \mid \mathit{val} \Rightarrow Ry^{\beta+1}.\mu_{bin};Ry^{\beta+2}.\mathit{b1};Lx^\alpha.\nu_{ctr}; Lx^{\alpha+1}.val; y^{\beta+2} \leftarrow x^{\alpha+1})) & { [-1,1,0,1]}
\end{align*}
\begin{align*} 
 & y^{\beta} \leftarrow \mathtt{Empty}\leftarrow \cdot = &  {\color{blue}[0, \_,\ \_,\ 0]} &  \phantom{L}\\
 &\phantom{s} \mathbf{case}\, Ry^{\beta}\ (\nu_{ctr} \Ra & [-1,\_,\_,0] &  \phantom{L} y^{\beta+1}_1<y^\beta_1, y^{\beta+1}_2 = y^{\beta}_2\\  &  \phantom{sm} \mathbf{case}\, Ry^{\beta+1}\ ( \mathit{inc} \Rightarrow 
  w^{0} \leftarrow \mathtt{Empty} \leftarrow\cdot ; & {\color{red} [\infty,\ \_,\ \_ ,\infty]}  &\phantom{L.} \mathsf{ctr}, \mathtt{bin} \in \mathtt{c}(\mathsf{ctr})\\ 
   &  \phantom{small space more than} y^{\beta+1} \leftarrow \mathtt{Bit1Ctr} \leftarrow w^{0} & {\color{red} [-1,\ \infty, \infty ,0]}  &\phantom{L.}  \mathsf{ctr}, \mathtt{bin} \in \mathtt{c}(ctr)\\ 
&  \phantom{small space mor}\mid \mathit{val} \Rightarrow Ry^{\beta+1}.\mu_{bin};Ry^{\beta+2}.\$ ; \mathbf{close}\, Ry^{\beta+2})) & { [-1,\_,\_,1]} 
\end{align*}
\end{small}

In this example we implement a counter slightly differently from Example~\ref{cutchannel}. We have two processes $\mathtt{Bit0Ctr}$ and $\mathtt{Bit1Ctr}$ that are holding one bit ($b0$ and $b1$ respectively) and a counter $\mathtt{Empty}$ that signals the end of the chain of counter processes. This program begins with an empty counter (representing value 0). If  a value is requested, then it sends $\$$ to the right and if an increment is requested it adds the counter $\mathtt{Bit1Ctr}$ with $b1$ value to the chain of counters. Then if another increment is asked, $\mathtt{Bit1Ctr}$ sends an increment ($\mathit{inc}$) message to its left counter (implementing the carry bit) and calls $\mathtt{Bit0Ctr}$. If $\mathtt{Bit0Ctr}$ receives an increment from the right, it calls $\mathtt{Bit1Ctr}$ recursively.

All (mutually) recursive calls in this program are recognized as valid by our algorithm, except the one in which $\mathtt{Empty}$ calls itself. In this recursive call, $y^\beta\leftarrow \mathtt{Empty}\leftarrow \cdot$ calls $w^0\leftarrow \mathtt{Empty}\leftarrow \cdot$, where $w$ is the fresh channel it shares with $y^{\beta+1} \leftarrow \mathtt{Bit1Ctr} \leftarrow w^0$.  The number of increment unfolding messages $\mathtt{Bit1Ctr}$ can send along channel $w^0$ are always less than or equal to the number of increment unfolding messages it receives along channel $y^{\beta+1}$. This implies that the number of messages $w^0\leftarrow \mathtt{Empty}\leftarrow \cdot$ may receive along channel $w^0$ is strictly less than the number of messages received by any process along channel $y^{\beta}$. There will be no infinite loop in the program without receiving an unfolding message from the right.  Indeed Fortier and Santocanale's cut elimination for the cut corresponding to the composition $\mathtt{Empty} \mid \mathtt{Bit1Ctr}$ locally terminates. 
Furthermore, since no valid program defined on the same signature can send infinitely many increment messages to the left, $\mathcal{P}_{11}$ composed with any other valid program satisfies strong progress.

This result is also a negative example for the FS guard condition. The path between $y^{\beta} \leftarrow \mathtt{Empty}\leftarrow \cdot$ and $w^{0} \leftarrow \mathtt{Empty}\leftarrow \cdot$ in the $\mathtt{Empty}$ process is neither left traceable not right traceable since $w \neq y$. By Definition~\ref{guard} it is therefore not a valid cycle.
\end{example}

Example~\ref{ex:counter} shows that neither our algorithm nor the FS guard condition are complete.
In fact, using Theorem~\ref{progress} we can prove that no effective procedure, including our algorithm, can recognize a maximal set $\Xi$ of programs with the strong progress property that is closed under composition. 

\begin{theorem}
 It is undecidable to recognize a maximal set $\Xi$ of session-typed programs in subsingleton logic with the strong progress property that is closed under composition.
\end{theorem}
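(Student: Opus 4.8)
The plan is to prove the statement by reducing the halting problem to membership in $\Xi$, using the representation of Turing machines as session-typed processes of DeYoung and Pfenning \cite{DeYoung16aplas,Pfenning16lectures}, transported to isorecursive types as indicated in the Introduction. Given a machine $M$ and an input $w$, I would construct a \emph{closed} program $\mathcal{S}_{M,w}$ with main process typed $\cdot \vdash S_{M,w} :: (y:1)$ over a fixed signature rich enough to hold a tape. The process simulates $M$ on $w$ purely by internal reduction (spawns and message exchanges, i.e. cut reductions) and is arranged so that it performs \emph{no} action on the external channel $y$ until $M$ reaches a halting state, at which point it executes $\mathbf{close}\,Ry$ and terminates. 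The central invariant to establish is that $\mathcal{S}_{M,w}$ has the strong progress property (Theorem~\ref{progress}) \emph{iff} $M$ halts on $w$: if $M$ halts, the simulation reaches a halting configuration after finitely many internal steps and then communicates on the right by closing $y$; if $M$ diverges, the configuration takes internal transitions forever while never becoming empty and never touching $y$, exactly the failure mode of $\mathtt{Loop}$ revisited in Section~\ref{semantics}.

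First I would discharge the easy direction. Since every member of $\Xi$ has strong progress by definition, $\mathcal{S}_{M,w}\in\Xi$ immediately forces $\mathcal{S}_{M,w}$ to have strong progress, and because its right type is $1$ the only terminal external event available to it is $\mathbf{close}\,Ry$; hence strong progress means it closes $y$ after finitely many steps, i.e. $M$ halts on $w$. This direction uses nothing about $\Xi$ beyond the fact that it consists of strong-progress programs.

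The harder direction is to show that when $M$ halts, $\mathcal{S}_{M,w}$ is forced into \emph{every} maximal $\Xi$, and here I would use both composition-closure and maximality. Because $\mathcal{S}_{M,w}$ has an empty left interface it can occur only in leftmost position, so the only compositions to analyze are $\mathcal{S}_{M,w}\mid_y Q$ with $y:1\vdash Q$ and $Q$ a composition of members of $\Xi$ (hence itself in $\Xi$ and strong-progress). The trivial closer $C=\langle\{\,\},\ \mathbf{close}\,Ry\rangle$ is cut-free and therefore locally valid, so it lies in $\Xi$; consequently $C\mid_y Q\in\Xi$ is strong-progress for every such $Q$, which means that $Q$ \emph{after} receiving a close reaches emptiness or a right-communication in finitely many steps. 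Since a halting $\mathcal{S}_{M,w}$ behaves operationally like $C$ preceded only by a finite internal simulation, $\mathcal{S}_{M,w}\mid_y Q$ likewise reaches emptiness or a right-communication after finitely many steps and so has strong progress. Thus $\mathcal{S}_{M,w}$ is composition-compatible with $\Xi$, and were it outside $\Xi$ the composition-closure of $\Xi\cup\{\mathcal{S}_{M,w}\}$ would be a strictly larger composition-closed strong-progress set, contradicting maximality; hence $\mathcal{S}_{M,w}\in\Xi$. Combining the two directions yields $\mathcal{S}_{M,w}\in\Xi \iff M$ halts on $w$ for every maximal $\Xi$, so any procedure recognizing such a $\Xi$ would decide the halting problem, establishing the theorem.

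The step I expect to be the main obstacle is precisely this composition-compatibility argument: ruling out that pasting a \emph{halting} $\mathcal{S}_{M,w}$ onto an arbitrary strong-progress partner creates a fresh internal loop of the $\mathtt{Loop}\mid\mathtt{Block}$ or $\mathtt{Ping}\mid\mathtt{Pong}$ kind. The reason I insist on the minimal interface (empty on the left, unit $1$ on the right) is exactly to tame this step: it confines all interaction with any partner to a single $\mathbf{close}/\mathbf{wait}$ synchronization, so no partner can drive $\mathcal{S}_{M,w}$ with infinitely many unfolding messages, and the only way a composite could diverge is through the partner's own behavior after the close — which is controlled by the presence of the valid closer $C$ in $\Xi$. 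Two subsidiary points must also be checked: that at least one maximal $\Xi$ exists, which follows from a Zorn's-lemma argument because every composition involves only finitely many programs, so the union of a chain of composition-closed strong-progress sets is again one; and that the isorecursive encoding is faithful, so that divergence of $M$ corresponds exactly to an infinite internal reduction with no external action and no deadlock, which I would obtain by adapting the equirecursive construction of \cite{DeYoung16aplas} as already noted.
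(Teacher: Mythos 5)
Your overall skeleton is the same as the paper's: encode a Turing machine $M$ on input $w$ as a closed program $\mathcal{S}_{M,w}$ with $\cdot \vdash S_{M,w} :: (y:1)$ via the DeYoung--Pfenning representation, observe that $M$ halts iff $\mathcal{S}_{M,w}$ satisfies strong progress, and conclude because membership in $\Xi$ forces strong progress (easy direction) while a closed strong-progress program must lie in \emph{every} maximal $\Xi$ (hard direction). The paper dispatches the hard direction in one unproven sentence (``a closed program satisfying strong progress is in every maximal set $\Xi$''), and you rightly identify it as the crux and attempt to prove it. Unfortunately, your proof of exactly that step contains a genuine gap.

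The gap is the inference ``$C=\langle\{\},\mathbf{close}\,Ry\rangle$ is cut-free and therefore locally valid, so it lies in $\Xi$.'' Both links are wrong. First, cut-freeness does not imply local validity: $\mathtt{Loop}$ of Example~\ref{begen} is cut-free yet invalid ($C$ itself happens to be locally valid, but only because it contains no calls). Second, and fatally, local validity does not imply membership in every maximal $\Xi$. Concretely: $\mathtt{Block}$ of Example~\ref{block} is locally valid, and $\mathtt{Loop}$ satisfies strong progress; since $\mathtt{Loop}$ has an empty left interface and right type $\mathsf{nat}$, the set $\{\mathtt{Loop}\}$ is already closed under composition, so your own Zorn argument extends it to a maximal set $\Xi_L$ containing $\mathtt{Loop}$; but $\mathtt{Loop}\mid_y \mathtt{Block}$ diverges internally forever (Section~\ref{semantics}), so $\mathtt{Block}\notin\Xi_L$. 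Thus a locally valid program can be excluded from a maximal $\Xi$, and nothing entitles you to place $C$ inside the \emph{arbitrary} maximal $\Xi$ your reduction must handle. Note also that the claim ``$C$ lies in every maximal $\Xi$'' is itself an instance of precisely the fact you are proving ($C$ is a closed strong-progress program of type $1$), so deriving the composition-compatibility of $\mathcal{S}_{M,w}$ from that of $C$ makes the argument circular at its base case. The repair is to run your ``main obstacle'' argument directly and without any appeal to a program already in $\Xi$: because the shared channel has type $1$, at most one synchronization (the close/wait pair) can ever cross it, and a halting $\mathcal{S}_{M,w}$ takes only finitely many internal steps before offering that close; hence $\mathcal{S}_{M,w}\mid_y Q$ can only diverge if $Q$, driven by a single close on its left, diverges --- which must be excluded by the (hereditary, environment-driven) reading of strong progress for members of $\Xi$ that the paper implicitly uses. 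With that direct lemma in hand, $C$ becomes a special case rather than a prerequisite.
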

\begin{proof}
Pfenning and DeYoung showed that any Turing machine can be represented as a process in subsingleton logic with equirecursive fixed points \cite{DeYoung16aplas,Pfenning16lectures}, easily embedded into our setting with isorecursive fixed points. It implies that a Turing machine on a given input halts if and only if the closed process representing it terminates. By definition of strong progress, a closed process terminates if and only if it satisfies strong progress property. 
Using this result, we reduce the halting problem to identifying closed programs $\mathcal{P}:=\langle V,S \rangle$ with $\cdot \vdash S :: x:1$ that satisfy strong progress. Note that a closed program satisfying strong progress is in every maximal set $\Xi$.
\end{proof}
\section{Concluding Remarks}
\label{conclusion}

  {\bf Related work.} The main inspiration for this paper is work by Fortier and Santocanale~\cite{Fortier13csl}, who provided a validity condition for pre-proofs in singleton logic with least and greatest fixed points. They showed that valid circular proofs in this system enjoy cut elimination. Circular proofs in singleton logic are interpreted as the winning strategy in parity games~\cite{Santocanale02fossacs}. A winning strategy corresponds to an asynchronous protocol for a deadlock free communication of the players \cite{Joyal96ICLMPS}. The cut elimination result for circular proofs is a ground for reasoning about these protocols of communication and the related categorical concept of $\mu$-bicomplete categories \cite{Santocanale02notes, Santocanale02ita}. Although session types and game semantics both model the concept of message passing based on a protocol \cite{Castellan19POPL}, another line of research is needed to fill the gap between semantics of parity games and recursive session-typed processes. Also related is work by Baelde et al.~\cite{Baelde16csl, Baelde12tocl}, in which they similarly introduced a validity condition on the pre-proofs in multiplicative-additive linear logic with fixed points and proved the cut-elimination property for valid derivations. Doumane \cite{Doumane17phd} proved that this condition can be decided in $\mathit{PSPACE}$ by reducing it to the emptiness problem of B\"{u}chi automata. Nollet et al.~\cite{Nollet18csl} introduced a local polynomial time algorithm for identifying a stricter version of Baelde's condition. At present, it is not clear to us how their algorithm would compare with ours on the subsingleton fragment due to the differences between classical and intuitionistic sequents~\cite{Laurent18lics}, different criteria on locality, and the prevailing computational interpretation of cut reduction as communicating processes in our work. Cyclic proofs have also been used for reasoning about imperative programs with recursive procedures~\cite{Rowe17cpp}. While there are similarities (such as the use of cycles to capture recursion), their system extends separation logic is therefore not based on an interpretation of cut reduction as computation.  Reasoning in their logic therefore has a very different character from ours. Hyvernat ~\cite{Hyvernat19arxiv} introduced a condition to identify terminating and productive programs in ML{/}Haskell like recursive programs with mixed inductive and coinductive data types. Although their language cannot be reduced to circular proofs, their condition is inspired by the FS guard condition and is shown to be \textsc{PSPACE}. Also related is a result by  Das and Pous \cite{Das2018csl} on cyclic proofs in LKA, where Kleene star is interpreted as a least fixed point and proofs are interpreted as transformers. They introduced a condition that ensures cut elimination of cyclic proofs and termination of transformers. DeYoung and Pfenning~\cite{DeYoung16aplas} provide a computational interpretation of subsingleton logic with equirecursive fixed points and showed that cut reduction on circular pre-proofs in this system has the computational power of Turing machines. Their result implies undecidability of determining all programs with a strong progress property.

\noindent
{\bf Our contribution.} In this paper we have established an extension of the Curry-Howard interpretation for intuitionistic linear logic by Caires et al.~\cite{Caires10concur,Caires16mscs} to include least and greatest fixed points that can mutually depend on each other in arbitrary ways, although restricted to the subsingleton fragment.  The key is to interpret circular pre-proofs in subsingleton logic as mutually recursive processes, and to develop a locally checkable, compositional validity condition on such processes.  We proved that our local condition implies Fortier and Santocanale's guard condition and therefore also 
implies cut elimination.  Analyzing this result in more detail leads to a computational strong progress property which means that a valid program will always terminate either in an empty configuration or one attempting to communicate along external channels.

\noindent
{\bf Implementation.} We have implemented the algorithm introduced in Section~\ref{alg} in SML, which is publicly available~\cite{Das22Zenodo}\footnote{The reader can find an implementation of the examples presented in this paper in the subdirectory \href{https://bitbucket.org/fpfenning/rast/src/master/ss/examples/lmcs22.}{https://bitbucket.org/fpfenning/rast/src/master/ss/examples/lmcs22}.}. Currently, the implementation collects constraints and uses them to construct a suitable priority ordering over type variables and a $\subset$ ordering over process variables if they exist, and rejects the program otherwise. However, this precomputation step is not local and it makes it difficult to produce informative error messages. Our plan is to delete this step and rely on the programmer for priority and $\subset$ ordering. The implementation also supports an implicit syntax where the fixed point unfolding messages are synthesized from the given communication patterns.  Our experience with a range of programming examples shows that our local validity condition is surprisingly effective.  Its main shortcoming arises when, intuitively, we need to know that a program's output is ``smaller'' than its input.

\noindent
{\bf Future work.} The main path for future work is to extend our results to full ordered or linear logic with fixed points and address the known shortcomings we learned about through the implementation. The main shortcoming of our local validity condition arises when, intuitively, we need to know that a program's output is ``smaller" than its input. An interesting item for future research is to generalize our local validity condition to handle more cuts by introducing a way to capture the relation between input and output size. Studying this generalization also allows us to compare our results with the sized-type approach introduced by Abel and Pientka~\cite{abel2016well}. The first step in this general direction was taken by Sprenger and Dam~\cite{sprenger2003structure} who justify cyclic inductive proofs using inflationary iteration and the work by Somayyajula et al.~\cite{somayyajula2021circular} for shared memory concurrency. We would also like to investigate the relationship to work by Nollet et al.~\cite{Nollet18csl}, carried out in a different context.
\section*{Acknowledgments}

\noindent The authors would like to acknowledge helpful comments by Stephanie Balzer, Ankush Das, Jan Hoffmann, and Siva Somayyajula on an earlier draft, discussions with Jonas Frey and Henry DeYoung regarding the subject of this paper, and support from the National Science Foundation under grant CCF-1718267 \emph{Enriching Session Types for Practical Concurrent Programming}.


\bibliographystyle{alphaurl}

\appendix

\section{}
\label{proofapp}

Here we provide the proof for the observations we made in Section~\ref{Guard}. We prove that every program accepted by our algorithm in Section~\ref{alg} corresponds to a valid circular proof in the sense of the FS guard condition.
As explained in Section~\ref{priority}, the reflexive transitive closure of $\Omega$ in judgment ${x^\gamma: \omega \vdash_{\Omega} P :: (y^{\delta}:C) }$ forms a partial order $\le_{\Omega}$. To enhance readability of proofs, throughout this section we use entailment $\Omega \Vdash x \le y$ instead of $x\le_{\Omega}y$.

We first prove Lemmas~\ref{ordtrace} and~\ref{dordtrace} from Section~\ref{Guard}, in Lemmas~\ref{lem:ordtrace-app} and ~\ref{lem:dordtrace-app}, respectively. Theorem~\ref{listor} is a direct corollary of these two lemmas.

\begin{lemma}[corresponding to Lemma~\ref{ordtrace}] \label{lem:ordtrace-app}
Consider a path $\mathbb{P}$ in the (infinite) typing derivation of a program $\mathcal{Q}=\langle V, S\rangle$ defined on a Signature $\Sigma$:
\[{\infer{z^\alpha: \omega \vdash_{\Omega} P :: (w^{\beta}:C)}{ \infer{\vdots}{{x^\gamma: \omega' \vdash_{\Omega'} P' :: (y^{\delta}:C') }}}  }\]
with $n$ the maximum priority in $\Sigma$.
\begin{enumerate}
    \item[(a)] For every $i  \in \mathtt{c}(\omega') $ 
    with $\epsilon(i)=\mu$, if  $\Omega' \Vdash x^\gamma_i \le z^\alpha_i$
    then $x=z$ and $i \in \mathtt{c}(\omega) $.
    \item[(b)] For every $i<n$, if $\Omega' \Vdash x^\gamma_i < z^\alpha_i$, then $i \in \mathtt{c}(\omega) $ and a $\mu L$ rule with priority $i$ is applied on $\mathbb{P}$.
    \item[(c)] For every $c \le n$ with $\epsilon(c)=\nu$, if $\Omega' \Vdash x^\gamma_c \le z^\alpha_c$ , then no $\nu L$ rule with priority $c$ is applied on $\mathbb{P}$.
\end{enumerate}
\end{lemma}
\begin{proof}
Proof is by induction on the structure of $\mathbb{P}$.
We consider each case for last (topmost) step in $\mathbb{P}$.
\begin{description}
\item[Case]  \[\infer[\msc{Def}(X)]{x^\gamma:\omega' \vdash_{\Omega'} y^\delta \leftarrow X \leftarrow x^\gamma ::(y^{\delta}:C')}{x^\gamma: \omega' \vdash_{\Omega'} P' :: (y^{\delta}:C') & x:\omega' \vdash X=P' :: (y:C') \in V }\]
    None of the conditions in the conclusion are different from the premise. Therefore, by the induction hypothesis, statements (a)-(c) hold. 

\item[Case] \[\infer[\msc{Cut}^{y}]{ x^\gamma: \omega' \vdash_{\Omega''}  (y \leftarrow P'_y ; Q_y) :: (v^{\theta}:C'')}{ x^\gamma: \omega' \vdash_{\Omega''\cup \mathtt{r}(v^\theta)} P'_{y^0} :: (y^0: C') & y^0: C' \vdash_{\Omega''\cup \mathtt{r}(x^\gamma)} Q_{y^0} ::(v^{\theta}: C'')}\]
         where $\mathtt{r}(u)= \{y^{0}_{j}=u_{j}\mid j \not \in \mathtt{c}(C')\, \mbox{and}\, j \le n\}$ and $\Omega'=\Omega'' \cup r(v^{\theta})$.
         All conditions in the conclusion are the same as the premise: the equations in $r(v^{\theta})$ only include channels $y^0$ and $v^\theta$. As a result $\Omega''\cup \mathtt{r}(v^\theta)\Vdash x^\gamma_i \le z^\alpha_i$ implies $\Omega''\Vdash x^\gamma_i \le z^\alpha_i$, and $\Omega''\cup \mathtt{r}(v^\theta)\Vdash x^\gamma_i < z^\alpha_i$ implies $\Omega''\Vdash x^\gamma_i < z^\alpha_i$. Therefore, by the induction hypothesis, statements (a)-(c) hold.
    
\item[Case] \[\infer[\msc{Cut}^{x}]{ u^\eta: \omega'' \vdash_{\Omega''}  (x \leftarrow Q_x ; P'_x) :: (y^{\delta}:C')}{ u^\eta: \omega'' \vdash_{\Omega''\cup \mathtt{r}(y^\delta)} Q_{a^0} :: (x^0: A) & x^0: A \vdash_{\Omega''\cup \mathtt{r}(u^\eta)} P'_{x^0} ::(y^{\delta}: C'')}\]
         where $\mathtt{r}(v)= \{x^{0}_{j}=v_{j}\mid j \not \in \mathtt{c}(A)\,\mbox{and}\, j \le n\}$ and $\Omega'=\Omega'' \cup r(u^\eta)$.
         \begin{enumerate}
             \item[(a)] $\Omega'' \cup r(u^\eta) \Vdash x^0_i \le z^\alpha_i$ does not hold for any $i\in c(A)$: $x$ is a fresh channel and does not occur in the equation of $\Omega''$. Moreover, since $i \in c(A)$, there is no equation in the set $r(u^\eta)$ including $x^0_i$. Therefore, this part is vacuously true.
             \item [(b)] By freshness of $x$, if $\Omega'' \cup r(u^\eta)\Vdash x_i^{0} < z^{\alpha}_{i}$, then $x_i^{0} ={u^{\eta}_{i}} \in r(u^\eta)$ and $\Omega'' \Vdash u^{\eta}_{i}< z^{\alpha}_{i}$.
             By the induction hypothesis,  $i \in \mathtt{c}(\omega)$ and a  $\mu L$ rule with priority $i$ is applied on  $\mathbb{P}$.
             \item[(c)] By freshness of $x$, if $\Omega'' \cup r(u^\eta)\Vdash x_c^{0} \le z^{\alpha}_{c}$, then $x_c^{0} ={u^{\eta}_{c}} \in r(u^\eta)$ and $\Omega'' \Vdash u^{\eta}_{c}\le z^{\alpha}_{c}$. By the induction hypothesis, no  $\nu L$ rule with priority $c$ is applied on $\mathbb{P}$.
         \end{enumerate}
           
\item[Case] \[\infer[1L]{u^\eta: 1 \vdash_{\Omega'} \mathbf{wait}\, Lu^\eta;P' :: (y^{\delta}: C')}{ . \vdash_{\Omega'} P' :: (y^{\delta}: C') }\]
 This case is not applicable since by the typing rules $\Omega' \not \Vdash .\le z^{\alpha}_i$ for any $i\le n$.
 
\item[Case] \[ \infer[\mu R]{ x^\alpha:  \omega' \vdash_{\Omega''} Ry^{\delta'}.\mu_t; P :: (y^{\delta'}:t)}{x^\alpha:  \omega' \vdash_{\Omega'} P :: (y^{\delta'+1}:C') & t=_{\mu}C' & \Omega'= \Omega'' \cup \{(y^{\delta'})_{p(s)} =(y^{\delta'+1})_{p(s)} \mid p(s)\neq p(t)\}} \]
   For every $i\le n$, if $\Omega'' \cup \{(y^{\delta'})_{p(s)} =(y^{\delta'+1})_{p(s)} \mid p(s)\neq p(t)\} \Vdash x^\gamma_i \le z^\alpha_i$, then $\Omega'' \Vdash x^\gamma_i \le z^\alpha_i $. Therefore, by the induction hypothesis, statements (a)-(c) hold.
   
\item[Case] {\small\[ \infer[\mu L]{x^{\gamma'}: t \vdash_{\Omega''} \mathbf{case}\, Lx^{\gamma'}\ (\mu_{t} \Ra P'):: (y^{\delta}:C')}{  x^{\gamma'+1}: \omega' \vdash_{\Omega'} P' :: (y^{\delta}:C') & t=_{\mu} \omega' & \Omega'=\Omega'' \cup \{x^{\gamma'+1}_{p(t)} < x^{\gamma'}_{p(t)}\} \cup \{x^{\gamma'+1}_{p(s)} =x^{\gamma'}_{p(s)} \mid p(s)\neq p(t)\}}\]}
By definition of $\mathtt{c}(x)$, we have $\mathtt{c}(\omega') \subseteq \mathtt{c}(t)$. By $\mu L$ rule, for all $i\le n$, $x^{\gamma'+1}_{i} \le x^{\gamma'}_i \in \Omega'$. But by freshness of channels and their generations, $x^{\gamma'+1}$ is not involved in any relation in $\Omega''$.
\begin{enumerate}
    \item[(a)]For every $i \in \mathtt{c}(\omega')$ with $\epsilon(i)=\mu$, if $\Omega' \Vdash x^{\gamma'+1}_{i} \le z^{\alpha}_{i}$ then $\Omega'' \Vdash x^{\gamma'}_{i} \le z^{\alpha}_{i}$. By the induction hypothesis, we have $x=z$ and $i \in \mathtt{c}(\omega)$.
    \item[(b)]We consider two subcases: (1) If $\Omega' \Vdash x^{\gamma'+1}_i< z_i^{\alpha}$ for $i \neq p(t)$, then $\Omega'\Vdash x^{\gamma'+1}_i= x^{\gamma'}_i$ and $\Omega'' \Vdash x^{\gamma'}_i< z_i^{\alpha}$. Now we can apply the induction hypothesis. (2) If $\Omega' \Vdash x^{\gamma'+1}_{p(t)}< z_{p(t)}^{\alpha}$, then $\Omega' \Vdash x^{\gamma'+1}_{p(t)} < x^{\gamma'}_{p(t)}$ and $\Omega'' \Vdash x^{\gamma'}_{p(t)} \le z^{\alpha}_{p(t)}$. Since a $\mu L$ rule is applied in this step on the priority $p(t)$, we only need to prove that $p(t) \in \mathtt{c}(\omega)$. By definition of $\mathtt{c}$, we have $p(t) \in \mathtt{c}(t) $ and we can use the induction hypothesis on part (a) to get $p(t) \in \mathtt{c}(\omega)$. 
    \item[(c)] For every $c \le n$ with $\epsilon(c)= \nu$, $\Omega' \Vdash x^{\gamma'+1}_c= x^{\gamma'}_c$ as $c \neq p(t)$. Therefore, if $\Omega' \Vdash x^{\gamma'+1}_c= z^{\alpha}_c$, then $\Omega'' \Vdash x^{\gamma'}_c= z^{\alpha}_c$. By the induction hypothesis no $\nu L$ rule with priority $c$ is applied on $\mathbb{P}$.
  \end{enumerate} 

\item[Case] {\small\[\infer[\nu R]{x^\gamma: \omega' \vdash_{\Omega''} \mathbf{case}\, Ry^{\delta'} \ (\nu_t \Ra P') :: (y^{\delta'}:t)}{  x^\gamma: \omega' \vdash_{\Omega'} P' :: y^{\delta'+1}: C'& t=_{\nu}C'&
\Omega'= \Omega'' \cup \{y^{\delta'+1}_{p(t)} < y^{\delta'}_{p(t)}\} \cup \{y^{\delta'+1}_{p(s)} =y^{\delta'}_{p(s)} \mid p(s)\neq p(t)\}  } \]}
   For every $i\le n$, if $\Omega'' \cup \{y^{\delta'+1}_{p(t)} < y^{\delta'}_{p(t)}\} \cup \{y^{\delta'+1}_{p(s)} =y^{\delta'}_{p(s)} \mid p(s)\neq p(t)\} \Vdash x^\gamma_i \le z^\alpha_i$, then $\Omega'' \Vdash x^\gamma_i \le z^\alpha_i $. Therefore, by the induction hypothesis, statements (a)-(c) hold.
   
\item[Case] \[ \infer[\nu L]{x^{\gamma'}: t \vdash_{\Omega''} Lx^{\gamma'}.\nu_{t}; P':: (y^{\delta}:C')}{  x^{\gamma'+1}: \omega' \vdash_{\Omega'} Q :: (y^{\delta}:C') & t=_{\nu} \omega' & \Omega'=\Omega'' \cup \{x^{\gamma'+1}_{p(s)} =x^{\gamma'}_{p(s)} \mid p(s)\neq p(t)\} }\]
By definition of $\mathtt{c}(x)$, we have $\mathtt{c}(\omega') \subseteq \mathtt{c}(t)$. By $\nu L$ rule, for all $i\neq p(t)\le n$, $x^{\gamma'+1}_{i} = x^{\gamma'}_i \in \Omega'$. In particular, for every $i \le n $ with $\epsilon(i)=\mu$, $x^{\gamma'+1}_{i} = x^{\gamma'}_i \in \Omega'$. But by freshness of channels and their generations, $x^{\gamma'+1}$ is not involved in any relation in $\Omega''$.
\begin{enumerate}
    \item[(a)] For every $i  \in \mathtt{c}(\omega') $ with $\epsilon(i)=\mu$, if  $\Omega' \Vdash x^{\gamma'+1}_i \le z^\alpha_i$, then $\Omega' \Vdash x^{\gamma'+1}_i=x^{\gamma'}_i$ and $\Omega'' \Vdash  x^{\gamma'}_i \le z^\alpha_i$. By the induction hypothesis $x=z$ and $i \in \mathtt{c}(\omega)$.
    \item[(b)] If $\Omega' \Vdash x^{\gamma'+1}_i< z_i^{\alpha}$, then by freshness of channels and their generations we have $i \neq p(t)$, $\Omega' \Vdash x^{\gamma'+1}_i=x^{\gamma'}_i$ and $\Omega'' \Vdash x^{\gamma'}_i< z_i^{\alpha}$. By the induction hypothesis $i \in \mathtt{c}(\omega)$ and a $\mu L$ rule with priority $i$ is applied on the path.
    \item[(c)] For every $c \le n$ with $\epsilon(c)= \nu$ and $c\neq p(t)$, if $\Omega' \Vdash x^{\gamma'+1}_c \le z^\alpha_c$, then $\Omega'\Vdash x^{\gamma'+1}_c = x^{\gamma'}_c$ and $\Omega'' \Vdash x^{\gamma'} \le z^\alpha_c$. Therefore, by induction hypothesis, no $\nu L$ rule with priority $c$ is applied on the path. Note that $\Omega' \not \Vdash x^{\gamma'+1}_p(t) \le z^{\alpha}_p(t)$. 
\end{enumerate}

\item[Case] \[\infer[\& R]{x^\gamma: \omega \vdash_{\Omega'} \mathbf{case}\, Ry^\delta \ (\ell \Ra Q_{\ell}) :: (y^{\delta}:\&\{\ell:A_{\ell}\}_{\ell \in L})}{  x^\gamma: \omega \vdash_{\Omega'} Q_k :: y^{\delta}: A_k & \forall k \in L  } \]
  None of the conditions in the conclusion are different from the premise. Therefore, by the induction hypothesis, statements (a)-(c) hold.

\item[Case] \[ \infer[\& L]{x^\gamma: \&\{\ell:A_{\ell}\}_{\ell \in L} \vdash_{\Omega'} Lx^\gamma.k; Q:: (y^{\delta}:C')}{  x^{\gamma}: A_k \vdash_{\Omega'} Q :: (y^{\delta}:C') }\]
By definition of $\mathtt{c}(x)$, we have $\mathtt{c}(A_k) \subseteq \mathtt{c}(\&\{\ell:A_{\ell}\}_{\ell \in L})$. Therefore, statements (a)-(c) follow from the induction hypothesis.

\item[Cases] The statements are trivially true if the last step of the proof is either $1R$ or $\msc{Id}$ rules. \qedhere
\end{description}
\end{proof}

\begin{lemma}[corresponding to Lemma~\ref{dordtrace}]\label{lem:dordtrace-app}
Consider a path $\mathbb{P}$ in the (infinite) typing derivation of a program $\mathcal{Q}=\langle V, S\rangle$ defined on a Signature $\Sigma$,
\[{\infer{\bar{z}^\alpha: \omega \vdash_{\Omega} P ::(w^{\beta}: C)}{ \infer{\vdots}{{\bar{x}^\gamma: \omega' \vdash_{\Omega'} P' :: ( y^{\delta}:C') }}}  }\]
with $n$ the maximum priority in $\Sigma$.
\begin{enumerate}
    \item[(a)] For every $i \in \mathtt{c}(\omega')$ with $\epsilon(i)=\nu$, if $\Omega' \Vdash y^{\delta}_i \le w^{\beta}_i$, then $y=w$ and $i \in \mathtt{c}(\omega)$.
    \item[(b)] If $\Omega' \Vdash y^{\delta}_i< w_i^{\beta}$, then $i \in \mathtt{c}(\omega) $ and a $\nu L$ rule with priority $i$ is applied on $\mathbb{P}$ .
    \item[(c)] For every $c\le n$ with $\epsilon(c)=\mu$, if $\Omega'\Vdash y^{\delta}_c \le w^{\beta}_c$, then no $\mu R$ rule with priority $c$ is applied on $\mathbb{P}$ .
\end{enumerate}
\end{lemma}
\begin{proof}
Dual to the proof of Lemma~\ref{lem:ordtrace-app}.
\end{proof}

\begin{lemma}\label{basics}
Consider a path $\mathbb{P}$ on a program  $\mathcal{Q}=\langle V, S\rangle$ defined on a Signature $\Sigma$, with $n$ the maximum priority in $\Sigma$.
\[{\infer{\bar{z}^\alpha: \omega \vdash_{\Omega} P ::(w^{\beta}: C)}{ \infer{\vdots}{{\bar{x}^\gamma: \omega' \vdash_{\Omega'} P' :: ( y^{\delta}:C') }}}  }\]
$\Omega'$ preserves the (in)equalities in $\Omega$. In other words,
for channels $u,v$, generations $\eta, \eta' \in \mathbb{N}$ and type priorities $i, j\le n$,
\begin{enumerate}
    \item[(a)]  If $\Omega \Vdash u_i^\eta < v_j^{\eta'}$, then $\Omega' \Vdash u_i^\eta <v_j^{\eta'}$.
    \item[(b)] If $\Omega \Vdash u_i^\eta \le v_j^{\eta'}$, then $\Omega' \Vdash u_i^\eta \le v_j^{\eta'}$.
    \item[(c)] If $\Omega \Vdash u_i^\eta = v_j^{\eta'}$, then $\Omega' \Vdash u_i^\eta = v_j^{\eta'}$.
\end{enumerate}
\end{lemma}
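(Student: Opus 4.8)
The plan is to prove the statement by induction on the length (equivalently, the structure) of the path $\mathbb{P}$, reducing the whole path to the analysis of a single rule application. The key structural observation, which I would establish first, is that every rule in Figure~\ref{fig:stp-order}, read from its conclusion down to any of its premises, leaves the set of atomic ordering constraints unchanged or only augments it; that is, if $\Omega_0$ is the constraint set at the conclusion of a single rule and $\Omega_1$ the constraint set at one of its premises, then $\Omega_0 \subseteq \Omega_1$ as sets of atomic relations. For $\msc{Id}$, $\oplus R$, $\oplus L$, $\& R$, $\& L$, $1R$, $1L$, and $\msc{Def}$ this is immediate, since these rules carry $\Omega$ unchanged into their premises. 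For $\msc{Cut}$, $\mu R$, $\mu L$, $\nu R$, and $\nu L$ the premise set is literally of the form $\Omega_0 \cup \Delta$ for a set $\Delta$ of freshly generated equalities and strict inequalities (for instance $\Omega' = \Omega \cup \{x^{\alpha+1}_{p(t)} < x^\alpha_{p(t)}\} \cup \{x^{\alpha+1}_{p(s)} = x^\alpha_{p(s)} \mid p(s) \neq p(t)\}$ in the $\mu L$ rule), so the inclusion again holds by construction.

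Given this, I would invoke monotonicity of closure. The relations $\le_\Omega$, $<_\Omega$, and $=_\Omega$ underlying the entailment $\Omega \Vdash \cdots$ are all obtained by closing the atomic constraints recorded in $\Omega$ under the appropriate structural rules (reflexivity and transitivity, together with symmetry for $=$). These closure operators are monotone with respect to set inclusion, so $\Omega_0 \subseteq \Omega_1$ immediately yields $\le_{\Omega_0} \subseteq \le_{\Omega_1}$, $<_{\Omega_0} \subseteq <_{\Omega_1}$, and $=_{\Omega_0} \subseteq =_{\Omega_1}$. Concretely, any derivation witnessing $\Omega_0 \Vdash u^\eta_i \mathrel{R} v^{\eta'}_j$ for $R \in \{<, \le, =\}$ is a chain of atomic constraints of $\Omega_0$, each of which is still present in $\Omega_1$, so the same chain witnesses $\Omega_1 \Vdash u^\eta_i \mathrel{R} v^{\eta'}_j$. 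This settles the single-step case for all three parts (a)--(c) simultaneously.

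Finally, I would extend from one step to the whole path by induction on the number of rule applications in $\mathbb{P}$: the base case is the empty path, where $\Omega = \Omega'$ and there is nothing to prove, and the inductive step composes the single-step inclusion just established with the induction hypothesis, using transitivity of $\subseteq$ (equivalently, transitivity of the three entailment relations).

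I do not expect a genuine obstacle here, since this lemma asks only for \emph{preservation} of already-derivable relations rather than any statement about which \emph{new} relations are forced. The only point requiring care is to confirm, case by case, that the freshly introduced channel generations ($x^{\alpha+1}$, $y^{\beta+1}$, and the cut channel $w^0$) occur solely inside the added set $\Delta$ and never rewrite or delete an existing constraint of $\Omega_0$; once this is verified, the inclusion $\Omega_0 \subseteq \Omega_1$ is literal and the remainder is routine monotonicity. By contrast, the freshness hypotheses on generations, which guarantee that these new constraints do not inadvertently relate pre-existing channels, are exactly what the more delicate arguments of Lemmas~\ref{ordtrace} and~\ref{dordtrace} exploit, and they play no role in the present lemma.
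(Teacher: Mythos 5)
Your proposal is correct and takes essentially the same route as the paper: induction on the structure of the path with a case analysis on the topmost rule, observing that every rule read from conclusion to premise only augments the constraint set, so the claim follows by monotonicity of the reflexive--transitive closure. The only cosmetic difference is that the paper's displayed case ($\mu R$) additionally appeals to freshness of the newly introduced generation, which---as you correctly note---is not actually needed for mere preservation of already-derivable (in)equalities, only for the conservativity-style arguments of Lemmas~\ref{ordtrace} and~\ref{dordtrace}.
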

\begin{proof}
Proof is by induction on the structure of $\mathbb{P}$. We consider each case for topmost step in $\mathbb{P}$. Here, we only give one non-trivial case. The proof of other cases is similar.
\begin{description}
    \item[Case] \[ \infer[\mu R]{ x^\alpha:  \omega' \vdash_{\Omega''} Ry^{\delta'}.\mu_t; P :: (y^{\delta'}:t)}{x^\alpha:  \omega' \vdash_{\Omega'} P :: (y^{\delta'+1}:C') & t=_{\mu}C' & \Omega'= \Omega'' \cup \{(y^{\delta'})_{p(s)} =(y^{\delta'+1})_{p(s)} \mid p(s)\neq p(t)\}} \]
    \begin{itemize}
        \item[(a)] If $\Omega \Vdash u_i^\eta < v_j^{\eta'}$, then by the inductive hypothesis, $\Omega'' \Vdash u_i^\eta < v_j^{\eta'}$. By freshness of channels and their generations, we know that $y^{\delta'+1}$ does not occur in any (in)equalities in $\Omega''$ and thus $y^{\delta'+1} \neq u^\eta, v^{\eta'}$. Therefore $\Omega' \Vdash u_i^\eta < v_j^{\eta'}.$
    \end{itemize}
   Following the same reasoning, we can prove statements (b) and (c). \qedhere
   
\end{description}
\end{proof}
\begin{lemma}\label{lem:coinduction}
Consider a finitary derivation (Figure~\ref{fig:validity}) for \[\langle\bar{u}, X, v\rangle; \bar{x}^\alpha: \omega \vdash_{\Omega,\subset} P ::(y^\beta:C),\] on a locally valid program $\mathcal{Q}= \langle V, S \rangle$ defined on signature $\Sigma$ and order $\subset$. Consider a process variable $X'$ and a list $list(\bar{u}',v')$ formed using channels $\bar{u}'$ and $v'$ such that 
either $X,list(\bar{u},v) \mathrel{(\subset, <_{\Omega})} X',list(\bar{u}',v'),$ or $X=X'$ and $list(\bar{u},v)=list(\bar{u'},v')$.
There is a (potentially infinite) derivation $\mathbb{D}$ for 
\[\bar{x}^\alpha: \omega \vdash_{\Omega} P ::(y^\beta:C),\]
based in the infinitary rule system of Figure~\ref{fig:stp-order}.

Moreover, for every $\bar{w}^{\gamma}:\omega' \vdash_{\Omega'} z^{\delta} \leftarrow Y \leftarrow \bar{w}^{\gamma} :: (z^{\delta}:C')$ in $\mathbb{D}$, we have  \[\star \; Y,list(\bar{w}^{\gamma},z^{\delta}) \mathrel{(\subset, <_{\Omega'})} X',list(\bar{u'},v').\]
\end{lemma}
\begin{proof}
The proof is by coinduction, producing the derivation of $\bar{x}^\alpha: \omega \vdash_{\Omega} P ::(y^\beta:C)$ one sequent at a time. Moreover, for every produced sequent, we make sure that, if applicable, the $\star$ property holds. The correctness of the coinduction proof follows from the fact that $\star$ is a property about each finite portion of the infinite derivation and is checked after the portion is entirely produced. Lemma~\ref{algtoinf} is a special case of this lemma, in which $X',list(\bar{u}',v')$ is instantiated as $X,list(\bar{u},v)$. Here, we generalize the coinduction hypothesis of Lemma~\ref{algtoinf} for any $X',list(\bar{u}',v')$ to make the proof work.

We proceed by case analysis of the first rule applied on 
$\langle \bar{u}, X, v\rangle; \bar{x}^\alpha: \omega \vdash_{\Omega, \subset} P ::(y^\beta:C),$
in its finite derivation. 
\begin{description}
    \item[Case]  \[\infer[\msc{Call}(Y)]{\langle \bar{u}^\gamma, X , v^\delta \rangle; \bar{x}^{\alpha}:\omega \vdash_{\Omega, \subset} y^{\beta} \leftarrow Y \leftarrow \bar{x}^{\alpha} ::(y^{\beta}:C)}{Y, list(\bar{x}^\alpha, y^\beta) \mathrel{(\subset, <_{\Omega})} X, list(\bar{u}^\gamma, v^\delta) & \bar{x}:\omega \vdash Y=P'_{\bar{x}, y} :: (y:C) \in V }\]
    By validity of the program there is a finitary derivation for \[\langle\bar{x}^0, Y, y^0\rangle;\bar{x}^0: \omega \vdash_{\emptyset, \subset} P'_{\bar{x}^0, y^0} ::(y^0:C).\]
    
    Having Proposition~\ref{subuv} and freshness of future generations of channels in $\Omega$, there is also a finitary derivation for 
     \[\langle\bar{x}^{\alpha}, Y, y^{\beta}\rangle;\bar{x}^{\alpha}: \omega \vdash_{\Omega, \subset} P'_{\bar{x}^{\alpha}, y^{\beta}} ::(y^{\beta}:C).\]
    By transitivity of $(\subset, <_{\Omega}),$ we get \[Y, list(\bar{x}^\alpha, y^\beta) \mathrel{(\subset, <_{\Omega})}  X', list(\bar{u}', v').\]
    We apply the coinductive hypothesis to get an infinitary derivation $\mathbb{D'}$ for
     \[\bar{x}^{\alpha}: \omega \vdash_{\Omega, \subset} P'_{\bar{x}^{\alpha}, y^{\beta}} ::(y^{\beta}:C),\]
    and then produce the last step of derivation
    \[\infer[\msc{Def}(Y)]{ \bar{x}^{\alpha}:\omega \vdash_{\Omega} y^{\beta} \leftarrow Y \leftarrow \bar{x}^{\alpha} ::(y^{\beta}:C)}{\deduce{\bar{x}^{\alpha}:\omega \vdash_{\Omega} P'_{\bar{x}^{\alpha}, y^{\beta}} ::(y^{\beta}:C)}{\mathbb{D'}} & \bar{x}:\omega \vdash Y=P'_{\bar{x}, y} :: (y:C) \in V }\]
    in the infinitary rule system.
    
    Moreover, by the coinductive hypothesis, we know that for every \[\bar{w}^{\gamma'}:\omega' \vdash_{\Omega'} z^{\delta'} \leftarrow W \leftarrow w^{\gamma'} :: (z^{\delta'}:C')\] on $\mathbb{D'}$, we have \[W,list(w^{\gamma'},z^{\delta'}) \mathrel{(\subset, <_{\Omega'})} X',list(\bar{u}',v').\] 
   
    This completes the proof of this case as we already know  $Y, list(\bar{x}^\alpha, y^\beta) \mathrel{(\subset, <_{\Omega})} X', list(\bar{u}', v')$.
    \item[Case] {\small\[\infer[\msc{Cut}^{z}]{\langle \bar{u}^\gamma, X , v^\delta \rangle; \bar{x}^\alpha: \omega \vdash_{\Omega, \subset}  (z \leftarrow Q_z ; Q'_z) :: (y^{\beta}:C)}{ \langle \bar{u}^\gamma, X , v^\delta \rangle; \bar{x}^\alpha: \omega \vdash_{\Omega \cup \mathtt{r}(y^\beta), \subset} Q_{z^0} :: (z^0: C') & \langle \bar{u}^\gamma, X , v^\delta \rangle;z^0: C' \vdash_{\Omega\cup \mathtt{r}(x^\alpha), \subset} Q'_{z^0} ::(y^{\beta}: C)},\]}
     where $\mathtt{r}(w)= \{z^{0}_{j}=w_{j}\mid j \not \in \mathtt{c}(A)\, \mbox{and}\, j \le n\}.$
     
    By Lemma~\ref{basics}, we conclude from
    $X, list(\bar{u}^\gamma, v^\delta) \mathrel{(\subset, <_{\Omega})} X', list(\bar{u}', v')$ that
    \[X, list(\bar{u}^\gamma, v^\delta) \mathrel{(\subset, <_{\Omega \cup \mathtt{r}(y^\beta)})} X', list(\bar{u}', v'),\] and \[X, list(\bar{u}^\gamma, v^\delta) \mathrel{(\subset, <_{\Omega \cup \mathtt{r}(x^\alpha)})} X', list(\bar{u}', v').\]

    By coinductive hypothesis, we have infinitary derivations $\mathbb{D'}$ and $\mathbb{D''}$ for $\bar{x}^\alpha: \omega \vdash_{\Omega \cup \mathtt{r}(y^\beta)} Q_{z^0} :: (z^0: C')$
    and 
    $z^0: C' \vdash_{\Omega\cup \mathtt{r}(x^\alpha)} Q'_{z^0} ::(y^{\beta}: C),$ respectively. We can produce the last step of the derivation as
     \[\infer[\msc{Cut}^{z}]{ \bar{x}^\alpha: \omega \vdash_{\Omega}  (z \leftarrow Q_z ; Q'_z) :: (y^{\beta}:C)}{ \deduce{ \bar{x}^\alpha: \omega \vdash_{\Omega \cup \mathtt{r}(y^\beta)} Q_{z^0} :: (z^0: C')}{\mathbb{D'}} & \deduce{z^0: C' \vdash_{\Omega\cup \mathtt{r}(x^\alpha)} Q'_{z^0} ::(y^{\beta}: C)}{\mathbb{D''}}}\]
     Moreover, by the coinductive hypothesis, we know that for every \[\bar{w}^{\gamma}:\omega' \vdash_{\Omega'} z^{\delta} \leftarrow W \leftarrow \bar{w}^{\gamma} :: (z^{\delta}:C')\] on $\mathbb{D'}$ and $\mathbb{D''}$, and thus $\mathbb{D}$, we have  \[W,list(\bar{w}^{\gamma},z^{\delta}) \mathrel{(\subset, <_{\Omega'})} X',list(\bar{u}',v').\]
     \item[Cases] The proof of the other cases are similar by applying the coinductive hypothesis and the infinitary system rules. \qedhere
\end{description}
\end{proof}
\end{document}